\pgfplotsset{compat=1.11}
\definecolor{darkblue}{RGB}{0,0,127} 
\definecolor{darkgreen}{RGB}{0,150,0}
\newtheorem{theorem}{Theorem}
\newtheorem{result}[theorem]{Result}
\newtheorem{lemma}[theorem]{Lemma}
\newtheorem{corollary}[theorem]{Corollary}
\newtheorem{definition}[theorem]{Definition}
\renewenvironment{proof}[1][Proof]{\noindent\textbf{#1.} }{\ $\Box$}
\def\R{\mathbb{R}}
\def\C{\mathbb{C}}
\def\e{\mathrm{e}}
\newcommand{\Eref}[1]{Eq.~(\ref{#1})}
\newcommand{\Sref}[1]{Sec.~\ref{#1}}
\newcommand{\Fref}[1]{Fig.~\ref{#1}}
\newcommand{\Thref}[1]{Thm.~\ref{#1}}
\newcommand{\Lref}[1]{Lemma~\ref{#1}}
\newcommand{\Cref}[1]{Corollary~\ref{#1}}
\newcommand{\Dref}[1]{Definition~\ref{#1}}
\def\th{^{\rm th}}
\def\nd{^{\rm nd}}
\DeclareMathOperator{\Tr}{Tr}
\newcommand{\ket}[1]{|{#1}\rangle}
\newcommand{\expect}[1]{\left\langle{#1}\right\rangle}
\newcommand{\ketbra}[2]{|{#1}\rangle\!\langle{#2}|}
\newcommand{\mbf}[1]{\mathbf{#1}}
\newcommand{\abs}[1]{\left|{#1}\right|}
\newcommand{\sub}[1]{{\!\mathsmaller{#1}}}
\begin{document}

\title{Topological quantum error correction in the Kitaev honeycomb model}

\author{Yi-Chan Lee}
\affiliation{Centre for Engineered Quantum Systems, School of Physics, University of Sydney, Sydney, NSW 2006, Australia}
\affiliation{Centre for Quantum Software and Information,
Faculty of Engineering and Information Technology,
University of Technology Sydney, NSW 2007, Australia}
\affiliation{Physics Department, National Tsing-Hua University, Hsinchu 300, Taiwan}
\author{Courtney G. Brell}
\affiliation{Institut f\"{u}r Theoretische Physik, Leibniz Universit\"{a}t Hannover, Appelstra\ss{}e 2, 30167 Hannover, Germany}
\affiliation{Perimeter Institute for Theoretical Physics, Waterloo, ON, Canada}
\author{Steven T. Flammia}
\affiliation{Centre for Engineered Quantum Systems, School of Physics, University of Sydney, Sydney, NSW 2006, Australia}
\affiliation{Center for Theoretical Physics, Massachusetts Institute of Technology, Cambridge, MA, USA}

\date{August 31, 2017}

\begin{abstract}
The Kitaev honeycomb model is an approximate topological quantum error correcting code in the same phase as the toric code, but requiring only a 2-body Hamiltonian. 
As a frustrated spin model, it is well outside the commuting models of topological quantum codes that are typically studied, but its exact solubility makes it more amenable to analysis of effects arising in this noncommutative setting than a generic topologically ordered Hamiltonian. 
Here we study quantum error correction in the honeycomb model using both analytic and numerical techniques. 
We first prove explicit exponential bounds on the approximate degeneracy, local indistinguishability, and correctability of the code space. 
These bounds are tighter than can be achieved using known general properties of topological phases. 
Our proofs are specialized to the honeycomb model, but some of the methods may nonetheless be of broader interest. 
Following this, we numerically study noise caused by thermalization processes in the perturbative regime close to the toric code renormalization group fixed point. 
The appearance of non-topological excitations in this setting has no significant effect on the error correction properties of the honeycomb model in the regimes we study. 
Although the behavior of this model is found to be qualitatively similar to that of the standard toric code in most regimes, we find numerical evidence of an interesting effect in the low-temperature, finite-size regime where a preferred lattice direction emerges and anyon diffusion is geometrically constrained. We expect this effect to yield an improvement in the scaling of the lifetime with system size as compared to the standard toric code.
\end{abstract}

\maketitle


\section{Introduction}

Topologically ordered systems are a promising avenue to implement quantum information storage in a physical system~\cite{Kitaev2003, Dennis2002}. 
These systems have degenerate ground spaces in which information can be encoded such that it is robust to local perturbations~\cite{Bravyi2010, Bravyi2011stability, Michalakis2013}. 
These kinds of systems can also be used to perform topologically protected quantum computation~\cite{Nayak2008}. 
The prototypical example of a topological spin model is the toric code~\cite{Kitaev2003}. 
This model involves local interactions in 2D, and gives rise to abelian anyonic quasiparticles. 
The toric code has many desirable features, such as exact degeneracy of the Hamitonian eigenspaces for all system sizes, harmonic spectrum, flat dispersion relation, and ground states that are renormalization group (RG) fixed points~\cite{Levin2005,Aguado2008a}. 
This wealth of special structure makes the analysis~\cite{Dennis2002, Alicki2009, Viyuela2012, Freeman2014, Jouzdani2014} and simulation~\cite{Brown2016, darmawan2016tensor} of noise and error correction protocols tractable, at least for certain types of noise. 
However, the toric code is experimentally challenging to implement due to the fact that it involves 4-body interactions.
Additionally, the special structure that enables such detailed analysis might obscure some features of error correction and noise that may be present in more general models.

The Kitaev honeycomb model~\cite{Kitaev2006} is a topologically ordered system involving only 2-body interactions, and is considered an experimentally plausible model to implement in many physical systems~\cite{Duan2003, Micheli2006,Zhang2007,Aguado2008creation,Jiang2008, You2010}. 
For suitable parameter regimes, the honeycomb model is in the same quantum phase as the toric code in the sense that the two are adiabatically connected, and this guarantees that the ground space properties and quasiparticle excitations of the two models are closely related~\cite{Kitaev2006, Hastings2005, Hastings2007, Dusuel2008, Schmidt2008, Vidal2008}. 
In other parameter regimes, the honeycomb model is also a promising avenue for the realization of a non-abelian topological phase, the Ising anyons, which has been proposed as a candidate for topological quantum computing~\cite{Bravyi2006, Freedman2006}. 
Moreover, while the special properties such as the exact degeneracy, harmonic spectrum, and RG fixedness mentioned above are all lost in the honeycomb model, it is still exactly solvable~\cite{Kitaev2006, Feng2007, Chen2008, Kells2009, Pedrocchi2011, Schmoll2017}. 
For these reasons it is desirable to understand the performance of the honeycomb model as a topological quantum memory. 

Here we study quantum error correction in the honeycomb model using a combination of analytic and numerical methods. 
Our main results are summarized in the next subsection. 
When considering the honeycomb model rather than the toric code, the behavior as a quantum memory will be affected by several distinct effects. 
Firstly, though the ground spaces of the honeycomb model and the toric code where the quantum information is stored are related, they are not identical, which will lead to different responses to local operations or perturbations. 
Additionally, on finite system sizes, the code space of the honeycomb model is not precisely degenerate~\cite{Kells2008, Kells2009a}, which will result in dephasing effects between encoded states. Another significant source of distinction between the two systems is that we expect them to be subject to different noise channels in the lab. 
Topological quantum memories are typically assumed to be exposed to some kind of thermal noise source, and so the details of the noise processes will depend heavily on the structure of the spectrum of the Hamiltonian.

We expect the phenomena seen in the honeycomb model to be representative of more general topological phases away from the finely tuned models typically studied. 
While many properties of topological phases are universal, the details and performance of error correction protocols can be expected to vary as we move around a given topological phase. 
Existing studies of topological error correction~\cite{Dennis2002, Duclos-Cianci2010, Duclos-Cianci2010a,Wang2010a, Wang2010, Duclos-Cianci2013, Bravyi2011haah, Bombin2012, Wootton2012, Anwar2014, Watson2014, Hutter2014a, Bravyi2014, Wootton2015, Fowler2015, Andrist2015,Brell2013, Wootton2013, DuclosCianci2013, Hutter2014, Hutter2015continuous, Wootton2015b, Brell2017, herold2015fault, herold2015cellular, dauphinais2016fault, torlai2016neural, darmawan2016tensor} are almost exclusively focused on particular representatives of a given phase, usually the RG fixed point, where in 2D errors have a natural interpretation as precisely localized quasi-particles that can be analyzed using the tools of topological quantum field theory. 
Here we analyze some of the relevant phenomenology associated with being at a point in a topologically ordered phase that is \emph{not} an RG fixed point, and in particular the effect of these phenomena on quantum information storage protocols. 
The advantage of studying such phenomena within the honeycomb model is that it retains some tractability that will be lost in the general case, and so at least some additional insights can be gained without appealing to challenging simulations of general quantum many-body systems. 

\subsection*{Summary of results}

Our study of quantum error correction in the honeycomb model divides the novel phenomenology into two broad classes: coherent and incoherent effects. We treat these separately using analytic methods for the coherent errors in \Sref{s:coherent} and numerical methods for the incoherent errors in \Sref{s:incoherent}. 

Coherent effects are those related to the approximate degeneracy of the ground space and the approximate local indistinguishability of different code states. We will provide explicit exponential bounds on these approximations. 
The main theorems can be summarized as follows; see \Sref{s:coherent} for more precise and detailed formulations. 
Our first result is a bound on the energy splitting among the four nearly degenerate ground states on the honeycomb model with linear size $N$ (i.e.~$2 N^2$ spins total) with toroidal boundary conditions.

\begin{result}[Approximate degeneracy, \Thref{Th:qsdegeneracy}]
	The energy splitting $\Delta E$ between any pair of ground states in the four-fold ground space of the honeycomb model is bounded by
	\begin{align}
		\Delta E \leq c_1 N^2 \mathrm{e}^{-aN} \,,
	\end{align}
	where $N$ is the linear size of the lattice, and $c_1, a > 0$ are constants determined by the coupling coefficients of the honeycomb Hamiltonian.
\end{result}

Thus, the dephasing rate between superpositions of ground states is exponentially small in the system size. It is also worth noting that the constant $a$ appearing here and elsewhere is strictly positive in the entire gapped phase of the honeycomb model. Our next result bounds the difference in expectation values of operators supported on sufficiently small regions.

\begin{result}[Approximate local indistinguishability, \Thref{T:arbitrary_operator}]
	For an arbitrary operator $O_A$ located within any simply connected local region $A$ of size $|A|\leq c_2 N/\log N$, the difference in expectation value between any pair of ground states $\ket{\psi_1}$ and $\ket{\psi_2}$ of the honeycomb model is bounded as 
	\begin{align}
		\abs{\expect{O_\sub{A}}_{\psi_1}-\expect{O_\sub{A}}_{\psi_2}}\leq c_3\, \mathrm{e}^{-a N/4} \|O_\sub{A}\| \,,
	\end{align}
	for constants $c_2, c_3, a >0$ determined by the coupling coefficients.
\end{result}

Intuitively, this suggests that the honeycomb model ground space is a quantum code capable of correcting errors in these regions $A$. 
This is indeed the case, and our next result allows us to bound the error $\epsilon(N)$ of a recovery map applied to the code space where the error is measured in the Bures distance; see Definition~\ref{d:epscorrect}.

\begin{result}[Approximate correctability, \Thref{T:localcorrect}]
	For any noise channel supported in any simply connected region $A$ with size $|A|\leq c_4 N/\log N$, the code space is $\epsilon(N)$-correctable in the Bures distance with
	\begin{align}
		\epsilon(N) \le  c_5\, \e^{-aN/16} \,,
	\end{align}
	for constants $c_4, c_5, a >0$ determined by the coupling coefficients.
\end{result}

We provide explicit expressions for the constant $a$ in \Eref{eq:adef} and for $c_1, \ldots, c_5$ in the formal statements of these theorems, \Thref{Th:qsdegeneracy}, \Thref{T:arbitrary_operator}, and \Thref{T:localcorrect} respectively.

The second class of phenomena we study arises from incoherent effects related to the interaction of the system with a thermalizing environment. 
In \Sref{s:incoherent} we employ numerical tools to study these effects and to evaluate the performance of a standard error correction protocol in their presence, assuming ideal syndrome measurements. 
Our results can be summarized as follows:
\begin{itemize}
	\item In the regimes studied, the appearance of non-topological excitations does not qualitatively affect the error correction properties.
	\item Asymptotically, the memory lifetime scales exponentially in inverse temperature, as expected~\cite{Brown2016}.
	\item In the low-temperature finite-size regime, where the diffusion of a single pair of topological excitations is the dominant failure mode, we anticipate an improved scaling of the memory lifetime with system size as compared to the toric code.
\end{itemize}
This final effect is a result of the structure of the honeycomb Hamiltonian energetically suppressing some local errors more than others. In the low-temperature, finite-size, perturbative regime we probe, with strictly local couplings to the thermal bath, this leads the topological excitations to be constrained to 1D motion instead of the 2D motion of the standard toric code~\cite{Freeman2014, Brown2016}. 
Though this effect relies on the special structure of this regime, it may be exploitable in suitable experimental implementations of a topological quantum memory.

The remainder of the paper is organized as follows. 
We review the honeycomb model and establish notation, terminology, and some known results in \Sref{s:honeycombModel}. 
Then \Sref{s:coherent} reviews known results from the general theory of topological phases before proving the three Results listed above. 
In \Sref{s:incoherent} we discuss our numerical simulation model for quantum error correction and the results of our simulations. 
We conclude in \Sref{s:discussion}.

\section{The honeycomb model}\label{s:honeycombModel}

The Kitaev honeycomb model is defined with qubits residing on the vertices of a honeycomb lattice, as shown in \Fref{f:honeycomb}. In our treatment, the vertical edges will be preferred, and it will be convenient to consider the (square) lattice $\Lambda$ formed by placing a site at each vertical edge. These vertical edges will often be referred to as ``dimers'', motivated by the perturbative treatment of this model sketched below~\cite{Dusuel2008, Schmidt2008, Vidal2008}. At each site $\mbf{q}=(q_x,q_y)$ of $\Lambda$, there are two qubits, corresponding to the lower and upper ends of the dimer (denoted $\bullet$ and $\circ$ respectively). We also choose the boundary conditions of $\Lambda$ to be periodic, so that the edges are identified as shown in \Fref{f:honeycomb}, and denote its linear dimensions by $N$ (taken to be even for convenience). The Hamiltonian of the model is given by
	\begin{align}\label{E:KHM}
	H = - \sum_{\mbf{q}} (J_x K^x_\mbf{q}
						+ J_y K^y_\mbf{q}
						+ J_z K^z_\mbf{q}) 
	\end{align}
	for some coupling constants $J_\alpha>0\;$ ($\alpha\in\{x,y,z\}$), and with
	\begin{align}
		K^x_\mbf{q} &= \sigma^x_{\mbf{q},\circ}\sigma^x_{\mbf{q}+\mbf{n}_x,\bullet}	\,,\\
		K^y_\mbf{q} &= \sigma^y_{\mbf{q},\circ}\sigma^y_{\mbf{q}+\mbf{n}_y,\bullet}	\,,\\
		K^z_\mbf{q} &= \sigma^z_{\mbf{q},\circ}\sigma^z_{\mbf{q},\bullet}	\,,
	\end{align}
	where $\sigma^{\alpha}_i$ is the relevant Pauli operator at site $i$.
	We will call the $K^{\alpha}_{\mbf{q}}$ link \emph{generators} to distinguish them from the more general class of link \emph{operators} that they generate under multiplication.
	
	\begin{figure}
		\centering
		\includegraphics{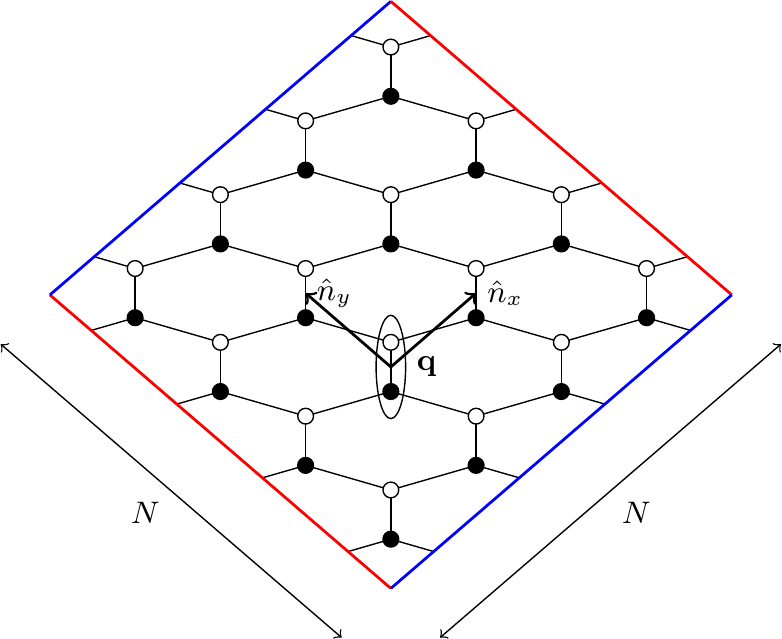}
		\caption{The vertical links of the honeycomb lattice form a square lattice with translation vectors as shown. We consider an $N\times N$ periodic square lattice such that blue and red boundaries are identified in the diagram. Each site of this square lattice contains two honeycomb vertices, and we denote the upper by $\circ$ and the lower by $\bullet$. Hence there are a total of $2 N^2$ spins in the lattice.}
		\label{f:honeycomb}
	\end{figure}	
	
	\subsection{Symmetries}\label{s:sym}
	
	Though the terms in this Hamiltonian do not commute with one another, constants of motion can be constructed as products of these terms. In particular, for each site $\mbf{q}$ of $\Lambda$ we can define an operator
\begin{align}
		W_{\mbf{q}} &= K^y_\mbf{q} K^z_{\mbf{q}+\mbf{n}_y} K^x_{\mbf{q}+\mbf{n}_y} K^y_{\mbf{q}+\mbf{n}_x} K^z_{\mbf{q}+\mbf{n}_x} K^x_\mbf{q}\\
					&= \sigma^z_{\mbf{q},\circ} \sigma^x_{\mbf{q}+\mbf{n}_y,\bullet} \sigma^y_{\mbf{q}+\mbf{n}_y,\circ} \sigma^z_{\mbf{q}+\mbf{n}_x+\mbf{n}_y,\bullet} \sigma^x_{\mbf{q}+\mbf{n}_x,\circ} \sigma^y_{\mbf{q}+\mbf{n}_x,\bullet}
\end{align}
such that $[H,W_{\mbf{q}}]=0$ for all $\mbf{q}$ and $[W_{\mbf{q}},W_{\mbf{q'}}]=0$ for all $\mbf{q},\mbf{q'}$. These $W_{\mathbf{q}}$ are naturally associated to the hexagonal plaquette of the honeycomb lattice above site $\mathbf{q}$. 
Two other privileged operators can also be defined as
	\begin{align}
		L_x &= \prod_{q_x} K^z_{q_x,0} K^x_{q_x,0} = -\prod_{q_x}\sigma^y_{(q_x,0),\bullet} \sigma^y_{(q_x,0),\circ}\,,\\
		L_y &= \prod_{q_y} K^z_{0,q_y} K^y_{0,q_y} = -\prod_{q_y}\sigma^x_{(0,q_y),\bullet} \sigma^x_{(0,q_y),\circ}\,.
	\end{align}

	Again, it can easily be verified that these operators commute with $H$, with each other, and with all $W_{\mbf{q}}$. The honeycomb model can be considered as a subsystem code with gauge generators identified with terms in the Hamiltonian (i.e.~the link generators), and in this setting the stabilizer group of the code is generated by the $W_{\mbf{q}}$, $L_x$, and $L_y$, while the logical algebra is trivial~\cite{Suchara2011}. However, this is not the interpretation we will use of this model as a quantum code.
		
	The honeycomb model can be in several phases for different regimes of the parameters $J_x, J_y, J_z$. 
In this work, we will consider the regime
\begin{align}
\label{eq:couplingconstants}
	J_z > J_x \geq J_y > 0 \quad \text{and} \quad J_z > J_x + J_y \,,
\end{align}
where the model is in the same phase as the toric code~\cite{Kitaev2006}. 
We will often refer to this as the ``abelian'' or ``gapped'' phase in order to distinguish it from the qualitatively different gapless non-abelian phase that can be found in a different parameter regime of the model. For choices of parameters that lead to a gap, the conditions (\ref{eq:couplingconstants}) can be imposed without loss of generality. 
In the abelian phase, there is a 4-dimensional ground space that is separated from other Hamiltonian eigenspaces by a constant gap independent of the system size. 
As $N$ diverges, this ground space will converge to an exact degeneracy of its energy eigenvalues. 
A basis for this space can be labeled by $\pm1$ eigenvalues of the $L_x$ and $L_y$ operators, and all such states will be $+1$ eigenstates of all $W_{\mbf{q}}$ operators. 
When considering the honeycomb model in this phase as a quantum memory, we will refer to the four-fold ground space as the code space $\mathcal{C}_{\mathrm{khm}}$, and as such the $L_x$ and $L_y$ operators will be logical operators for this code. 
Eigenstates of $L_x$ and $L_y$ within the code space will be called code basis states, and be labeled by $\ell = (l_x, l_y)$ for $l_x, l_y = \pm 1$ the eigenvalues of $L_x$ and $L_y$ respectively.
	
	\subsection{Exact ground states}\label{s:gs}
		
		While the honeycomb model was originally solved by Majorana fermionization~\cite{Kitaev2006}, it can also be solved in several other ways~\cite{Feng2007, Chen2008, Kells2009,Pedrocchi2011, Schmoll2017}. The essential ingredient for our analysis of coherent noise follows the treatment of Ref.~\cite{Kells2009}, which solves the honeycomb model without using either Majorana fermions or redundant degrees of freedom. 
The starting point of this analysis is to describe the model in terms of hard-core bosons and effective spins, followed by fermionization of the hard-core boson degrees of freedom. 
Following common terminology, we will refer to $-1$ eigenstates of a $W_\mbf{q}$ operator as having a vortex at site $\mbf{q}$, and $-1$ eigenstates of a $K^z_{\mbf{q}}$ operator as having a broken dimer at link $\mbf{q}$. 
By projecting into common eigenspaces of all $W_{\mbf{q}}$, $L_x$, and $L_y$, we can solve the model sector by sector. 
Associating Dirac fermion creation operators $c_\mbf{q}^{\dagger}$ to each site of $\Lambda$, we can represent the link generators in the vortex-free sector as
		\begin{align}	\label{E:fermionization}
			K^x_\mbf{q} &\rightarrow -(c_\mbf{q}^\dagger - c_\mbf{q})(c_{\mbf{q}+\mbf{n}_x}^\dagger + c_{\mbf{q} + \mbf{n}_x})	\,,	\notag \\
			K^y_\mbf{q} &\rightarrow -(c_\mbf{q}^\dagger - c_\mbf{q})(c_{\mbf{q}+\mbf{n}_y}^\dagger + c_{\mbf{q} + \mbf{n}_y})	\,,\\
			K^z_\mbf{q} &\rightarrow I - 2 c_\mbf{q}^\dagger c_\mbf{q}	\,.	\notag
		\end{align}
		Within this sector, a Fourier transform yields a quadratic effective Hamiltonian
		\begin{align}
			H_{\mathrm{vf}} = \sum_\mbf{k} \left[ \xi_\mbf{k}c_\mbf{k}^\dagger c_\mbf{k} + \frac{1}{2} (\Delta_\mbf{k} c_\mbf{k}^\dagger c_\mbf{-k}^\dagger + \Delta_\mbf{k}^* c_\mbf{-k} c_\mbf{k}) \right] - N^2 J_z	\,,
		\end{align}
		with $\mbf{k} = (k_x, k_y)$, and where
		\begin{align}
			c_\mbf{k} &= \frac{1}{\sqrt{N^2}}\sum_\mbf{q} c_\mbf{q} e^{-i\mbf{k}\cdot\mbf{q}}\,,	\label{E:fermionFourier}\\
			\xi_\mathbf{k} &= 2J_x\cos(k_x) + 2J_y\cos(k_y) + 2J_z\,,\label{Eq:gsxi}\\
			\Delta_\mathbf{k} &= i\left[ 2J_x\sin(k_x) + 2J_y\sin(k_y) \right]	\,.\label{Eq:gsDelta}
		\end{align}
		$H_\mathrm{vf}$ can then be diagonalized as 
		\begin{align}
			H_{\mathrm{vf}} = \sum_{\mbf{k}} E_{\mbf{k}}\left(\gamma_\mbf{k}^\dagger \gamma_{\mbf{k}} - \frac{1}{2} \right)
		\end{align}
		by the Bogoliubov-Valatin transformation
		\begin{align}	\label{E:BogoliubovTrans}
			\gamma_\mbf{k} = u_\mbf{k} c_\mbf{k} - v_\mbf{k} c_{-\mbf{k}}^\dagger	\,,
		\end{align}
		where
		\begin{align}	\label{E:Bogoliubov_u}
			u_\mbf{k} &= \sqrt{\frac{1}{2} \left(1 + \frac{\xi_\mbf{k}}{E_\mbf{k}} \right)}\,,\\
			v_\mbf{k} &= \frac{\abs{\Delta_\mbf{k}}}{\Delta_\mbf{k}} \sqrt{\frac{1}{2} \left(1 - \frac{\xi_\mbf{k}}{E_\mbf{k}} \right)}	\,. \label{E:Bogoliubov_v}
		\end{align}
		The ground states are fermionic vacuum states annihilated by $\gamma_\mbf{k}$ for all $\mbf{k}$, and have energy
		\begin{align}
			E_\ell = \sum_{\mathbf{k}} \left( -\frac{1}{2} E_{\mathbf{k}} \right)	\,,	\label{Eq:gsEnergy}
		\end{align}
		where $\ell$ denotes the corresponding eigenvalues of $L_x$ and $L_y$, and where
		\begin{align}
			E_\mathbf{k} &= \sqrt{\xi^2_\mathbf{k} + \abs{\Delta_\mathbf{k}}^2 }\,.\label{Eq:gsModeEn}
		\end{align}
		Since the code basis state for each $\ell$ is the ground state of a free-fermion Hamiltonian, we can easily calculate expectation values in these states.
		
		Under periodic boundary conditions, the energy dependence on $\ell$ can be considered to arise from the periodicity (anti-periodicity) of $c_\mbf{q}$ in \Eref{E:fermionFourier} for $l_\alpha=-1$ (+1 respectively). This can conveniently be absorbed into the allowed momenta to give~\cite{Kells2009}
		\begin{align}	\label{E:gsMomentum}
			k_\alpha &= 2\pi\frac{n_\alpha}{N} + \left(\frac{l_\alpha+1}{2}\right)\frac{\pi}{N}	
		\end{align}
		for $\alpha \in \{x, y\}$ and $0\leq n_{\alpha}<N$.
		
		Note that in the thermodynamic limit, $N \rightarrow \infty$,
		\begin{align}
			k_x \xrightarrow{N \rightarrow \infty} x \in [0, 2\pi)	\,,
		\end{align}
		and similarly for $k_y$. Since the $l_\alpha$-dependent term in \Eref{E:gsMomentum} vanishes as $N\rightarrow \infty$, the momenta $x$ and $y$ become independent of $\ell$ for all ground states. Thus the honeycomb model has an exact four-fold degeneracy in the thermodynamic limit. 
In this limit the energy density per site of $\Lambda$ (i.e.~per dimer) becomes 
		\begin{align}
			\frac{E_\ell}{N^2} \xrightarrow{N \rightarrow \infty} \frac{1}{4\pi^2} \int_0^{2\pi}\int_0^{2\pi} -\frac{1}{2} E(x, y) \mathrm{d}x \mathrm{d}y	\,.	\label{Eq:gsEnDensThermoLimit}
		\end{align}

	\subsection{In the perturbative regime}\label{s:perturb}
	
 		Though the honeycomb model can be exactly solved, for example as outlined above, in parts of our study it will be more convenient to use a perturbative approach to analyzing the model, following Ref.~\cite{Vidal2008}. In the regime where $J_z\gg J_x,J_y$, the honeycomb model can be considered as a perturbation away from the toric code. In this limit, the unperturbed Hamiltonian is taken as $H_0=-J_z\sum_{\mbf{q}}K^z_{\mbf{q}}$ and the perturbation 
		\begin{align}
			V=- \sum_{\mbf{q}} (J_x K^x_\mbf{q} + J_y K^y_\mbf{q})
		\end{align}
is such that $H=H_0+V$.
		
		Our analysis of this model makes use of the perturbative continuous unitary transformation (PCUT), which defines an effective Hamiltonian related to the original Hamiltonian by
		\begin{align}
			H_{\mathrm{eff}}=U^{\dagger}HU
		\end{align}
		such that $H_{\mathrm{eff}}$ commutes with $H_0$, and $U\to I$ as $\frac{J_x}{J_z}$ and $\frac{J_y}{J_z}$ vanish. In the thermodynamic limit, this can be calculated as~\cite{Vidal2008}
		\begin{align}\label{e:pcutH}
			H_{\mathrm{eff}} = E_0+\mu H_0-\sum_{\{\mbf{q}_1,\ldots,\mbf{q}_n\}}C_{\{\mbf{q}_1,\ldots,\mbf{q}_n\}}W_{\mbf{q}_1}\cdots W_{\mbf{q}_n}-\sum_{\{\mbf{q}'_1,\ldots,\mbf{q}'_n\}}D_{\{\mbf{q}'_1,\ldots,\mbf{q}'_n\}}R_{\{\mbf{q}'_1,\ldots,\mbf{q}'_n\}}
		\end{align}
		where $E_0$, $\mu$, $C$ and $D$ are constants that can be calculated perturbatively, and $R$ is a string operator that commutes with $K^z_\mbf{q}$ at each site except $\mbf{q}'_1$ and $\mbf{q}'_n$. The precise form for the constants and $R$ will not be crucial in our analysis; the key point is that each of the terms in \Eref{e:pcutH} is a Pauli operator. The interested reader can consult Ref.~\cite{Vidal2008} for details.
		
		Notice that \Eref{e:pcutH} commutes with $L_x$ and $L_y$, and so is exactly degenerate. For finite size systems, there are also corrections to Eq.~(\ref{e:pcutH}) that include loop operators $L_x$ and $L_y$~\cite{Kells2008, Kells2009a}. However, these appear only at order $N_x$ and $N_y$ respectively in $\frac{J_x}{J_z}$ and $\frac{J_y}{J_z}$. The effect of these corrections is to lift the degeneracy between different eigenstates of these operators. As we will show in \Sref{s:quasidegen}, these effects are exponentially small in the system size, and so we will typically neglect them in our treatment of $H_{\mathrm{eff}}$.
		
		The Hamiltonian $H_{\mathrm{eff}}$ is effective in the sense that observables acting on $H_{\mathrm{eff}}$ are related to those on the original system by the transformation $U$. This can also be computed perturbatively if desired, taking local operators to quasi-local operators. For the most part when considering the honeycomb model in the perturbative regime, we will work in the effective picture for simplicity.
		
		When restricted to the lowest energy eigenspace of $H_0$ (the no-broken-dimer sector) with corresponding projector $\Pi_0$, the effective Hamiltonian simplifies to
		\begin{align}\label{e:loweffectiveham}
			\Pi_0H_{\mathrm{eff}}\Pi_0 = E_0'-\sum_{\{\mbf{q}_1,\ldots,\mbf{q}_n\}}C_{\{\mbf{q}_1,\ldots,\mbf{q}_n\}}\Pi_0W_{\mbf{q}_1}\cdots W_{\mbf{q}_n}\Pi_0 \,.
		\end{align}		
		This can be further simplified by considering only the leading non-trivial order terms. These terms appear at fourth order, and yield the Hamiltonian
		\begin{align}
			\Pi_0H_{\mathrm{eff}}^{(4)}\Pi_0 = E_0'-\sum_{\mbf{q}}C_{\mbf{q}}\Pi_0W_{\mbf{q}}\Pi_0\,.
		\end{align}		
		Within the image of $\Pi_0$ (that is, the common $+1$ eigenspace of all $K^z_{\mbf{q}}$), we can consider a 2-dimensional effective spin at each site $\mbf{q}$ of the lattice with a computational basis given by eigenstates of the $Z_{\mbf{q},\bullet}$ operator. 
Using this identification, the Hamiltonian $\Pi_0H_{\mathrm{eff}}^{(4)}\Pi_0$ gives the toric code, with the $W_{\mbf{q}}$ playing the role of the stabilizer operators as was first noted by Kitaev~\cite{Kitaev2006}.
		
		For this reason, it will often be convenient to characterize excitations of the honeycomb model in the perturbative regime by eigenvalues of the $K^z_{\mbf{q}}$ and $W_{\mbf{q}}$ operators. Although a common eigenstate of these operators is not generally an eigenstate of either $H$ or $H_{\mathrm{eff}}$, the ground space of $H_{\mathrm{eff}}$ can be simply found as the $+1$ eigenspace of all $K^z_\mbf{q}$ and $W_\mbf{q}$ operators. 
It is thus convenient to consider the abelian stabilizer group generated by the $K^z_\mbf{q}$ and $W_\mbf{q}$, and consider $-1$ eigenvalues for any of these operators as localized quasiparticle excitations---broken dimers or vortices respectively. 
In the perturbative limit, vortices can be considered as toric code quasiparticles, while broken dimers are simply localized non-topological quasiparticles. 
		

\section{Coherent noise}\label{s:coherent}

	\subsection{Properties of topological phases}\label{s:asymptotics}
		
Systems within the same topological phase share common properties that are relevant for error correction. 
In the following subsections, we will make use of the particular algebraic structure of the honeycomb model to extract bounds on its behavior, but it is also possible to derive results based on general topological properties of the system. 
This will give some idea of the asymptotic behavior of the model, without giving at all tight bounds. 
In this subsection we will derive several bounds which motivate the more precise study succeeding it.

The strategy in this analysis will be to relate the properties of the honeycomb model to the known properties of the standard toric code, simply through their being in the same topological phase. 
A topological phase is defined as an equivalence class of (quasi-)local Hamiltonians connected by smooth, uniformly gapped paths~\cite{Chen2010}. 
In order to make this concrete, let us explicitly specify a path of Hamiltonians relating the honeycomb code space to the toric code ground space:
\begin{align}\label{e:hampath}
	H(s) &=-\sum_{\mbf{q}}\bigl(J_wW_{\mbf{q}}+sJ_xK^x_{\mbf{q}}+sJ_yK^y_{\mbf{q}}+J_zK^z_{\mbf{q}}\bigr)\,.
\end{align}
At $H(0)=-\sum_{\mbf{q}}\left(J_wW_{\mbf{q}}+J_zK^z_{\mbf{q}}\right)$ it is clear that we have an encoded version of the toric code. For our choice of boundary conditions and encoding, this has distance $d=N$~\cite{Bombin2007}. That is, it is a $[[2N^2,2,N]]$ code. It satisfies exact local indistinguishability and it has an exactly degenerate code space~\cite{Kitaev2003,Kitaev2006}. At $H(1)=-\sum_{\mbf{q}}\left(J_wW_{\mbf{q}}+J_xK^x_{\mbf{q}}+J_yK^y_{\mbf{q}}+J_zK^z_{\mbf{q}}\right)$, we have the honeycomb model Hamiltonian plus some additional $J_wW_{\mbf{q}}$ terms. Since these terms commute with the honeycomb Hamiltonian, and it is known that the ground space of the model is in the common $+1$ eigenspace of all $W_{\mbf{q}}$, it is easy to see that $H(1)$ has the same code space as the honeycomb Hamiltonian for the same $J_x, J_y, J_z$.

Within the vortex-free sector, the honeycomb model is known to have a gap between the code space and the next lowest eigenvalue given by $\delta E = 2(J_z-J_x-J_y)$ in the thermodynamic limit~\cite{Kitaev2006}. Since $H(s)$ has energy at least $4J_w$ for any state orthogonal to the vortex-free sector, by choosing $J_w = \frac{1}{4}\delta E$, we can guarantee that $H(s)$ also has a uniform gap above the code space of at least $\delta E$.

The key technical tool underlying the kinds of general topological arguments we will discuss here is the quasi-adiabatic continuation mapping between the code spaces of $H(0)$ and $H(1)$. There are several formulations of the quasi-adiabatic continuation~\cite{Hastings2004, Hastings2005, Osborne2007, Nachtergaele2007} that would yield slightly different bounds, but for concreteness we follow Ref.~\cite{Bachmann2012}. In order to be explicit about the locality properties of the quasi-adiabatic continuation, we note that all the Hamiltonians $H(s)$ satisfy a Lieb-Robinson bound~\cite{Lieb1972}, a version of which we state here for completeness.

\begin{lemma}[Lieb-Robinson bound~\cite{Lieb1972}]
Under time evolution for time $t$, there exist constants $C,\alpha,v>0$ such that observables $O_A$ and $O_B$ whose support lies on disjoint regions $A, B\subset \Lambda$ satisfy
\begin{align}
	\|[\e^{iH(s)t}O_A\e^{-iH(s)t},O_B]\|\leq C\|O_A\|\|O_B\|\min(|A|,|B|)\e^{-\alpha\left(d(A,B)-v|t|\right)}
\end{align}
for all $s$, where $d(A,B)$ is the distance between regions $A$ and $B$, and $\|\cdot\|$ is the operator norm.
\end{lemma}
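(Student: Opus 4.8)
The plan is to invoke the standard Lieb--Robinson machinery --- for concreteness, the version used in the quasi-adiabatic continuation literature, Refs.~\cite{Hastings2005, Nachtergaele2007, Bachmann2012} --- and the only point needing real care is that the constants $C,\alpha,v$ can be taken uniform in $s\in[0,1]$.

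First I would verify that each $H(s)$ satisfies the hypotheses of that machinery with $s$-independent data. Each $H(s)$ is a sum of terms $h_Z$: the six-body plaquette operators $J_wW_\mbf{q}$ and the two-body link operators $sJ_xK^x_\mbf{q}$, $sJ_yK^y_\mbf{q}$, $J_zK^z_\mbf{q}$, each supported on a subset $Z\subset\Lambda$ of diameter $\cO(1)$ (measured in the torus graph distance on $\Lambda$), and the interaction hypergraph has bounded degree. Since $0\le s\le1$, $\|h_Z\|\le\max\{J_w,J_x,J_y,J_z\}$ for every $Z$ and every $s$. Every estimate in the Lieb--Robinson argument depends on $H(s)$ only through this locality-and-norm data, so the $C,\alpha,v$ it produces inherit the $s$-uniformity.

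Next, fixing disjoint $A,B$ and setting $g(t):=[\e^{iH(s)t}O_A\e^{-iH(s)t},O_B]$, one has $g(0)=0$. Differentiating and applying the Jacobi identity, $\dot g(t)=i[H(s),g(t)]-i\,[\e^{iH(s)t}O_A\e^{-iH(s)t},\,[H(s),O_B]]$; the first term integrates to a norm-preserving conjugation, so Duhamel's principle gives $\|g(t)\|\le\int_0^{|t|}\bigl\|\,\bigl[\e^{iH(s)\tau}O_A\e^{-iH(s)\tau},\,[H(s),O_B]\bigr]\,\bigr\|\,\mathrm{d}\tau$. Now $[H(s),O_B]=\sum_{Z:\,Z\cap B\neq\emptyset}[h_Z,O_B]$ is supported on the $\cO(1)$-neighbourhood of $B$, with $\|[h_Z,O_B]\|\le2\|h_Z\|\,\|O_B\|$. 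Iterating this inequality --- at each step replacing the ``spread'' region by its local neighbourhood and extracting a factor $2\|h_Z\|$ --- yields a convergent power series in $|t|$ in which the first term that can reach $A$ requires $\gtrsim d(A,B)$ iterations, producing the geometric suppression $\e^{-\alpha(d(A,B)-v|t|)}$; resumming gives a bound of the form $C\,\|O_A\|\,\|O_B\|\,|A|\,\e^{-\alpha(d(A,B)-v|t|)}$, the $|A|$ being (up to a constant) the number of local terms that can first overlap $O_A$. Finally, since $\|g(t)\|=\|[O_A,\e^{-iH(s)t}O_B\e^{iH(s)t}]\|$, running the same argument on whichever of the two operators is smaller replaces $|A|$ by $\min(|A|,|B|)$, which is the stated inequality.

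The hard part is genuinely only bookkeeping: $H(s)$ is strictly finite range with uniformly bounded couplings, so it satisfies the hypotheses of Ref.~\cite{Bachmann2012} (or \cite{Nachtergaele2007}) verbatim and one could in principle just quote those statements. The two things to check explicitly are (i) the $s$-uniformity of $C,\alpha,v$, which is immediate because $s\in[0,1]$ merely rescales $J_x,J_y$ downward, and (ii) the $\min(|A|,|B|)$ prefactor, which follows from the $A\leftrightarrow B$ symmetry of the commutator norm noted above.
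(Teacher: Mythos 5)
Your proposal is correct: the paper offers no proof of this lemma at all---it is stated ``for completeness'' with a citation to Lieb and Robinson---and what you have written is precisely the standard argument from that literature (Duhamel plus iteration of the commutator expansion), which is what the citation implicitly invokes. The two points that genuinely needed checking for this application, namely the uniformity of $C,\alpha,v$ over $s\in[0,1]$ (immediate since $s$ only rescales $J_x,J_y$ downward, so the finite-range, bounded-norm interaction data is $s$-independent) and the origin of the $\min(|A|,|B|)$ prefactor (from the symmetry $\|[\tau_t(O_A),O_B]\|=\|[O_A,\tau_{-t}(O_B)]\|$), are both handled correctly.
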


The quasi-adiabatic continuation $U_{\mathrm{qac}}(s)$ maps the code space of $H(0)$ to that of $H(s)$. It is shown in Ref.~\cite{Bachmann2012} that a Lieb-Robinson bound implies that the effect of the quasi-adiabatic continuation on an operator whose support is contained within a region $A$ can be approximated well by an operator $V_{A^r}(s)$ localized within an $r$-ball around $A$ (we will suppress $s$ labels for clarity).

\begin{lemma}[Locality of quasi-adiabatic continuation~\cite{Hastings2004}, following~{\cite[Thm.~3.4]{Bachmann2012}}]\label{l:localityqac}
The action of the quasi-adiabatic continuation on an operator $O_A$ supported on $A$ can be approximated by $V_{A^r}$ to an accuracy
\begin{align}\label{e:approxqac}
	\|U_{\mathrm{qac}}O_AU_{\mathrm{qac}}^{\dagger}-V_{A^r}O_AV_{A^r}^{\dagger}\|\leq 2C'\|O_A\|\left(\kappa r\right)^{10}\e^{-\frac{2}{7}\frac{\kappa r}{\ln^2 \kappa r}}
\end{align}
for some constants $\kappa, C'$ (depending on coupling constants of the Hamiltonian), and for $\kappa r> c$ for some absolute constant $c$.
\end{lemma}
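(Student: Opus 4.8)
The plan is to follow the standard approach for localizing a quasi-adiabatic evolution, in the precise form of Ref.~\cite{Bachmann2012}. Recall that $U_{\mathrm{qac}}(s)$ is the unitary solving $\partial_s U_{\mathrm{qac}}(s)=i\mathcal{D}(s)U_{\mathrm{qac}}(s)$ with $U_{\mathrm{qac}}(0)=I$, where the generator $\mathcal{D}(s)=\int w_\gamma(t)\,\tau^s_t\bigl(\partial_s H(s)\bigr)\,\mathrm{d}t$ is obtained by integrating the Heisenberg evolution $\tau^s_t$ (under $H(s)$) of $\partial_s H(s)=-\sum_{\mbf q}(J_xK^x_{\mbf q}+J_yK^y_{\mbf q})$ against Hastings' filter function $w_\gamma$. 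This $w_\gamma$ decays faster than any polynomial---but only as $\e^{-\mathrm{const}\cdot|\gamma t|/\ln^2|\gamma t|}$---and is engineered so that $U_{\mathrm{qac}}(s)$ intertwines the code-space projectors of $H(0)$ and $H(s)$. The first task is to decompose $\mathcal{D}(s)=\sum_{Z}\mathcal{D}_Z(s)$ into local terms labelled by the (connected) region $Z$ on which they act, and to record that the Lieb-Robinson bound stated above, together with the uniform gap above the code space of $H(s)$ established earlier, forces $\|\mathcal{D}_Z(s)\|$ to decay almost exponentially in $\mathrm{diam}(Z)$, uniformly in $s\in[0,1]$; this is where the $\ln^2$ in the exponent of \Eref{e:approxqac} originates, from balancing the Lieb-Robinson velocity $v$ against the filter rate $\gamma$.

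Next I would define the truncated generator $\mathcal{D}^{A^r}(s)=\sum_{Z\subseteq A^r}\mathcal{D}_Z(s)$, keeping only terms supported inside the $r$-ball $A^r$ about $A$, and let $V_{A^r}(s)$ be the unitary it generates; then $V_{A^r}O_AV_{A^r}^{\dagger}$ is supported in $A^r$ by construction. To compare the two evolutions applied to $O_A$, I would use a Duhamel (telescoping) identity expressing $U_{\mathrm{qac}}O_AU_{\mathrm{qac}}^{\dagger}-V_{A^r}O_AV_{A^r}^{\dagger}$ as an $s$-integral whose integrand is a conjugated commutator of the ``missing'' generator $\mathcal{D}(s')-\mathcal{D}^{A^r}(s')=\sum_{Z\not\subseteq A^r}\mathcal{D}_Z(s')$ with a partially-evolved copy of $O_A$. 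Since that partially-evolved operator is, by a Lieb-Robinson bound for the evolution generated by $\mathcal{D}$ itself, essentially supported near $A$, only terms $\mathcal{D}_Z$ with $Z$ meeting $A$ contribute appreciably, and each such $Z$ with $Z\not\subseteq A^r$ has $\mathrm{diam}(Z)\gtrsim r$. Summing the almost-exponential tails $\|\mathcal{D}_Z(s')\|$ over all such $Z$ on the two-dimensional lattice $\Lambda$ then produces the polynomial prefactor $(\kappa r)^{10}$---the power encoding the lattice dimension and the polynomial corrections already present in the filter estimate---times $\e^{-\frac{2}{7}\kappa r/\ln^2\kappa r}$, valid once $\kappa r$ exceeds an absolute constant, which is exactly \Eref{e:approxqac}.

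The main obstacle is not any single estimate but keeping the near-exponential decay rate under control through two nested approximations: the filter integral defining $\mathcal{D}(s)$, where $\gamma$ and a truncation time must be chosen optimally so that the Lieb-Robinson tail $\e^{-\alpha(d-v|t|)}$ and the filter tail combine into a single $\e^{-c\ell/\ln^2\ell}$ decay, and the Duhamel truncation at radius $r$. One must also check that $\kappa$ and $C'$ can be chosen uniformly over $s\in[0,1]$ and expressed in terms of $J_x,J_y,J_z$ and the gap. All of this is carried out in \cite[Thm.~3.4]{Bachmann2012}; in practice I would verify that our path $H(s)$ satisfies that theorem's hypotheses---a uniform gap and a Lieb-Robinson bound with uniform constants, both established above---and then quote its conclusion in the form \Eref{e:approxqac}.
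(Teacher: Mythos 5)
Your proposal is correct and matches the paper's treatment: the paper does not reprove this lemma but imports it from Ref.~\cite{Bachmann2012} after verifying its hypotheses (the uniform gap along $H(s)$ and the Lieb-Robinson bound), which is exactly what you conclude one should do. Your sketch of the underlying argument --- local decomposition of the quasi-adiabatic generator, almost-exponential decay from balancing the filter against the Lieb-Robinson tail, and a Duhamel comparison with the truncated generator --- is a faithful account of the cited proof.
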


Given these lemmas, the next theorem provides a convenient way to summarize their implications for error correction. Following Ref.~\cite{Flammia2016}, define an $[[n,k,d,\delta,r]]$ code to consist of $k$ logical qubits encoded in $n$ physical qubits, where every region of size less than $d$ is $(\delta,r)$-correctable. A region $A$ is $(\delta,r)$-correctable if an erasure of qubits within $A$ can be corrected to error $\delta$ in the Bures distance, by acting on an $r$-ball enclosing $A$. 

\begin{theorem}[Smoothness of code parameters~{\cite[Thm.~16]{Flammia2016}}]\label{t:smoothcode}
Given a code space $\mathcal{C}_0$ with parameters $[[n,k,d]]$ for $d=\Omega(n^\gamma)$ for some $\gamma>0$, and a unitary $U$ whose action on local regions can be approximated by a local operator as in \Eref{e:approxqac}, $\mathcal{C}_1\equiv U\mathcal{C}_0U^{\dagger}$ is an $[[n,k,d-2r,\delta(r),r]]$ approximate code, with $\delta(r)\leq 2C'\left(\kappa r\right)^{10}\e^{-\frac{2}{7}\frac{\kappa r}{\ln^2 \kappa r}}$.
\end{theorem}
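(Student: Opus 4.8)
The plan is to transfer the exact error-correction properties of $\mathcal{C}_0$ to $\mathcal{C}_1=U\mathcal{C}_0U^{\dagger}$ through the unitary $U$, using the approximate-locality estimate of \Lref{l:localityqac} (see \Eref{e:approxqac}) to turn ``act everywhere'' into ``act on an $r$-neighbourhood.'' Since $(\delta,r)$-correctability of a region refers by definition to reversing an erasure of that region, it suffices to produce, for every region $A$ with $|A|<d-2r$, a recovery channel supported on the $r$-ball $A^{r}$ that reverses the erasure of $A$ on $\mathcal{C}_1$ to Bures distance $\delta(r)$; correctability of an arbitrary noise channel supported on $A$ then follows from the erasure case by the usual argument. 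The loss $d\mapsto d-2r$ enters because the fattened region $A^{r}$ must still be a correctable region of $\mathcal{C}_0$, and fattening by $r$ increases the relevant size by at most $2r$.

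First I would use the cleaning characterization of correctable regions for the \emph{exact} code $\mathcal{C}_0$: any region $B$ with $|B|<d$ is correctable, and this is equivalent to every logical operator of $\mathcal{C}_0$ having a representative---one that agrees with it on the code space---supported on $\Lambda\setminus B$. Applying this with $B=A^{r}$ (still admissible since $|A|<d-2r$), every logical $\bar{O}$ of $\mathcal{C}_0$ has a normalized representative $\bar{O}'$ supported on $\Lambda\setminus A^{r}$. The logical operators of $\mathcal{C}_1$ are exactly the $U\bar{O}U^{\dagger}$, and by \Lref{l:localityqac} applied to $\bar{O}'$ the operator $U\bar{O}'U^{\dagger}$ lies within $\delta(r)$ in operator norm of $V\bar{O}'V^{\dagger}$ for some $V$ supported on $(\Lambda\setminus A^{r})^{r}$. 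Since the $r$-interior of $A^{r}$ contains $A$, the set $(\Lambda\setminus A^{r})^{r}$ is disjoint from $A$, so $V\bar{O}'V^{\dagger}$ is supported on $\Lambda\setminus A$. Taking the supremum over a basis of normalized logical operators shows that every logical operator of $\mathcal{C}_1$ is within $\delta(r)$ of an operator supported off $A$; that is, $A$ is \emph{approximately cleanable} in $\mathcal{C}_1$ with error $\delta(r)$.

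It remains to convert approximate cleanability into a genuine recovery channel that is both $\delta(r)$-good in Bures distance and supported on $A^{r}$. Here I would use that $\mathcal{C}_0$ is a local stabilizer code (as it is in our application, the toric code): the erasure of a size-$<d$ region is corrected in $\mathcal{C}_0$ by a channel that measures the stabilizer generators overlapping the region---all supported within an $O(1)$-neighbourhood of it---and applies a Pauli correction on the region itself. Conjugating these stabilizer generators and correction Paulis by $U$ and replacing $U$ by the local approximants of \Lref{l:localityqac}, one obtains a recovery channel for the erasure of $A$ supported within $A^{r}$ (absorbing the $O(1)$-neighbourhood into $r$) whose deviation from the exact-locality idealization---which corrects exactly, since $\mathcal{C}_1=U\mathcal{C}_0U^{\dagger}$---is $O(\delta(r))$ in Bures distance, the constant being absorbed into $C'$. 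Equivalently one may route this step through the standard transpose-channel bounds of B\'eny and Oreshkov, using the near-constancy on $\mathcal{C}_1$ of the complementary ``keep-only-$A$'' channel that approximate cleanability supplies.

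The principal obstacle is bookkeeping rather than any isolated hard estimate. One must (i) make ``approximate cleanability $\Rightarrow$ $(\delta,r)$-correctable erasure'' quantitative while pinning the recovery support to $A^{r}$ and keeping the error proportional to $\delta(r)$---in particular controlling the \emph{Bures} distance of the recovered state rather than mere differences of expectation values, and ensuring that only the local dressings $V$ and the transported stabilizers, never $U$ globally, appear in the final channel; (ii) track the two successive $r$-fattenings---one keeping $A^{r}$ correctable in $\mathcal{C}_0$, one localizing the transported syndrome operators---that produce the $d\mapsto d-2r$ reduction, under the size convention for which $|A^{r}|\le|A|+2r$ holds, and check that combining the finitely many per-operator errors does not spoil the stated form of $\delta(r)$; and (iii) verify that $r$ lies in the regime $\kappa r>c$ in which \Lref{l:localityqac} is valid. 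The last point is where the hypothesis $d=\Omega(n^{\gamma})$ is used: it guarantees that $r$ can be taken large enough for the locality estimate to apply while $d-2r$ still grows with $n$.
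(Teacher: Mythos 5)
First, note that the paper does not actually prove this statement: Theorem~\ref{t:smoothcode} is imported verbatim from Ref.~\cite{Flammia2016} (their Thm.~16), so there is no in-paper proof to compare against. Judged on its own terms, the first half of your argument is sound and is the natural use of \Lref{l:localityqac}: clean every logical operator of $\mathcal{C}_0$ off the fattened region $A^{r}$, conjugate by $U$, and truncate to conclude that every logical operator of $\mathcal{C}_1$ is within $\delta(r)$ in operator norm of an operator supported off $A$. This is precisely the direction the paper itself exploits later, around Eq.~(\ref{e:correctindistinguish}), to pass from correctability to indistinguishability.

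The gap is your final step, ``approximate cleanability $\Rightarrow$ $(\delta(r),r)$-correctability,'' and it is not mere bookkeeping. (i) Your patch invokes $\mathcal{C}_0$ being a local stabilizer code with $O(1)$-local syndrome extraction; that is true of the intended application ($H(0)$ is an encoded toric code) but is not a hypothesis of the theorem, which is stated for an arbitrary $[[n,k,d]]$ code, and for a general exact code the recovery for erasure of a correctable region need not be supported anywhere near that region. (ii) Even granting stabilizer structure, the known quantitative conversions between cleanability and correctability degrade the error: the converse direction (Theorem~\ref{T:correctregionavoidlogop}) already turns $\delta$-correctable into $4\sqrt{\delta}$-cleanable, and going back --- e.g.\ via the B\'eny--Oreshkov conditions as in the paper's own Theorem~\ref{T:localcorrect} --- picks up a square root and a $4^{|A|}$-type dimension factor, so you would not recover the stated $\delta(r)\leq 2C'\left(\kappa r\right)^{10}\e^{-\frac{2}{7}\frac{\kappa r}{\ln^2 \kappa r}}$ without reworking both the constants and the functional form. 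The proof in the cited reference avoids the cleanability detour entirely: it transports the recovery map (equivalently, the complementary-channel/decoupling characterization of $(\delta,r)$-correctability) through $U$ and truncates \emph{that}, so the error bound is inherited directly, in one application, from \Eref{e:approxqac}. I would also flag your size convention $|A^{r}|\le|A|+2r$: with $|\cdot|$ counting qubits on a 2D lattice, fattening by $r$ adds on the order of $(\text{perimeter})\times r$ qubits, so the origin of the ``$d-2r$'' must be checked against the convention actually used in Ref.~\cite{Flammia2016} rather than assumed.
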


According to \Thref{t:smoothcode}, we immediately see that the honeycomb model is a $[[2N^2,2,N-2r, \delta(r), r]]$ approximate error correcting code, for any $r>\frac{c}{\kappa}$. We expect that this bound could be modified in several ways by using alternative formulations of the quasi-adiabatic continuation. In particular, a similar bound should hold for $r\sim 1$.

An intuitive corollary of this theorem is that the honeycomb model is approximately locally indistinguishable. The standard toric code (like other topologically ordered models at the RG fixed point) is well-known to satisfy local indistinguishability in the sense that all codestates look identical when restricted to any local region. In contrast, the states of the honeycomb code space will not exactly satisfy this condition. Thus instead of considering exact local indistinguishability, we consider its approximate version, which we define as follows:

\begin{definition}[Approximate local indistinguishability]\label{D:Apprxlocalindist}
Consider regions $A$ with linear size less than $N^*$, and let $O_\sub{A}$ be any Hermitian operator whose support is contained in $A$, i.e.~the linear size of $\mbox{supp}(O_\sub{A})$ is smaller than $N^*$. A subspace $\mathcal{C}$ is $\Delta$-approximately indistinguishable on length scale $N^*$ if, for all $A$ and all $O_\sub{A}$,
\begin{align}\label{E:Apprxlocalindist}
\sup_{\psi_1,\psi_2\in \mathcal{C}}\abs{\expect{O_\sub{A}}_{\psi_1}-\expect{O_\sub{A}}_{\psi_2}}\leq \Delta\cdot\|O_\sub{A}\|	\,,
\end{align}
A $0$-approximately indistinguishable space is said to be exactly locally indistinguishable.

A family of subspaces $\{\mathcal{C}_N\}_N$ of models of lattice size $N$ is said to be approximately locally indistinguishable if there exists $\beta>0$, $N_0<\infty$, and superpolynomially decaying function $\Delta$ such that for all $N>N_0$, $\mathcal{C}_N$ is $\Delta(N)$-approximately indistinguishable on length scale $N^*\propto N^\beta$. An approximately locally indistinguishable family of states with $\Delta(N)=0$ is said to be exactly locally indistinguishable.
\end{definition}

In discussing local indistinguishability, we will typically neglect the distinction between the code space of the honeycomb model on a fixed lattice, and the family of code spaces of the honeycomb model for all lattice sizes. Thus we will abuse the language by saying that the honeycomb code space is approximately locally indistinguishable, it being implicit that the more precise statement is that the family of code spaces of the honeycomb model for all lattice sizes is approximately locally indistinguishable.

Our definition of local indistinguishability differs slightly from related definitions in the literature, since it is typical to either consider only exact local indistinguishability~\cite{Bravyi2010,Bravyi2011stability}, or to consider a more restrictive definition of local topological quantum order that includes both local indistinguishability and local consistency~\cite{Michalakis2013,Cirac2013}.

In order to show that the honeycomb code space is approximately locally indistinguishable, it will be convenient to recall Thm.~7 of Ref.~\cite{Flammia2016}, which states that
\begin{theorem}[A correctable region avoids logical operators~{\cite[Thm.~7]{Flammia2016}}]\label{T:correctregionavoidlogop}
If region $A$ is $(\delta,r)$-correctable, then for any logical unitary $U_L$, there exists an operator $V_{\bar{A}}$ supported only on the complement of $A$, such that
\begin{align}
\|(U_L-V_{\bar{A}})\Pi\|&\leq 4\sqrt{\delta}\\
\|\Pi(U_L-V_{\bar{A}})\|&\leq 4\sqrt{\delta}
\end{align}
for $\Pi$ the projector to the code space.
\end{theorem}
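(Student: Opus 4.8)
The plan is to turn the correctability hypothesis, which is a statement about recovering from an \emph{erasure} on $A$, into a statement about which region a \emph{logical operator} may be supported on, using the reference-system (``mirror'') calculus together with the duality between decoupling and complementary recovery. First I would set up the bookkeeping: write $\mathcal{C} = \mathrm{im}\,\Pi$, introduce a reference system $R$ of dimension $\dim\mathcal{C}$, and let $\ket{\Omega}_{RS}$ be the state maximally entangled between $R$ and the code subspace, so that $\Tr_R\proj{\Omega} = \Pi/\dim\mathcal{C}$ and any logical unitary $U_L$ on $\mathcal{C}$ equals a ``mirror'' unitary $\tilde U_L$ on $R$, i.e.\ $(I_R\otimes U_L)\ket{\Omega} = (\tilde U_L\otimes I_S)\ket{\Omega}$. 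By definition of $(\delta,r)$-correctability there is a channel $\mathcal{R}$ supported in the $r$-ball $A^r$ with $(\mathrm{id}_R\otimes\,\mathcal{R}\circ\mathcal{E}_A)(\proj{\Omega})$ at Bures distance at most $\delta$ from $\proj{\Omega}$, where $\mathcal{E}_A$ is the erasure channel on $A$; by Fuchs--van de Graaf this is a high-fidelity statement, and since $\mathcal{E}_A$ factors through $\Tr_A$ and $A\subset A^r$, the composite $\mathcal{R}\circ\mathcal{E}_A$ is really a channel whose \emph{input} sits on $\bar A$ --- this is ultimately what forces the representative operator onto $\bar A$ rather than onto $A^r$.

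The core chain is then: (i) high-fidelity recovery of the $\bar A$-marginal of $\ket{\Omega}$ implies that $R$ is approximately decoupled from $A$ in $\ket{\Omega}_{RA\bar A}$, equivalently that the complement $\bar A$ approximately ``purifies $R$'', i.e.\ contains a subsystem $R'\cong R$ maximally entangled with $R$ up to an error controlled by $\delta$; (ii) transporting $\tilde U_L$ through the $R$--$R'$ maximal entanglement produces an operator acting on $R'\subset\bar A$, and unwinding the identification gives an explicit candidate $V_{\bar A} = W^{\dagger}(U_L\otimes I_E)W$ on $\bar A$, where $W\colon\mathcal{H}_{\bar A}\to\mathcal{H}_S\otimes\mathcal{H}_E$ is a Stinespring dilation of the (effectively $\bar A$-input) recovery channel; (iii) chaining the fidelity estimates through the dilation --- using gentle measurement and near-uniqueness of purifications to push $W\rho_{R\bar A}W^{\dagger}$ close to $\proj{\Omega}_{RS}\otimes\omega_E$ --- and then converting the resulting state-level estimate back to an operator-norm estimate on $(U_L - V_{\bar A})\Pi$ yields the claimed bound of the form $4\sqrt{\delta}$; the square root is precisely the cost of this last state-to-operator conversion. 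The companion inequality $\|\Pi(U_L - V_{\bar A})\|\le 4\sqrt\delta$ follows by repeating the argument with $U_L^{\dagger}$ and taking adjoints, using that $\Pi$ is a projector.

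I expect the main obstacle to be carrying out steps (i)--(ii) \emph{uniformly} over the whole code space: the robust Uhlmann / complementary-recovery constructions most naturally produce, for each individual code state, a unitary on $\bar A$ realizing $U_L$ on that state, whereas the theorem demands a single operator $V_{\bar A}$ that works for all of $\mathcal{C}$ at once. Threading the reference $R$ through from the start is precisely the device that fixes this, since $\ket{\Omega}$ packages all code states simultaneously, but one must ensure that the approximate embedding of $R'$ into $\bar A$ (equivalently, the dilation $W$) is chosen independently of $U_L$, and must track how the successive conversions from Bures distance to fidelity to trace distance to operator norm accumulate so that the final constant is genuinely $4$ and not merely $O(1)$. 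A secondary point to watch is that the Stinespring dilation be arranged to act only on $\bar A$ (legitimate because the erasure has already discarded $A$), so that $V_{\bar A}$ is \emph{exactly}, not merely approximately, supported off $A$ as the statement requires.
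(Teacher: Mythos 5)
The paper does not actually prove this statement: it is imported verbatim as Thm.~7 of Ref.~\cite{Flammia2016} and used as a black box, so there is no in-paper proof to compare against. That said, your outline essentially reconstructs the argument of the cited reference. The decisive observation is the one you reach in step (ii): your candidate $V_{\bar{A}} = W^{\dagger}(U_L\otimes I_E)W$ is nothing but the Heisenberg-picture adjoint $\mathcal{R}^{*}(U_L)$ of the (erasure-composed) recovery channel, and since the erasure has already traced out $A$, the adjoint of the composite channel acts as the identity on $A$, which is exactly why $V_{\bar{A}}$ is supported off $A$ rather than merely on the recovery ball $A^{r}$ --- you flag this correctly. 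Two remarks. First, the detour through decoupling, approximate Uhlmann, and gentle measurement is heavier than needed and is where your constant is most at risk: once $V_{\bar{A}}=\mathcal{R}^{*}(U_L)$ is in hand, one compares $(I_R\otimes U_L)\ket{\Omega}$ with $(I_R\otimes V_{\bar{A}})\ket{\Omega}$ directly using the fidelity guarantee and Cauchy--Schwarz, which is the cleaner route to an explicit constant of the form $4\sqrt{\delta}$. Second, your derivation of the companion inequality is fine, but the reason a single $V_{\bar{A}}$ serves for both bounds is not that ``$\Pi$ is a projector'' so much as that the construction is canonical and $*$-preserving: $\mathcal{R}^{*}(U_L^{\dagger})=\mathcal{R}^{*}(U_L)^{\dagger}=V_{\bar{A}}^{\dagger}$, so rerunning the argument for $U_L^{\dagger}$ produces exactly $V_{\bar{A}}^{\dagger}$ and taking adjoints gives $\|\Pi(U_L-V_{\bar{A}})\|\leq 4\sqrt{\delta}$ with the same operator. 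With those points tightened, your proposal is a faithful rendering of the standard proof.
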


This implies that such a $(\delta,r)$-correctable region $A$ is also $8\sqrt{\delta}$ indistinguishable, since there exists logical unitary $U_L$ that maps between any given pair of ground states $U_L |\psi_1\rangle = |\psi_2\rangle$ such that
\begin{align}
\left|\expect{O_A}_{\psi_1}-\expect{O_A}_{\psi_2}\right|&= \bigl|\bigl\langle\psi_2\bigl|  \bigl[U_L,O_A\bigr] \bigr|\psi_1\bigr\rangle\bigr|\\
&\leq \bigl\| \Pi \bigl[U_L,O_A\bigr]\, \Pi \bigr\|\\
&\leq 8\sqrt{\delta}\bigl\|O_A\bigr\| \,, \label{e:correctindistinguish}
\end{align}
where the last line follows from using the above theorem, the triangle inequality, and $[O_A,V_{\bar{A}}]=0$.

For the honeycomb model, regions $A$ of linear size $N^*$ contain at most $2(N^*)^2$ qubits. Thus, for $r<\frac{N}{2}-(N^*)^2$, all such regions satisfy $|A|< d = N-2r$ and are therefore $(\delta(r),r)$-correctable, with $\delta(r)$ as in \Thref{t:smoothcode}. By setting $N^*=\frac{1}{2}N^{\frac{1}{2}}$ and $r=\frac{N}{2}-(N^*)^2-1$, we see that $8\sqrt{\delta(r)}$ is superpolynomially decaying in $N$. 
Note that this only holds when $r$ is large enough that \Lref{l:localityqac} applies, but this will be true for all $N>N_0=\frac{4c}{\kappa}+4$. 
Then \Eref{e:correctindistinguish} immediately implies that the honeycomb code space is approximately locally indistinguishable.

Alternatively, an analogous result could have been derived directly by considering local operators of the toric code, for which exact local indistinguishability holds, evolved under the truncated quasi-adiabatic continuation $V_{A^r}(s)$, and using the bound \Eref{e:approxqac}.

A further implication is that the splitting of the code space of the honeycomb model is superpolynomially suppressed in system size. This can be seen by considering the expectation value of the Hamiltonian in an arbitrary ground state, noting that any local operator can only distinguish the code states by a superpolynomially small amount in system size, and the Hamiltonian is composed of a polynomial number of local operators. This gives that the energies of different code states can only differ by an superpolynomially small amount.

These general properties of approximate degeneracy, approximate local indistinguishability, and approximate code distance motivate the study of the honeycomb model for quantum information storage. By making use of specific properties of the honeycomb model, as opposed to general properties of the toric code phase, in the following sections we will be able to rigorously make much more precise statements about the behavior of this system.

	\subsection{Approximate degeneracy}\label{s:quasidegen}
	
In this section, we consider more carefully the approximate degeneracy of the honeycomb Hamiltonian. In the context of error correction, this corresponds to the resilience of information stored in the honeycomb code space against dephasing errors. An intrinsic dephasing effect exists in finite size systems even without the presence of other external noise, due to the inexact degeneracy of the code space. 

This  approximate degeneracy has been previously studied in Ref.~\cite{Kells2008, Kells2009a}, though their perturbative analysis applies only in the limit that $J_z\gg J_x, J_y$. In the previous section, we were only able to argue on general grounds that the splitting asymptotically decays superpolynomially. 
In contrast, here we rigorously and precisely quantify the splitting of ground state energy within the whole abelian phase. 

Recall that without loss of generality in the abelian phase, we take $J_z > J_x \geq J_y > 0$ and $J_z > J_x + J_y$ as in Eq.~(\ref{eq:couplingconstants}). 
Here and elsewhere it will be convenient to introduce a constant 
\begin{align}\label{eq:adef}
	a = \ln\biggl(\frac{J_z-J_y}{J_x}\biggr)
\end{align}
which is strictly positive everywhere in the gapped phase. 
The main result of this section is then the following:
\begin{theorem}(Energy splitting bound)\label{Th:qsdegeneracy}
	Let $\Delta E = \abs{E_{\ell} - E_{\ell'}}$ be the energy splitting between any pair of (pseudo-ground) states in the four-dimensional code space of the honeycomb model with coupling constants as in Eq.~(\ref{eq:couplingconstants}). Then 
	\begin{align}
		\frac{\Delta E}{J_z} \leq \frac{16 \sqrt{2} N^2}{\mathrm{e}^{aN}-1} = \mathcal{O}\bigl( N^2 \mathrm{e}^{-aN} \bigr)\,.
	\end{align}
	where $a > 0$ is given by $a = \ln\bigl(\tfrac{J_z-J_y}{J_x}\bigr)$.
\end{theorem}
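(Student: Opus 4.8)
The plan is to work directly with the exact free-fermion ground-state energies in \Eref{Eq:gsEnergy}, $E_\ell = -\tfrac12\sum_{\mathbf k \in G_\ell} E_{\mathbf k}$, where the quadrature grid $G_\ell$ depends on $\ell$ only through the half-step offsets in \Eref{E:gsMomentum}. The splitting between two sectors is then a difference of one fixed analytic summand evaluated on two slightly shifted equally-spaced grids. The natural tool is the classical fact that the equally-spaced (trapezoidal) quadrature rule for a periodic function analytic in a strip converges exponentially, with the error controlled by the strip half-width; crucially the limiting integral is offset-independent, so it cancels in the difference and only the exponentially small remainder survives.

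First I would rewrite the mode energy \Eref{Eq:gsModeEn} in factored form: using \Eref{Eq:gsxi} and \Eref{Eq:gsDelta}, one has the trigonometric-polynomial identity $E_{\mathbf k}^2 = \xi_{\mathbf k}^2 + |\Delta_{\mathbf k}|^2 = (2J_x e^{ik_x}+2J_y e^{ik_y}+2J_z)(2J_x e^{-ik_x}+2J_y e^{-ik_y}+2J_z)$, which continues analytically to all complex $k_x,k_y$. For $k_y$ real and fixed, I would then estimate, for complex $k_x$ with $|\mathrm{Im}\,k_x|\le a$ and $a$ as in \Eref{eq:adef}: $|2J_y e^{\pm ik_y}+2J_z|\ge 2(J_z-J_y)$, while $|2J_x e^{\pm ik_x}| = 2J_x e^{\mp\mathrm{Im}\,k_x}\le 2J_x e^{a} = 2(J_z-J_y)$. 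Hence neither factor vanishes in the open strip $|\mathrm{Im}\,k_x|<a$, so $E_{\mathbf k}$ admits a single-valued analytic square root there; and the same estimates give $|E_{\mathbf k}|\le 4J_z$ on the closed strip. Shifting the Fourier-coefficient contour to $\mathrm{Im}\,k_x=\mp a$ then yields $|\hat\phi_n|\le 2J_z\,e^{-a|n|}$ for the Fourier coefficients of $\phi(k_x):=-\tfrac12 E_{\mathbf k}$.

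Next, Poisson summation gives $\sum_{j=0}^{N-1}\phi(2\pi j/N+\theta) = N\sum_{m}\hat\phi_{mN}e^{imN\theta}$; the $m=0$ term does not depend on the offset $\theta$, and the remainder is bounded by $2NM/(e^{aN}-1)$ with $M=2J_z$. Thus flipping $l_x$ (a shift of the $k_x$-offset by $\pi/N$) changes the inner one-dimensional sum over $k_x$ at any fixed $k_y$ by $\mathcal O\!\bigl(N/(e^{aN}-1)\bigr)$; summing over the $N$ values of $k_y$ supplies the extra factor of $N$. Repeating the argument with $x$ and $y$ interchanged, the relevant strip half-width is $a_y=\ln\bigl((J_z-J_x)/J_y\bigr)$, and $a_y\ge a$ follows by cross-multiplying $(J_z-J_y)/J_x$ and $(J_z-J_x)/J_y$ and using $J_x\ge J_y$ together with $J_z>J_x+J_y$ from \Eref{eq:couplingconstants}; so flipping $l_y$ is at least as suppressed. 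Since any two of the four sectors differ by flipping a subset of $\{l_x,l_y\}$, the triangle inequality through an intermediate sector combines the two contributions and, after tracking constants, yields $\Delta E/J_z\le 16\sqrt2\,N^2/(e^{aN}-1)$.

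I expect the main obstacle to be the rigorous handling of the boundary of the analyticity strip. Pinning down the half-width as exactly $a$ (rather than something smaller such as $\ln(J_z/(J_x+J_y))$) relies on the factorization above and on the "frozen" factor $2J_y e^{\pm ik_y}+2J_z$ having modulus at least $2(J_z-J_y)$; and at the grid point $k_y=\pi$ (which occurs for $l_y=-1$ with $N$ even) the two square-root branch points of $E_{\mathbf k}$ sit exactly on the lines $\mathrm{Im}\,k_x=\pm a$, so one must check that $\phi$ still extends continuously and boundedly to the closed strip—which it does, since $E_{\mathbf k}\to 0$ there—to license the contour shift and the clean exponential decay of $\hat\phi_n$. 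The remaining constant-chasing needed to reach the stated $16\sqrt2$ is routine.
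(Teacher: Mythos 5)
Your proposal is correct and follows essentially the same route as the paper: both express $E_\ell$ as an equally-spaced quadrature of the analytic, $2\pi$-periodic mode energy, bound the offset-dependence by the exponential accuracy of such quadrature in terms of the analyticity strip half-width (the paper cites Davis' theorem where you derive the same bound via Poisson summation and contour shifting), and identify the identical strip width $a=\ln\bigl((J_z-J_y)/J_x\bigr)$ from the branch points of $\sqrt{E^2}$. Your factorization $E_{\mbf{k}}^2=(2J_xe^{ik_x}+2J_ye^{ik_y}+2J_z)(2J_xe^{-ik_x}+2J_ye^{-ik_y}+2J_z)$ is a clean shortcut to the strip width and the uniform bound $\abs{E}\le 4J_z$ that the paper instead obtains by explicit real/imaginary-part computations and the maximum modulus principle in Lemma~\ref{L:expConstants}, and your constants come out at least as sharp as the stated $16\sqrt{2}$.
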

This result shows that, in a superposition of code states, dephasing due to energy splitting decreases exponentially in the linear system size within the entire gapped phase. Moreover, this tendency vanishes exactly where the gap closes on the phase boundary $J_z = J_x + J_y$, where $a \rightarrow 0$.

\begin{proof}
Our strategy will be to look at the thermodynamic limit of the model and show that finite-size approximations to this limit converge to the asymptotic result quickly in the linear size of the system. Rather than bound the finite-size corrections perturbatively, we will use arguments from approximation theory that give exponential accuracy. 

The key ingredient of our proof is a bound on the error of the rectangular rule from numerical integration. Consider a function $f:\R\rightarrow\R$, and define the error for the $N$-point rectangular rule by
\begin{align}
	\epsilon_0(N)
	\equiv \frac{1}{2\pi} \int^{2\pi}_0 f(x) \mathrm{d}x - \frac{1}{N} \sum_{n=0}^{N-1} f\Bigl(\frac{2\pi n}{N}\Bigr).
\end{align}

Suppose that $f$ is also analytic and $2\pi$-periodic. Then there exists a strip in the complex plane $D_0=\R\times(-a_0,a_0)\in\C$ with $a_0>0$ such that $f$ can be extended to a holomorphic and $2\pi$-periodic bounded function $f:D_0\rightarrow\C$. We will choose the largest possible strip, so that $a_0$ is defined as the supremum of the half-width over of all such strips for $f$. Let $Q_0$ be an upper bound for $\abs{f}$ on $D_0$; in particular we choose $Q_0=\sup_{D_0}\abs{f}$. Then by a theorem of Davis~\cite{Davis1959} the error for the rectangular rule can be bounded by
\begin{align}
	\abs{\epsilon_0(N)}\leq \frac{2Q_0}{\mathrm{e}^{a_0 N}-1} \,.\label{E:erbnd1D}
\end{align} 
See \cite[Thm.~9.28]{Kress1998} for a simple proof.

For our purposes, the function $f$ of interest will be the energy density of any of the four honeycomb code basis states. In order to use Davis' theorem, we must first extend it to apply to a two-dimensional function $f(x,y)$. We define the error of two-dimensional integration
	\begin{align}
		\epsilon(N) 
    	\equiv 
	    \frac{1}{4\pi^2} \int^{2\pi}_0 \int^{2\pi}_0 f(x,y) \mathrm{d}x\mathrm{d}y
		- \frac{1}{N^2} \sum_{n_x=0}^{N-1} \sum_{n_y=0}^{N-1} 
		f\Bigl(
			\frac{2\pi n_x}{N},\frac{2\pi n_y}{N}
		\Bigr), 
	\end{align}
	which can be expressed in terms of the errors of one-dimensional integrations in $x$ and $y$:
	\begin{align}
		\epsilon_1(N;y)=\frac{1}{2\pi}\int^{2\pi}_0f(x,y)\mathrm{d}x - \frac{1}{N}\sum_{n_x=0}^{N-1}f\Bigl(\frac{2\pi n_x}{N},y\Bigr), \\
		\delta_1(N;x)=\frac{1}{2\pi}\int^{2\pi}_0f(x,y)\mathrm{d}y - \frac{1}{N}\sum_{n_y=0}^{N-1}f\Bigl(x, \frac{2\pi n_y}{N}\Bigr).
	\end{align}
	If for all real values $y$, $f(x, y)$ is analytic and $2\pi$-periodic in $x$, then for every value of $y$ there exists a strip of width $a(y) > 0$, $D_a = \R\times\bigl(-a(y),a(y)\bigr)\in\C$ and an upper bound $Q(y) = \sup_{x \in D_a} |f(x,y)|$ such that 
	\begin{align}
		\abs{\epsilon_1(N;y)} &\leq \frac{2Q(y)}{\mathrm{e}^{a(y) N}-1}.
	\end{align}
	We also introduce analogous constants $b(x)$ on a strip $D_b$ and $R(x) = \sup_{y \in D_b} |f(x,y)|$ for the case of the $y$ variable, and we obtain
	\begin{align}
			\abs{\delta_1(N;x)} &\leq \frac{2R(x)}{\mathrm{e}^{b(x) N}-1}.
	\end{align}
	According to the definition of $\epsilon_1(N;y)$ and $\delta_1(N;x)$ we know that
\begin{align*}
	\frac{1}{N^2} \sum_{n_x, n_y=0}^{N-1} f\biggl( \frac{2\pi n_x}{N},\frac{2\pi n_y}{N} \biggr) 
			= & \frac{1}{N} \sum_{n_y=0}^{N-1} \left[ \frac{1}{2\pi}\int_{0}^{2\pi} f\biggl( x,\frac{2\pi n_y}{N} \biggr) \mathrm{d}x - \epsilon_1 \biggl( N;\frac{2\pi n_y}{N} \biggr) \right] \\
			= &\frac{1}{2\pi}\int_{0}^{2\pi} \left[ \frac{1}{2\pi} \int_{0}^{2\pi} f(x,y) \mathrm{d}y - \delta_1 (N;x) \right] \mathrm{d}x - \frac{1}{N} \sum_{n_y=0}^{N-1} \epsilon_1 \biggl( N;\frac{2\pi n_y}{N} \biggr) \\
			= &\frac{1}{4\pi^2}\int_{0}^{2\pi} \int_{0}^{2\pi} f(x,y) \mathrm{d}x\mathrm{d}y - \frac{1}{2\pi}\int_{0}^{2\pi} \delta_1 (N;x) \mathrm{d}x - \frac{1}{N} \sum_{n_y=0}^{N-1} \epsilon_1 \biggl( N;\frac{2\pi n_y}{N} \biggr). 
\end{align*}
Using the triangle inequality then gives us an error bound for the two-dimensional rectangular rule, 
\begin{align}
	\abs{\epsilon(N)} &\leq \abs{\frac{1}{2\pi}\int_{0}^{2\pi} \delta_1 (N;x) \mathrm{d}x} + \abs{\frac{1}{N} \sum_{n_y=0}^{N-1} \epsilon_1 \biggl( N;\frac{2\pi n_y}{N} \biggr)}	\notag\\
	&\leq \max_{x} \abs{\delta_1 (N;x)}  +  \max_{y} \abs{\epsilon_1 (N;y)}.
\end{align}
	Here the optimization is over real values of the periodic variables $x$ or $y$, so  a maximum (as opposed to supremum) is appropriate due to compactness. Using the bound from the one-dimensional case, we obtain 
	\begin{align}
		\abs{\epsilon(N)} &\leq \max_{y} \frac{2Q(y)}{\mathrm{e}^{a(y)N}-1} +  \max_{x} \frac{2R(x)}{\mathrm{e}^{b(x)N}-1} \notag \\
		& \le \frac{2Q}{\mathrm{e}^{a N}-1}  + \frac{2R}{\mathrm{e}^{bN}-1}\,,
	\end{align}
where in the last line we have defined
	\begin{align}
		Q = \min_{y} Q(y)\,, \quad a = \min_y a(y)\,, \quad R = \min_{x} R(x) \,, \quad b = \min_x b(x) \,.
	\end{align}
The ``$\min$'' in the definitions of $Q$ and $R$ is not a typo, and the stronger inequality using a minimum instead of a naive maximum follows from the results of Lemma~\ref{L:expConstants} below.

	Now let us consider the energy density per dimer of the system in the code basis state $\ell$ of \Eref{Eq:gsEnergy}, which can be considered as an approximate evaluation of \Eref{Eq:gsEnDensThermoLimit} as
	\begin{align}
		\frac{1}{4\pi^2} \int^{2\pi}_0 \int^{2\pi}_0 \left( -\frac{1}{2}E(x, y) \right) \mathrm{d}x \mathrm{d}y = \frac{E_\ell}{N^2} + \epsilon_\ell(N) \,,
	\end{align}
	where $\epsilon_\ell(N)$ is the error term for approximating the two-dimensional integral by a sum for ground state $\ell$. Note that the continuum integral is independent of $\ell$. Then from the triangle inequality we have
	\begin{align}
		\frac{\Delta E}{N^2} \leq \abs{\epsilon_\ell(N)} 
	 	 + \abs{\epsilon_{\ell'}(N)}\,.
	\end{align}

Recall that the mode energy $E_\mbf{k}$ in Eq. (\ref{Eq:gsModeEn}) is composed of trigonometric functions, is $2\pi$-periodic and is analytic in the gapped phase, so the mode energy satisfies the conditions of Davis' theorem for both $\ell$ and $\ell'$.
Since $\epsilon_{\ell}(N)$ and $\epsilon_{\ell'}(N)$ are the errors for the same integral, they share the same bound, and
	\begin{align}
		\frac{\Delta E}{N^2} &\leq 2\left( \frac{2Q}{\mathrm{e}^{a N}-1}  + \frac{2R}{\mathrm{e}^{bN}-1}\right) \\
		&\leq 4\left(  \frac{Q+R}{\mathrm{e}^{\min\{a,b\}N}-1}  \right),
	\end{align}
which follows immediately from the two-dimensional error bound we derived above.

For the coupling constants as in \Eref{eq:couplingconstants}, we will prove in \Lref{L:expConstants} below that $Q + R \le 4 \sqrt{2} J_z$ and $b \ge a \ge \ln\bigl(\frac{J_z - J_y}{J_x}\bigr)$. The theorem then follows immediately. 
\end{proof}

The following lemma provides the explicit values of the constants used in the proof of \Thref{Th:qsdegeneracy}.

\begin{lemma}(Bounds on constants $a, b, Q, R$) \label{L:expConstants}
For the honeycomb model with coupling constants as in \Eref{eq:couplingconstants}, for all $y$ let $a(y)$ be the supremum half-width over all strips $D_a = \R\times(-a(y),a(y))\in\C$ containing an analytic extension in $x$ of $E(x,y)$, let $a = \min_y a(y)$, and let $Q = \min_{y} Q(y) = \min_y \sup_{D_a} \abs{E(x,y)}$. Define $b$ and $R$ in analogy with $a$ and $Q$, but with $x \leftrightarrow y$. Then we have the following inequalities
\begin{align}
	\max_{y} \frac{Q(y)}{\mathrm{e}^{a(y)N}-1} \le \frac{Q}{\mathrm{e}^{a N}-1} \quad \text{and} \quad \max_{x} \frac{R(x)}{\mathrm{e}^{b(x)N}-1} \le \frac{R}{\mathrm{e}^{bN}-1} \,,
\end{align}
as well as
\begin{align}
	Q + R \le 4 \sqrt{2} J_z \quad \text{and} \quad b \ge a \ge \ln\biggl(\frac{J_z - J_y}{J_x}\biggr)\,.
\end{align}

\end{lemma}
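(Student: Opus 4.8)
The plan is to make the analytic structure of the mode energy $E_\mathbf{k} = \sqrt{\xi_\mathbf{k}^2 + |\Delta_\mathbf{k}|^2}$ completely explicit. The starting point is the factorization $E(x,y)^2 = |2J_z + 2J_x e^{ix} + 2J_y e^{iy}|^2$, valid for real $x,y$, which follows by expanding $\xi_\mathbf{k}$ and $|\Delta_\mathbf{k}|^2$ from \Eref{Eq:gsxi} and \Eref{Eq:gsDelta}. Fixing $y$ real, this exhibits the analytic continuation in $x$: the function $E^2(x,y) = (2J_z + 2J_x e^{ix} + 2J_y e^{iy})(2J_z + 2J_x e^{-ix} + 2J_y e^{-iy})$ is entire and $2\pi$-periodic in $x$, and $E(x,y)$ is the branch of $\sqrt{E^2}$ that is positive on the real axis. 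Its only singularities are square-root branch points at zeros of $E^2$, i.e.\ where one of the two factors vanishes; solving $e^{\pm ix} = -(J_z + J_y e^{\pm iy})/J_x$ shows these lie exactly on the two horizontal lines $\mathrm{Im}\,x = \pm\ln(|J_z + J_y e^{iy}|/J_x)$. The gapped-phase hypothesis $J_z > J_x + J_y$ gives $|J_z + J_y e^{iy}| \ge J_z - J_y > J_x$, so this half-width is positive, $E$ is holomorphic and $2\pi$-periodic on the open strip of half-width $\ln(|J_z + J_y e^{iy}|/J_x)$ and on no larger symmetric strip, whence $a(y) = \ln(|J_z + J_y e^{iy}|/J_x)$. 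As $y$ varies, $|J_z + J_y e^{iy}|$ ranges over $[J_z - J_y,\, J_z + J_y]$ with minimum at $y=\pi$, so $a = \ln((J_z - J_y)/J_x)$; the same computation with $x \leftrightarrow y$ and $J_x \leftrightarrow J_y$ gives $b = \ln((J_z - J_x)/J_y)$, and $b \ge a$ because $(J_z - J_x)/J_y \ge (J_z - J_y)/J_x$ is equivalent to $J_x(J_z - J_x) - J_y(J_z - J_y) = (J_x - J_y)(J_z - J_x - J_y) \ge 0$.

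Next I would compute $Q(y)$. On the closed strip $|\mathrm{Im}\,x| \le a(y)$ the function $E^2(x,y)$ is bounded, so by the maximum modulus principle $Q(y)^2 = \sup_{D_a}|E^2|$ is attained on the line $\mathrm{Im}\,x = a(y)$ (by conjugation symmetry the other boundary line gives the same value). Writing $2J_z + 2J_y e^{iy} = \rho_y e^{i\theta_y}$ with $\rho_y = 2|J_z + J_y e^{iy}|$, on that line the first factor equals $\rho_y e^{i\theta_y} + (4J_x^2/\rho_y)e^{iu}$ and the second equals $\rho_y(e^{-i\theta_y} + e^{-iu})$; both moduli, hence $|E^2|$, depend on $u$ only through $\cos(\theta_y - u)$, and $|E^2|^2$ is a quadratic in that variable maximized at the endpoint $+1$, giving $\sup_u |E^2| = 2\rho_y^2 + 8J_x^2$. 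Hence $Q(y) = 2\sqrt{2}\,\sqrt{|J_z + J_y e^{iy}|^2 + J_x^2}$, and symmetrically $R(x) = 2\sqrt{2}\,\sqrt{|J_z + J_x e^{ix}|^2 + J_y^2}$.

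With these formulas in hand the first inequality of the lemma becomes a one-parameter estimate. Put $t = |J_z + J_y e^{iy}| \in [J_z - J_y,\, J_z + J_y]$; then $a(y)$ and $Q(y)$ depend on $y$ only through $t$, namely $a(y) = \ln(t/J_x)$ and $Q(y) = 2\sqrt{2}\,\sqrt{t^2 + J_x^2}$, both increasing in $t$ with minimum at $t_\pi = J_z - J_y$. I would then show the ratio $2\sqrt 2 \sqrt{t^2 + J_x^2}/((t/J_x)^N - 1)$ is decreasing in $t$ by comparing logarithmic derivatives: $\frac{d}{dt}\ln\sqrt{t^2 + J_x^2} = t/(t^2 + J_x^2) < 1/t \le N/t < \frac{d}{dt}\ln((t/J_x)^N - 1)$, the last step using $t > J_x$ and $N \ge 1$. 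Hence $\max_y Q(y)/(e^{a(y)N}-1)$ is attained at $t = t_\pi$, where it equals $Q/(e^{aN}-1)$; the argument for $R$ and $b$ is identical. Finally, since $Q = 2\sqrt 2\sqrt{(J_z - J_y)^2 + J_x^2}$ and $R = 2\sqrt 2\sqrt{(J_z - J_x)^2 + J_y^2}$, the bound $Q + R \le 4\sqrt 2 J_z$ is equivalent to $\sqrt{(J_z - J_y)^2 + J_x^2} + \sqrt{(J_z - J_x)^2 + J_y^2} \le 2J_z$. Applying $\sqrt{A^2 + B^2} \le A + B^2/(2A)$ to each term reduces this to $J_x^2/(2(J_z - J_y)) + J_y^2/(2(J_z - J_x)) \le J_x + J_y$, which holds since $J_z - J_y > J_x$ and $J_z - J_x > J_y$ force the two summands below $J_x/2$ and $J_y/2$.

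I expect the main obstacle to be establishing $a(y) = \ln(|J_z + J_y e^{iy}|/J_x)$ rigorously in both directions: that $E$ extends holomorphically and $2\pi$-periodically to the entire open strip of that half-width --- requiring that $E^2$ has no zeros there, that the strip is simply connected so the positive branch of $\sqrt{E^2}$ is single-valued, and boundedness for the subsequent maximum-modulus step --- and that the branch points lying on the two boundary lines are genuine obstructions, so that no larger symmetric strip works. The coupled monotonicity in the single parameter $t$ that underlies the ``min, not max'' inequality is the second point needing care, but once the explicit formulas for $a(y)$ and $Q(y)$ are in hand it is a routine one-variable computation.
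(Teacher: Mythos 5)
Your proposal is correct and follows the same overall strategy as the paper: identify $a(y)$ as the distance to the nearest branch point of $\sqrt{E^2}$, compute $Q(y)$ via the maximum modulus principle on the boundary line, reduce the ``min, not max'' inequality to monotonicity of a single-variable ratio, and note that $a(y)$ and $Q(y)$ depend on $y$ only through $C=\abs{J_z+J_y e^{iy}}$. The execution differs in two places worth noting. First, your factorization $E^2=(2J_z+2J_xe^{ix}+2J_ye^{iy})(2J_z+2J_xe^{-ix}+2J_ye^{-iy})$ locates the zeros of $E^2$ by solving $e^{\pm ix}=-(J_z+J_ye^{\pm iy})/J_x$ directly, which is cleaner than the paper's separation of $\mathrm{Re}\{E^2\}$ and $\mathrm{Im}\{E^2\}$ into components along $\hat c$, $\hat z$, $\hat z^\perp$; both yield $a(y)=\ln(C/J_x)$ and $Q(y)=2\sqrt2\sqrt{C^2+J_x^2}$. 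Your logarithmic-derivative comparison also supplies the proof that $g(u)=\sqrt{1+u^2}/(u^N-1)$ is decreasing for $u>1$, which the paper asserts without argument. Second, and more substantively, your proof of $Q+R\le 4\sqrt2\,J_z$ via $\sqrt{A^2+B^2}\le A+B^2/(2A)$ is a genuine improvement: the paper merely states that the maximum of $Q+R$ over the allowed couplings ``occurs in the limit $J_x,J_y\to0$,'' which is not obvious since $Q$ increases with $J_x$ while $R$ decreases with $J_x$, so the sum is not manifestly monotone; your two-line estimate closes that gap cleanly.
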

\begin{proof}
	To compute the explicit bounds, we have to extend the variables $x$ and $y$ into the complex plane separately. We will focus on the case that $x=x_\sub{R}+ix_\sub{I}$ and $y\in \mathbb{R}$, and since $x$ and $y$ are symmetric in the function we consider, the bounds in two directions have the same form with $x$ and $y$ exchanged. At all times we restrict our discussion to the gapped phase with coupling constants as in \Eref{eq:couplingconstants}.
	
	The exponent $a$ in the error bound is determined by the width of the strip where the energy function $E=E(x,y)$ is analytic in the complex plane, so the boundary of the region in $\C$ where $E$ is analytic will give us the width of this strip. Since the square root of a complex number is analytic except at its branch cut, the non-analytic region of $E$ satisfies
	\begin{align}
		\mbox{Re}\bigl\{ E^2 \bigr\} \leq 0\qquad\qquad\mbox{and}\qquad\qquad \mbox{Im}\bigl\{ E^2 \bigr\} = 0 \,.   \label{Eq:ReImInquality}
	\end{align}
	By doing the complex extension explicitly with the energy function from \Eref{Eq:gsModeEn}, we have the following expressions
	\begin{align}
		\mbox{Re}\biggl\{ \frac{E^2}{4} \biggr\} &= J_x^2 + C^2 + 2J_x\cosh(x_\sub{I}) \bigl[ B\cos(x_\sub{R}) + A\sin(x_\sub{R}) \bigr],\\
		\mbox{Im}\biggl\{ \frac{E^2}{4} \biggr\} &= 2J_x\sinh(x_\sub{I}) \bigl[ A\cos(x_\sub{R}) - B\sin(x_\sub{R}) \bigr],
	\end{align}
	where $A = J_y\sin(y)$, $B = J_y\cos(y) +　J_z$ and $C = \sqrt{A^2 +　B^2}$\,.
	This expression simplifies if we define unit vectors
	\begin{align}
		&\hat{z}       = \bigl( \cos(x_\sub{R}), \sin(x_\sub{R}) \bigr)\,,\\
		&\hat{z}^\perp = \bigl( -\sin(x_\sub{R}), \cos(x_\sub{R}) \bigr)\,,\\
		&\hat{c}       = \biggl( \frac{B}{C}, \frac{A}{C} \biggr)\,,
	\end{align}
	such that 
	\begin{align}
		\mbox{Re}\biggl\{ \frac{E^2}{4} \biggr\} &= J_x^2 + C^2 + 2J_x C \cosh(x_\sub{I}) \bigl( \hat{c}\cdot\hat{z} \bigr)\,,\\
		\mbox{Im}\biggl\{ \frac{E^2}{4} \biggr\} &= 2J_x C \sinh(x_\sub{I}) \bigl( \hat{c}\cdot\hat{z}^\perp \bigr)\,.
	\end{align}
	From \Eref{Eq:ReImInquality} we find that for non-analytic $E$, either $\sinh(x_\sub{I}) = 0$ or $\hat{c}\cdot\hat{z}^\perp=0$, since $C>0$ and $J_x > 0$. The first case gives $x_\sub{I} = 0$ which is trivial, and the second reveals that $\hat{c}\cdot\hat{z} = \pm 1$.  Hence \Eref{Eq:ReImInquality} gives
	\begin{align}
	\begin{array}{ll}
		\cosh(x_\sub{I}) \leq -\dfrac{J_x^2 + C^2}{2 J_x C}, &\text{ for } \hat{c}\cdot\hat{z}=+1	\,,\\[10pt]
		\cosh(x_\sub{I}) \geq  \dfrac{J_x^2 + C^2}{2 J_x C}, &\text{ for } \hat{c}\cdot\hat{z}=-1	\,.
	\end{array}
	\end{align}
	There is no solution for $x_\sub{I}$ in the first case since $-\frac{J_x^2 + C^2}{2 J_x C} < 0$. In the second case, we have
	\begin{align}
		x_\sub{I} \geq \cosh^{-1}\biggl( \frac{J_x^2 + C^2}{2 J_x C} \biggr)\,, 
	\end{align}
	and this inequality reveals the width of strip, $a(y)$. We can make one step further since $\cosh^{-1}(u) = \ln(u + \sqrt{u^2-1})$ for $u \geq 1$. The condition is always satisfied because $J_x, C > 0$ in the gapped phase. hence the width of the strip in the gapped phase (where $C^2 > J_x^2$) becomes
	\begin{align}
		a(y) = \ln \Biggl( \frac{C}{J_x} \Biggr) = \ln \Biggl( \frac{\sqrt{J_z^2 + J_y^2 + 2J_y J_z \cos(y)}}{J_x} \Biggr) \,,
	\end{align}
	and minimizing over the real variable $y$ we find
	\begin{align}
	a = \min_y a(y) = \ln \biggl( \frac{J_z - J_y}{J_x} \biggr)\,.
	\end{align}
The value for $b$ follows the same argument, and yields $b = \ln \bigl( \frac{J_z - J_x}{J_y} \bigr)$. The inequality $b \ge a$ follows from elementary algebra using the conditions in \Eref{eq:couplingconstants}. 

To compute $Q$, we apply the maximal modulus principle from complex analysis that states that the maximum modulus of a function $E$ which is analytic in an open subset $D$ of the complex plane lies on the boundary $\partial D$ of $\bar{D}$, the closure of $D$. To make the calculation simpler, we take the fourth power of the energy function to eliminate the square root, such that
\begin{align}
	\frac{\abs{E}^4}{16} =  \Bigl[ J_x^2 + C^2 + 2J_x C \cosh(x_\sub{I}) \bigl( \hat{c}\cdot\hat{z} \bigr)\Bigr]^2 + \Bigl[ 2J_x C \sinh(x_\sub{I}) \bigl( \hat{c}\cdot\hat{z}^\perp \bigr)\Bigr]^2.
\end{align}
	
The boundary of the strip is along the line $x_\sub{I} = \cosh^{-1}\biggl( \frac{J_x^2 + C^2}{2 J_x C} \biggr)$, and on the boundary the energy function takes the values
\begin{align}
	\frac{\abs{E}^4}{16} =  \Bigl[ J_x^2 + C^2 + \bigl(J_x^2 + C^2 \bigr) \bigl(\hat{c}\cdot\hat{z} \bigr)\Bigr]^2 + \Bigl[ \bigl(C^2 - J_x^2 \bigr) \bigl( \hat{c}\cdot\hat{z}^\perp \bigr)\Bigr]^2.
\end{align}
Since only $\hat{z}$ and $\hat{z}^\perp$ depend on $x_\sub{R}$, and they only appear in the term inside an inner product with $\hat{c}$, we can replace the $x_\sub{R}$-dependence by an angle $\theta$ such that $\hat{c}\cdot\hat{z} = \cos(\theta)$ and $\hat{c}\cdot\hat{z}^\perp = \pm\sin(\theta)$. Now we arrive at a simple form for the energy function on the boundary,
\begin{align}
	\frac{\abs{E}^4}{16} = \bigl(J_x^2 + C^2\bigr)^2 \bigl( 1+\cos(\theta) \bigr)^2 + \bigl(C^2 - J_x^2\bigr)^2\sin^2(\theta) \,,
\end{align}

The extreme values of this function are at points where $\tfrac{\partial}{\partial\theta}\abs{E}^4 = 0$, 
which is equivalent to
\begin{align}
	\left\{\begin{array}{l}
			\sin(\theta) = 0\\
			\cos(\theta) = - \frac{\bigl(C^2 + J_x^2\bigr)^2}{4C^2J_x^2}
		\end{array}\right. \,.
\end{align}
There is no solution for $\theta$ in the $\cos(\theta)$ equation, since for a gapped system (\Eref{eq:couplingconstants}) we have $\bigl(C^2 + J_x^2\bigr)^2 - 4C^2J_x^2 > 0$. Therefore the maximum of the energy function is given by
\begin{align}
	\sup_{x_\sub{R}} \abs{E}^4 = 64\bigl[J_x^2 + J_y^2 + J_z^2 + 2J_yJ_z\cos(y)\bigr]^2
\end{align}
and
\begin{align}
	Q(y) = \sup_{D_a} \abs{E} &= 2\sqrt{2}\bigl[J_x^2 + J_y^2 + J_z^2 + 2J_yJ_z\cos(y)\bigr]^{1/2}\,.
\end{align}
Minimizing this expression over the real variable $y$ yields
\begin{align}
	Q = \min_y Q(y) = 2\sqrt{2}\bigl[J_x^2 + (J_z - J_y)^2 \bigr]^{1/2}\,.
\end{align}
The derivation for the constant $R$ is the same, except exchanging $x$ and $y$. 

Having derived the expressions for $Q(y)$ and $a(y)$ (as well as their minimums), we now turn to the claimed upper bound 
\begin{align}\label{E:trickyinequality}
	\max_{y} \frac{Q(y)}{\mathrm{e}^{a(y)N}-1} \le \frac{Q}{\mathrm{e}^{a N}-1} \,.
\end{align}
First let
\begin{align}
	u(y) = \mathrm{e}^{a(y)} = \sqrt{J_y^2 + J_z^2 + 2J_yJ_z\cos(y)}/J_x \quad \text{and} \quad g(u) =  \frac{\sqrt{1+u^2}}{{u}^{N}-1}\,.
\end{align}
Then the maximization on the left hand side can be rewritten in this new variable as simply
\begin{align}
	 2\sqrt{2} J_x \max_{y} g\bigl(u(y)\bigr) \,.
\end{align}
The stationary points of $g\bigl(u(y)\bigr)$ are where $\frac{\mathrm{d} g}{\mathrm{d}u} \frac{\mathrm{d}u}{\mathrm{d}y} = 0$, and $g(u)$ is a strictly decreasing function of $u$ for $u >1$, so there are no stationary points where $\frac{\mathrm{d} g}{\mathrm{d}u}$ vanishes. Therefore the maximum must occur where $\frac{\mathrm{d}u}{\mathrm{d}y} = 0$, which simplifies to just $\sin(y) = 0$. Again because $g(u)$ is strictly decreasing the maximum occurs when $u$ is a small as possible, and it follows that $y=\pi$ is always the maximizer, justifying our choice to minimize in the definitions of $Q$ and $a$. This establishes the bound in Eq.~(\ref{E:trickyinequality}). The analogous inequality involving $R$ and $b$ follows the same argument. 

Finally, the upper bound on $Q+R$ then comes from maximizing the coupling constants $J_x, J_y$ in this sum over the region specified by Eq.~(\ref{eq:couplingconstants}), with the maximum occurring in the limit that $J_x, J_y \to 0$.
\end{proof}

	\subsection{Approximate local indistinguishability}\label{s:approxlocalind}
	
\label{s:approxlocalindistinguish}

In addition to the intrinsic dephasing generated by finite size effects, another source of coherent noise for quantum information stored in the honeycomb model will be from local perturbations or local operations, representing corrections to the model or experimental imperfections. To analyze the stability of this system against these sources of error, we study the local indistinguishability of the honeycomb code space~\cite{Kitaev2003, Bravyi2010, Bravyi2011stability, Michalakis2013}.

As argued in \Sref{s:asymptotics}, the honeycomb code space will not be exactly locally indistinguishable, but will satisfy the approximate local indistinguishability condition given in \Dref{D:Apprxlocalindist}. Although this has already been shown, the decay functions and constants in the argument are far from optimal. In this section, we will reprove the approximate local indistinguishability of the honeycomb model in a more precise form. The explicit theorem will be presented as \Thref{T:arbitrary_operator}, but before proving this we will introduce several relevant lemmas.

As in \Sref{s:quasidegen}, the key technical ingredient of the proof is a bound on the discrepancy between quantities on finite size lattices and quantities in the thermodynamic limit. Our argument begins by classifying Pauli operators into two sets, the centralizer of the set of $W_\mbf{q}$ and its complement. The complement is shown to satisfy exact local indistinguishability in \Lref{L:LocalIndAntiCommWq}, which leads us to focus on the set of local Pauli operators that commute with all $W_\mbf{q}$. This set is proved to be the group of link operators (those operators generated by link generators) in \Lref{L:linkGenSet}. Analogously to the proof of Theorem \ref{Th:qsdegeneracy}, we can express the expectation value of these link operators as discrete Fourier transforms, and approximate them by continuous Fourier transforms for large $N$. We bound the error of this approximation for string-like link operators in \Lref{L:single_string_case}, and general link operators in \Lref{L:many_string_case}. Finally in \Thref{T:arbitrary_operator} we express an arbitrary local operator in the Pauli basis to demonstrate its approximate local indistinguishability.

For the remainder of this section, we will consider only regions $A$ formed by a simply-connected set of honeycomb lattice plaquettes. We define the corresponding set of plaquette operators contained in $A$ as
\begin{align}	
	\mathcal{W}(A) = \{W_\mbf{q}\, :\, \mbox{supp}(W_\mbf{q})\subset A \}	\,.\label{E:WsetDef}
\end{align}
It is clear that the honeycomb codespace $\mathcal{C}_{\mathrm{khm}}$ possesses the following property:

\begin{lemma}[]\label{L:LocalIndAntiCommWq}
	For any Pauli $P$ with $P\notin \mathcal{Z}(\mathcal{W}(\Lambda))$ (where $\mathcal{Z}(\mathcal{S})$ denotes the centralizer of set $\mathcal{S}$, and recall that $\Lambda$ is the entire lattice), and all $\ket{\psi} \in \mathcal{C}_{\mathrm{khm}}$, 
	\begin{align}
		\expect{P}_{\psi} = 0\,.
	\end{align}
\end{lemma}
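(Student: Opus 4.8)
The statement is that any Pauli operator $P$ that fails to commute with at least one plaquette operator $W_{\mbf{q}}$ has vanishing expectation value in every code state. The natural approach is to exploit the fact that the code states are simultaneous $+1$ eigenstates of all the $W_{\mbf{q}}$ (as recalled in \Sref{s:sym}): for any ground state $\ket{\psi} \in \mathcal{C}_{\mathrm{khm}}$ we have $W_{\mbf{q}}\ket{\psi} = \ket{\psi}$ for all $\mbf{q}$. Since $P \notin \mathcal{Z}(\mathcal{W}(\Lambda))$, there is some plaquette $\mbf{q}_0$ with $\{P, W_{\mbf{q}_0}\} = 0$ (two Pauli operators either commute or anticommute, and failing to commute means anticommuting).

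\textbf{Key step.} With this $\mbf{q}_0$ fixed, I would write
\begin{align}
	\expect{P}_\psi = \bra{\psi} P \ket{\psi} = \bra{\psi} P W_{\mbf{q}_0} \ket{\psi} = -\bra{\psi} W_{\mbf{q}_0} P \ket{\psi} = -\bra{\psi} P \ket{\psi} = -\expect{P}_\psi \,,
\end{align}
where the second equality inserts $W_{\mbf{q}_0}\ket{\psi} = \ket{\psi}$, the third uses the anticommutation $P W_{\mbf{q}_0} = - W_{\mbf{q}_0} P$, and the fourth uses $\bra{\psi} W_{\mbf{q}_0} = \bra{\psi}$ (Hermiticity of $W_{\mbf{q}_0}$ together with its being a $+1$ eigenvalue on $\ket{\psi}$). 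Hence $2\expect{P}_\psi = 0$, so $\expect{P}_\psi = 0$. This is the entire argument and there is no real obstacle; the only things to be careful about are (i) justifying that $P \notin \mathcal{Z}(\mathcal{W}(\Lambda))$ genuinely yields \emph{anticommutation} with some generator $W_{\mbf{q}_0}$ — which holds because products of Paulis are Paulis and any two Paulis commute or anticommute, so not commuting with the subgroup generated by the $W_{\mbf{q}}$ forces anticommutation with at least one generator — and (ii) noting that all $W_{\mbf{q}}$ are Hermitian and unitary (being products of the Hermitian unitary link generators $K^\alpha_{\mbf{q}}$), which is what lets us move $W_{\mbf{q}_0}$ across the inner product freely.

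\textbf{Remark on generality.} Although stated for Pauli $P$, the identical argument shows more generally that $\expect{P}_{\psi_1} = 0$ whenever $P$ anticommutes with some $W_{\mbf{q}}$, and also that $\bra{\psi_1} P \ket{\psi_2} = 0$ for any two code states, since $W_{\mbf{q}_0}$ acts as $+1$ on both; this is the fact actually used downstream when reducing to link operators. No further machinery (Lieb-Robinson bounds, the fermionic solution, etc.) is needed for this particular lemma — it is purely the stabilizer property of the code space.
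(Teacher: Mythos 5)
Your proof is correct and rests on the same fact as the paper's: the code space lies in the vortex-free sector, so every code state is a $+1$ eigenstate of each $W_{\mbf{q}}$, and anticommutation of $P$ with some $W_{\mbf{q}_0}$ forces $\expect{P}_\psi = -\expect{P}_\psi = 0$. The paper phrases this as $P$ mapping $\mathcal{C}_{\mathrm{khm}}$ into its orthogonal complement, but that is the same stabilizer argument you have written out explicitly.
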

\begin{proof}
	It is known that $\mathcal{C}_{\mathrm{khm}}$ is in the vortex-free sector~\cite{Lieb1994}, and so any $P \notin \mathcal{Z}(\mathcal{W}(\Lambda))$ will map $\mathcal{C}_{\mathrm{khm}}$ into the orthogonal complement of $\mathcal{C}_{\mathrm{khm}}$. Thus $\expect{P}_\psi = 0$ for any $\ket{\psi} \in \mathcal{C}_{\mathrm{khm}}$.
\end{proof}

Given this observation, in order to prove indistinguishability we can restrict to considering Paulis $P\in \mathcal{Z}(\mathcal{W}(\Lambda))$ (since the Pauli operators form a basis).

	A vertex is said to be in the boundary of $A$ if $A$ does not contain all of its neighbors. A link is said to be across the boundary of $A$ if it contains one vertex within $A$ and another vertex outside of $A$. As well as $\mathcal{W}(A)$, the set of plaquette operators within $A$, we define the set of link generators within $A$, and the restrictions of link generators across the boundary of $A$ to $A$ itself as
	\begin{align}
		\mathcal{K}(A) &= \{K^{\alpha}_{\mbf{q}}\,:\,\mathrm{supp}(K^{\alpha}_{\mbf{q}})\in A\}	\,,\\
		\mathcal{B}(A) &= \{\sigma^{\alpha}_{\mathrm{supp}(K^{\alpha}_{\mbf{q}})\cap A}\,:\,|\mathrm{supp}(K^{\alpha}_{\mbf{q}})\cap A|=1\}	\,.
	\end{align}

	\begin{lemma}[Classification of Pauli operators]\label{L:linkGenSet}
	Every Pauli $P_\sub{A}\in \mathcal{Z}(\mathcal{W}(\Lambda))$ with support contained within region $A$ (formed of simply-connected plaquettes) is a link operator in $A$. In other words,
	\begin{align}
		\{P_\sub{A}\, :\,\mathrm{supp}(P_{\sub{A}})\in A, \; P_\sub{A}\in \mathcal{Z}(\mathcal{W}(\Lambda))\} = \langle \mathcal{K}(A) \rangle
	\end{align}
	 where $\langle \mathcal{S} \rangle$ denotes the group generated by set $\mathcal{S}$. Further, $\langle \mathcal{K}(A) \rangle$ contains at most $2^{\frac{3}{2}\abs{A}}$ linearly independent elements.

\end{lemma}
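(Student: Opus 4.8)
The claim has two parts: a characterization of the Pauli operators in $\mathcal{Z}(\mathcal{W}(\Lambda))$ supported on $A$, and a counting bound on $\langle\mathcal{K}(A)\rangle$. For the characterization, the inclusion $\langle\mathcal{K}(A)\rangle\subseteq\{P_\sub{A}:\mathrm{supp}(P_\sub{A})\subset A,\,P_\sub{A}\in\mathcal{Z}(\mathcal{W}(\Lambda))\}$ is immediate: each link generator $K^\alpha_\mbf{q}$ supported in $A$ is manifestly supported in $A$, and since every $W_\mbf{q}$ is a product of link generators and the $W_\mbf{q}$ all mutually commute, the link generators commute with every $W_\mbf{q}$ (this is the standard fact that each $W_\mbf{q}$ is built from an even number of Paulis overlapping any given link on a single site). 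So the real content is the reverse inclusion. The plan is to work in the $\mathbb{F}_2$ symplectic (binary) representation of the Pauli group: a Pauli on $2N^2$ qubits is a vector in $\mathbb{F}_2^{4N^2}$, commutation is the symplectic form, and ``$P_\sub{A}\in\mathcal{Z}(\mathcal{W}(\Lambda))$'' becomes a system of linear equations indexed by plaquettes $\mbf{q}$. Restricting to $\mathrm{supp}(P_\sub{A})\subset A$ cuts this down to the qubits (honeycomb vertices) inside $A$, and I want to show the solution space of this linear system is exactly the span of the binary vectors of the link generators and boundary-restricted links, i.e.\ $\langle\mathcal{K}(A)\cup\mathcal{B}(A)\rangle$ — and then argue that the $\mathcal{B}(A)$ part is actually not needed, or is absorbed, for $P_\sub{A}$ to centralize \emph{all} of $\mathcal{W}(\Lambda)$ (not just $\mathcal{W}(A)$).

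The key steps, in order: (1) Set up the $\mathbb{F}_2$ picture and observe that for $A$ a simply-connected union of plaquettes, the constraint that $P_\sub{A}$ commute with $W_\mbf{q}$ is automatically satisfied for plaquettes $\mbf{q}$ far from $\partial A$ once $P_\sub{A}$ is in the algebra of link operators, so the binding constraints live near the boundary. (2) Prove the local version: a Pauli supported on a single hexagon (or a small plaquette neighborhood) that commutes with all adjacent $W$'s is generated by the link generators touching that hexagon — this is a finite check, the honeycomb analogue of the fact that the only operators commuting with all toric-code stabilizers near a region are products of stabilizers and logicals. (3) Glue these local statements using simple-connectivity of $A$: because $A$ has no holes, there are no nontrivial homology cycles, so no ``logical-string-within-$A$'' operators can appear, and every centralizing Pauli decomposes as a product of link generators. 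This is where I'd lean on the same homological reasoning that gives the toric code on a disk a trivial logical algebra — simple-connectivity is exactly what rules out the extra generators. (4) For the counting bound: $\langle\mathcal{K}(A)\rangle$ is an $\mathbb{F}_2$-vector space, so $|\langle\mathcal{K}(A)\rangle|=2^{\dim}$, and I need $\dim\le\tfrac32|A|$. The honeycomb lattice is trivalent, so on $|A|$ vertices there are at most $\tfrac32|A|$ edges (links), giving at most $\tfrac32|A|$ link generators; the dimension of the span is at most the number of generators, hence the bound. (One should double-check whether $|A|$ here counts honeycomb vertices or plaquettes and whether the factor is exactly $3/2$ or needs a $+O(\partial A)$ correction; the statement as written suggests vertices, matching the edge count of a trivalent graph.)

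\textbf{Main obstacle.} The routine parts are the $\mathbb{F}_2$ bookkeeping and the counting; the genuinely delicate step is (3), the gluing/homology argument, specifically making precise that \emph{simple-connectivity of $A$} forces a centralizing Pauli supported in $A$ to be a \emph{product of link generators} rather than involving the boundary-restricted operators $\mathcal{B}(A)$ or some string stretching across $A$. I'd handle this by a ``peeling'' induction on the plaquettes of $A$: order the hexagons so that each is added along a connected portion of the current boundary (possible by simple-connectivity), and at each step use the local lemma (2) to multiply $P_\sub{A}$ by an appropriate link generator so as to clear its support off the newly-interior vertices, maintaining the invariant that what remains is supported on the shrinking boundary and still centralizes all $W_\mbf{q}$; when the boundary is exhausted the remaining operator is trivial. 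The subtlety is checking that this peeling never gets stuck — that at each stage the local constraints genuinely pin down the operator up to link generators — which is precisely where the trivalence of the lattice and the structure of the $W_\mbf{q}$ (each a specific six-Pauli product around a hexagon) must be used, and where one must be careful about vertices lying on $\partial A$ that belong to links crossing out of $A$.
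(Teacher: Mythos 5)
Your proposal follows the same two--stage skeleton as the paper (first characterize the centralizer of $\mathcal{W}(A)$ among Paulis supported in $A$ as $\langle\mathcal{K}(A)\cup\mathcal{B}(A)\rangle$, then kill the $\mathcal{B}(A)$ part using plaquettes \emph{outside} $A$, then count via trivalence), and your counting step and the elimination of $\mathcal{B}(A)$ match the paper exactly. Where you genuinely diverge is the core step: you propose a local lemma plus a peeling induction over plaquettes, using simple-connectivity as the homological input that lets the induction terminate. The paper instead disposes of this step with a single global dimension count: for a simply connected $A$, Euler's formula gives $|A|-|\mathcal{K}|+|\mathcal{W}|=1$, the standard stabilizer fact says the centralizer of the $|\mathcal{W}|$ independent commuting plaquette operators inside the Pauli group on the $|A|$ qubits of $A$ has a minimal generating set of size $2|A|-|\mathcal{W}|=3|A|-|\mathcal{K}|-1$, and the degree count $|\mathcal{B}|=3|A|-2|\mathcal{K}|$ shows that $\mathcal{K}\cup\mathcal{B}$ supplies exactly $|\mathcal{K}|+|\mathcal{B}|-1=3|A|-|\mathcal{K}|-1$ independent elements of that centralizer, forcing equality. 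Simple-connectivity enters only through Euler's formula, and no gluing or ordering of hexagons is ever needed. So the step you flag as the ``main obstacle'' --- proving the peeling never gets stuck --- is precisely the work the paper's argument avoids; as written, your step (3) is a plan rather than a proof, and executing it (maintaining the invariant on the shrinking support, handling vertices on links that exit $A$) would be substantially more laborious than the counting route. Two small further remarks: your justification of the easy inclusion (``the $W_\mbf{q}$ are products of link generators and mutually commute, hence link generators commute with them'') is a non sequitur --- the correct reason is the one in your parenthetical, that each link generator overlaps each $W_\mbf{q}$ on an even number of anticommuting sites; and note that in the paper $|A|$ counts honeycomb vertices (qubits), consistent with your reading.
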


Note that, to avoid confusion between notation for group generation and expectation value, expectation values will always be labelled with a state subscript as in $\expect{O}_\psi$.

\begin{proof}
The number of vertices, edges, plaquettes, and boundary vertices in $A$ are given by $|A|$, $|\mathcal{K}|$, $|\mathcal{W}|$, and $|\mathcal{B}|$ respectively (in this proof we will suppress $A$ arguments of the sets $\mathcal{K}(A)$, $\mathcal{W}(A)$, and $\mathcal{B}(A)$ for clarity). Since $A$ is simply connected, we have from Euler's formula
	\begin{align}	\label{eq:eulerFormula}
		|A|-|\mathcal{K}|+|\mathcal{W}|=1\,.
	\end{align}
	All boundary vertices of $A$ are contained in two edges of $A$ while all other vertices of $A$ are contained in three edges of $A$. Since each edge contains two vertices, by adding up all the edges at each vertex, we have counted every edge exactly twice, giving $2|\mathcal{B}|+3(|A|-|\mathcal{B}|) = 2|\mathcal{K}|$. Rearranging, we find
	\begin{align}
		|\mathcal{B}| = 3|A|-2|\mathcal{K}|\,.
	\end{align}

	In general, given a set $\mathcal{S}$ of multiplicatively independent commuting Pauli stabilizers on $n$ qubits (such that $\expect{\mathcal{S}}\not\owns -I$), a minimal generator of the centralizer $\mathcal{Z}(\mathcal{S})$ has size $2n-|\mathcal{S}|$~\cite{Nielsen2000book}. We can use this fact to show that $\bar{\mathcal{K}} \equiv \mathcal{K}\cup \mathcal{B}$ generates $\mathcal{Z}_A(\mathcal{W})$ (the centralizer of $\mathcal{W}$ in the Pauli group on qubits in $A$) by a simple counting argument.
	
	There are $|\bar{\mathcal{K}}| = |\mathcal{K}|+|\mathcal{B}|$ operators in this generating set, but only $|\bar{\mathcal{K}}|-1$ of them are (multiplicatively) independent, since $\prod_{P\in\bar{\mathcal{K}}}P = I$. We know that $\mathcal{Z}_A(\mathcal{W})$ has a minimal generator of size $2|A|-|\mathcal{W}| = 3|A|-|\mathcal{K}|-1$. Since $\bar{\mathcal{K}}$ also contains $|\mathcal{K}|+|\mathcal{B}|-1 = 3|A|-|\mathcal{K}|-1$ independent generators, each of which lies in $\mathcal{Z}_A(\mathcal{W})$, we conclude that $\langle\bar{\mathcal{K}}\rangle = \mathcal{Z}_A(\mathcal{W})$ as claimed.
		
	It is also easy to see that each operator in $\mathcal{B}$ will anticommute with at least one $W$ stabilizer not contained in $A$, as will all of the operators generated by $\mathcal{B}$, except for the identity $I$ and $\prod_{P\in \mathcal{B}}P$. Since all elements of $\mathcal{K}$ are in $\mathcal{Z}(\mathcal{W}(\Lambda))$, and $\prod_{P\in \mathcal{B}}P = \prod_{P\in \mathcal{K}}P$ can be generated by the elements of $\mathcal{K}$, we conclude that $\mathcal{Z}_A(\mathcal{W}(\Lambda))=\langle \mathcal{K}\rangle$.
	
	Since each element of $\langle \mathcal{K}\rangle$ is self inverse and commutes with every other up to a phase, and since its minimal generator has size $|\mathcal{K}|$, the number of linearly independent elements in $\langle \mathcal{K}\rangle$ is simply $2^{|\mathcal{K}|}$. Noting that each edge involves 2 qubits, and every qubit is involved in either 2 or three edges within $A$, it is easy to see that $|A|\leq|\mathcal{K}|\leq\frac{3}{2}|A|$. This bounds the number of linearly independent elements of $\langle \mathcal{K}\rangle$ to be less than $2^{\frac{3}{2}\abs{A}}$. 
\end{proof}

The result of \Lref{L:linkGenSet} allows us to focus on link operators in $\expect{\mathcal{K}(A)}$. In order to analyze these operators, it will be convenient to make the following definitions:
\begin{definition}\label{D:endpoints} An \textbf{endpoint} of a link operator is a qubit with an odd number of link generators acting on it. A link operator is \textbf{string-like} if it has precisely two endpoints.
\end{definition}
In fact, we will now see that the expectation value of a link operator in a fixed honeycomb model code basis state depends (up to a phase) only on its endpoints.

Noting that any link operator has an even number of endpoints, we denote the $2M$ endpoints of a link operator $K\in\expect{\mathcal{K}(A)}$ as $\mbf{Q}_K = \left\{(\mbf{q}_m,s_m)\right\}_{m=1}^{2M}$, where $s_m \in \{+1,-1\}$ corresponding to $\bullet$ and $\circ$ sites, respectively.

Denote the code basis state $\ell=(l_x,l_y)$ (the state in the codespace that is also the $l_x$ eigenstate of $L_x$ and the $l_y$ eigenstate of $L_y$) by $\ket{\phi_\ell}$. In any such state, link generators $K^\alpha_\mbf{q}$ can be fermionized as in \Eref{E:fermionization}. Since $(c^\dagger_\mbf{q} \pm c_\mbf{q})^2 = \pm 1$ and $K^z_\mbf{q} = (c^\dagger_\mbf{q} + c_\mbf{q})(c^\dagger_\mbf{q} - c_\mbf{q})$, we can then calculate
\begin{align}	\label{E:fermionizedString}
	\expect{K}_{\phi_\ell} \sim \expect{\prod_{(\mbf{q},s)\in\mbf{Q}_K} \left(c^\dagger_{\mbf{q}} + s c_{\mbf{q}} \right)}_{\phi_\ell}	\,,
\end{align}
where $\sim$ reflects neglecting the possible $\pm 1$, $\pm i$ phases.
Note that this decomposition is contingent on the fact that $A$ is simply connected, since we can then without loss of generality assume that $A$ does not cross the periodic boundary of the lattice.
	
By the Fourier transformation in \Eref{E:fermionFourier} and the Bogoliubov transformation in \Eref{E:BogoliubovTrans}, we know that in code basis states,
\begin{align}
	c^\dagger_{\mbf{q}} = \frac{1}{N}\sum_{\mbf{k}} (u_{\mbf{k}}\gamma^\dagger_{\mbf{k}} + v_{\mbf{k}}^* \gamma_{-\mbf{k}}) e^{-i\mbf{k} \cdot \mbf{q}}	\,.
\end{align}
This gives for the expectation value of a link operator $K$:
\begin{align}	\label{E:stringExpect}
	\expect{K}_{\phi_\ell}	\sim \expect{\prod_{m=1}^{2M} \nu_m}_{\phi_\ell}	\,,
\end{align}
where 
\begin{align}
	\nu_m =  \frac{1}{N}\sum_{\mbf{k}_m} \left[
			 \left(u_{\mbf{k}_m}\gamma^\dagger_{\mbf{k}_m} + v_{\mbf{k}_m}^* \gamma_{-\mbf{k}_m}\right) e^{-i\mbf{k}_m \cdot \mbf{q}_m}
		 	+ s_m \left(u_{\mbf{k}_m}^* \gamma_{\mbf{k}_m} + v_{\mbf{k}_m}\gamma^\dagger_{-\mbf{k}_m}\right) e^{i\mbf{k}_m \cdot \mbf{q}_m} 
		\right]
\end{align}
Since code basis states are ground states of a free fermion Hamiltonian, we can apply Wick's theorem~\cite{Wick1950}, and we find
\begin{align} \label{E:wickthm}
	\expect{K}_{\phi_\ell} \sim \sum_{\mu} \mathrm{sgn}(\pi_\mu) \prod_{m<\mu(m)} \expect{\nu_{m}\nu_{\mu(m)}}_{\phi_\ell}\,.
\end{align}
where the sum is over all involutive derangements $\mu$ of $\mbf{Q}_K$ (i.e.~permutations that give rise to valid pairings, satisfying $\mu=\mu^{-1}$ and $\mu(m)\neq m$). Here $\pi_\mu$ is a permutation that takes the modes from normal order into an order in which $m$ and $\mu(m)$ are adjacent, and $m$ precedes (follows) $\mu(m)$ for $m<\mu(m)$ ($m>\mu(m)$). Each of the $\frac{(2M)!}{2^M M!}$ terms in this expansion has exactly $M$ factors.

Since code basis states $\phi_\ell$ are fermion vacuum states, and so are annihilated by $\gamma_{\mbf{k}_m}$ for all $\mbf{k}_m$, we can compute each expectation value explicitly as
\begin{align}	\label{E:FourierTransf_hk}
	\expect{\nu_i \nu_{j}}_{\phi_\ell}
	&=  \frac{1}{N^2}\sum_{\mbf{k}} \hat{h}_{\mbf{k}}(s_i, s_j) e^{i\mbf{k}\cdot(\mbf{q}_i - \mbf{q}_j)}	\,,
\end{align}
with discrete Fourier coefficients
\begin{align}
	\hat{h}_{\mbf{k}}(s_i, s_j) = 
	v_{-\mbf{k}}^*u_{\mbf{k}} + s_j\abs{v_{\mbf{k}}}^2 + s_i\abs{u_{\mbf{k}}}^2 + s_is_ju_{\mbf{k}}^* v_{-\mbf{k}}	\,.
\end{align}
Using the expressions (\ref{E:Bogoliubov_u}) and (\ref{E:Bogoliubov_v}), we find 
\begin{align}	
\label{E:FourierComp_hk}
	\hat{h}_{\mbf{k}}(s_i, s_j) = 
	\left\{
		\begin{array}{ll}
			\pm 1,   &\quad s_i=s_j\\
			\frac{\pm\xi_\mbf{k} - \Delta_\mbf{k}}{E_\mbf{k}} , 									  &\quad s_i\neq s_j\\
		\end{array}
	\right.
\end{align}

This allows us to simplify \Eref{E:wickthm} slightly, since $\expect{\nu_i\nu_j}_{\phi_\ell}$ will vanish for $s_i=s_j$. There are then at most $M!$ non-vanishing terms in this expression (since this is how many pairings of $M$ elements with $M$ elements there are). We denote the sum over these non-vanishing pairings as $\sum_{\tilde{\mu}}$ to distinguish it from the naive sum in \Eref{E:wickthm}. 

The reformulation given by equations (\ref{E:wickthm}-\ref{E:FourierComp_hk}) will allow us to bound the difference of expectation values of link operators between any two codestates. Before considering the general case, we will first demonstrate the result for the simpler class of string-like link operators.

Given the general expression of an expectation value for any $K \in \langle \mathcal{K}(A) \rangle$, we bound the difference of expectation values of code basis states, beginning with the following lemma.
\begin{lemma}[String-like link operator indistinguishability]
\label{L:single_string_case}
Let $K \in \langle \mathcal{K}(A) \rangle$ be a string-like link operator with endpoints $\mathbf{Q}_K = \{ (\mbf{q}_1, s_1), (\mbf{q}_2, s_2)\}$. Let $\mbf{\Delta q} = \mbf{q}_1 - \mbf{q}_2$, and $\phi_\ell$, $\phi_{\ell'}$ be any pair of common eigenstates of $L_x$ and $L_y$ in $\mathcal{C}_{\mathrm{khm}}$ with coupling constants as in \Eref{eq:couplingconstants}.
Then for $\|\mbf{\Delta q}\|_\infty \leq \tfrac{N-1}{2}$
\begin{align}
	\abs{\expect{K}_{\phi_\ell} - \expect{K}_{\phi_{\ell'}}} \leq c\, \mathrm{e}^{-aN/2},
\end{align}
where $c=\frac{4 \sqrt{2}}{\sinh(a/2)}$, $a = \ln\bigl(\tfrac{J_z-J_y}{J_x}\bigr)$.
\end{lemma}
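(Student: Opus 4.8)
\textbf{Proof plan for Lemma~\ref{L:single_string_case}.}
The plan is to reduce the difference of expectation values to a difference of two-point functions, then to a one-dimensional rectangular-rule error, and finally to invoke the Davis-type bound already established in \Sref{s:quasidegen}. For a string-like link operator, \Eref{E:wickthm} collapses to a single term: $\expect{K}_{\phi_\ell} \sim \expect{\nu_1\nu_2}_{\phi_\ell}$, which by \Eref{E:FourierTransf_hk} equals $\frac{1}{N^2}\sum_\mbf{k} \hat h_\mbf{k}(s_1,s_2)\,\e^{i\mbf{k}\cdot\mbf{\Delta q}}$. By \Eref{E:FourierComp_hk} the only nontrivial case is $s_1 \neq s_2$, where $\hat h_\mbf{k} = (\pm\xi_\mbf{k}-\Delta_\mbf{k})/E_\mbf{k}$; in the $s_1=s_2$ case the summand is $\pm 1$ independent of $\mbf{k}$ and the two code-state expectation values agree exactly, so the bound is trivial. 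Thus I may assume $s_1\neq s_2$ throughout.

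The first real step is to recognize that both $\expect{K}_{\phi_\ell}$ and $\expect{K}_{\phi_{\ell'}}$ are $N$-point rectangular-rule approximations, evaluated at the shifted momenta of \Eref{E:gsMomentum}, to the \emph{same} continuum double integral $\frac{1}{4\pi^2}\int_0^{2\pi}\int_0^{2\pi} \hat h(x,y)\,\e^{i(x,y)\cdot\mbf{\Delta q}}\,\mathrm{d}x\,\mathrm{d}y$. Subtracting, the continuum term cancels and I am left with $\abs{\epsilon_\ell - \epsilon_{\ell'}} \le \abs{\epsilon_\ell} + \abs{\epsilon_{\ell'}}$, where each $\epsilon$ is the two-dimensional rectangular-rule error for the integrand $g(x,y) = \hat h(x,y)\,\e^{i(x_{\Delta q_x} + y_{\Delta q_y})}$. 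The phase factor $\e^{i\mbf{k}\cdot\mbf{\Delta q}}$ is entire and $2\pi$-periodic, so the analyticity strip of $g$ in each variable is controlled by that of $\hat h$, which in turn is controlled by $1/E_\mbf{k}$; its strip half-width is exactly the constant $a(y)$ (resp.\ $b(x)$) computed in \Lref{L:expConstants}, with minimum value $a = \ln\bigl(\tfrac{J_z-J_y}{J_x}\bigr)$. Here, however, the integer shift $\mbf{\Delta q}$ means that when I push the contour to imaginary part $\pm a'$ with $a' < a$, the phase factor contributes $\e^{\|\mbf{\Delta q}\|_\infty a'}$, which could be as large as $\e^{(N-1)a'/2}$; the hypothesis $\|\mbf{\Delta q}\|_\infty \le \tfrac{N-1}{2}$ is exactly what keeps this growth from swamping the $\e^{-a'N}$ decay, and optimizing the contour placement (essentially taking $a' \to a/2$) yields the stated $\e^{-aN/2}$ rather than $\e^{-aN}$.

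The cleanest way to organize this is to apply the one-dimensional Davis bound \Eref{E:erbnd1D} to the function $x \mapsto \hat h(x,y)\,\e^{ix_{\Delta q_x}}\,\e^{iy_{\Delta q_y}}$ for fixed real $y$, on a strip of half-width slightly less than $a(y)$, and likewise in $y$; then combine them with the two-dimensional triangle-inequality argument already carried out in the proof of \Thref{Th:qsdegeneracy}. The bound on $\abs{f}$ on the shifted strip is $Q(y)\,\e^{\|\mbf{\Delta q}\|_\infty a(y)} \cdot (\text{similar in }y)$, but since $\abs{\hat h_\mbf{k}}$ involves $1/E_\mbf{k}$ I actually get a cleaner estimate: on the strip $\abs{\hat h}$ is bounded by a constant of order $1$ times $\e^{\|\mbf{\Delta q}\|_\infty a(y)}$ growth from the phase, and carrying the $\e^{-aN/2}$ through gives a prefactor $\tfrac{1}{\e^{aN/2}-1} \le \tfrac{2}{\e^{aN/2}} \cdot \tfrac{1}{1-\e^{-aN/2}}$, which after bounding and collecting the $4\sqrt 2$ and $\sinh(a/2)$ factors produces $c = \tfrac{4\sqrt 2}{\sinh(a/2)}$. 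I expect the main obstacle to be the bookkeeping of exactly where to place the integration contour: one must verify that $\hat h(x,y)$ genuinely extends holomorphically up to (but not past) imaginary part $a(y)$ — which follows from \Lref{L:expConstants} since the branch points of $E_\mbf{k}^{-1}$ coincide with those of $E_\mbf{k}$ — and then carefully track how the phase factor's growth interacts with the $\tfrac{1}{\e^{a'N}-1}$ denominator so that the $\|\mbf{\Delta q}\|_\infty \le \tfrac{N-1}{2}$ cutoff produces precisely $\e^{-aN/2}$ with the advertised constant. The $s_1=s_2$ versus $s_1\neq s_2$ case split and the fact that a string-like operator has only a single Wick pairing make the combinatorial side trivial, so all the work is in this one contour-shift estimate.
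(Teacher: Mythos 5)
Your route is essentially the paper's: reduce to the single Wick pairing, split on $s_1=s_2$ versus $s_1\neq s_2$, regard each $\expect{K}_{\phi_\ell}$ as an $N$-point discrete approximation (at the $\ell$-shifted momenta) to the same continuum Fourier integral of $\hat{h}$, and control the two errors by analyticity of $\hat{h}$ on the strip of half-width $a(y)$ identified in \Lref{L:expConstants}. The only packaging difference is that the paper invokes Epstein's ready-made bound $\abs{\epsilon_0(N,q)}\leq 2Q_0\,\e^{-a_0(N-1)/2}/(\e^{a_0}-1)$ for the DFT error with $\abs{q}\le\frac{N-1}{2}$, whereas you propose to re-derive that bound from the Davis rectangular-rule estimate by absorbing the phase $\e^{ixq}$ into the integrand and tracking its growth on the strip; that is essentially the proof of Epstein's theorem, not a different argument. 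Your dismissal of the $\ell$-dependent momentum offset via periodicity, and your observation that the boundedness of $\hat h$ on the strip must be checked (the paper does this by a maximum-modulus computation giving $\tilde Q,\tilde R\le\sqrt 2$, with details omitted), both match the paper.

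There is, however, one step that fails as written: the parenthetical claim that the optimization is ``essentially taking $a'\to a/2$.'' Placing the contour at imaginary part $a'=a/2$ gives phase growth at most $\e^{(a/2)(N-1)/2}\approx\e^{aN/4}$ against a denominator $\e^{aN/2}-1$, i.e.\ only $\mathcal{O}(\e^{-aN/4})$ --- half the advertised rate. The correct choice is to push the contour to the full half-width, $a'\to a$, where $\abs{\hat h}$ is still bounded; then
\begin{align}
\frac{\e^{a\|\mbf{\Delta q}\|_\infty}}{\e^{aN}-1}\;\le\;\frac{\e^{a(N-1)/2}}{\e^{aN}-1}\;=\;\frac{\e^{-a/2}}{2\sinh(aN/2)}\;=\;\mathcal{O}\bigl(\e^{-aN/2}\bigr)\,,
\end{align}
which recovers the stated rate and, after summing the $x$- and $y$-direction errors and the two code states, a constant of the same form as $c=\tfrac{4\sqrt 2}{\sinh(a/2)}$. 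With that one correction your plan goes through.
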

\begin{proof}
This proof uses the same idea as the proof of \Thref{Th:qsdegeneracy}, but instead of bounding the difference between the rectangular rule and an integral, we will bound the difference between the discrete Fourier transform (DFT) from its continuous version. Consider a function $f:\R\rightarrow\C$, and define the error for the $N$-point discrete Fourier transform by
\begin{align}
	\epsilon_0(N, q)
	\equiv \frac{1}{2\pi} \int^{2\pi}_0 f(x) e^{ixq} \mathrm{d}x - \frac{1}{N} \sum_{n=0}^{N-1} f\Bigl(\frac{2\pi n}{N}\Bigr) e^{i\frac{2\pi n}{N}q}.
\end{align}
Let us recall the conditions that we used in \Thref{Th:qsdegeneracy}: Suppose that $f$ is also analytic and $2\pi$-periodic. Then there exists a strip in the complex plane $D_0=\R\times(-a_0,a_0)\in\C$ with $a_0>0$ such that $f$ can be extended to a holomorphic and $2\pi$-periodic bounded function $f:D_0\rightarrow\C$. We will choose the largest possible strip, so that $a_0$ is defined as the supremum of the half-width over of all such strips for $f$. Let $Q_0$ be an upper bound for $\abs{f}$ on $D_0$; in particular we choose $Q_0=\sup_{D_0}\abs{f}$. Our error term for the DFT can be bounded by a theorem of Epstein~\cite{Epstein2005} and we find that
\begin{align}
	\abs{\epsilon_0(N, q)}\leq \frac{2Q_0e^{-a_0\left(\frac{N-1}{2}\right)}}{e^{a_0} - 1} \,.\label{E:erbnd_DFT_1D}
\end{align} 

Since our integration is two-dimensional, we define the error of the two-dimensional DFT as
\begin{align}
	\epsilon(N, \mbf{q}) 
   	\equiv 
    \frac{1}{4\pi^2} \int^{2\pi}_0 \int^{2\pi}_0 f(x,y) e^{i(xq_x + yq_y)}\mathrm{d}x\mathrm{d}y
	- \frac{1}{N^2} \sum_{n_x=0}^{N-1} \sum_{n_y=0}^{N-1} 
	f\Bigl(
		\frac{2\pi n_x}{N},\frac{2\pi n_y}{N}
	\Bigr) e^{i\frac{2\pi}{N}(n_x q_x + n_y q_y)}\,. 
\end{align}
Following the same procedure as in the proof of \Thref{Th:qsdegeneracy}, the error bound of two dimensional error based on the one-dimensional case is
\begin{align}
	\abs{\epsilon(N, \mbf{q})} \leq \frac{2 Q \mathrm{e}^{-a \left(\frac{N-1}{2}\right)}}{\mathrm{e}^{a} - 1} + \frac{2R \mathrm{e}^{-b\left(\frac{N-1}{2}\right)}}{\mathrm{e}^{b} - 1}	\,, 
\end{align}
where $a, b, Q, R$ have the same definitions given previously in the approximate degeneracy proof, but of course we will apply them to the function $\hat{h}_\mbf{k}(s_i, s_j)$ instead of the energy density.

Now let us consider the expectation values of a string-like link operator in the code basis states $\ell$ and $\ell'$. Since there are only two endpoints of a string-like link operator, \Eref{E:wickthm} consists of a single term, and \Eref{E:FourierTransf_hk} allows us to convert this calculation into the approximate evaluation of a continuous Fourier transform with error
\begin{align}
	\frac{1}{4\pi^2} \int_0^{2\pi} \int_0^{2\pi} \hat{h}(\mbf{x};s_1, s_2) e^{-i\mbf{x}\cdot(\mbf{\Delta q})}\mathrm{d}x\mathrm{d}y = \expect{K}_{\phi_\ell} + \epsilon_\ell(N, \mbf{\Delta q})	\,,
\end{align}
where $\mbf{x} = (x, y)$.
Then from the triangle inequality we have
\begin{align}
	\abs{\expect{K}_{\phi_\ell} - \expect{K}_{\phi_{\ell'}}} \leq \abs{\epsilon_\ell(N, \mbf{\Delta q})} + \abs{\epsilon_{\ell'}(N, \mbf{\Delta q})}\,,
\end{align}

Recall that $\hat{h}_\mbf{k}(s_i, s_j)$ in \Eref{E:FourierComp_hk} is $\pm 1$ for $s_i=s_j$ or a function only of basic trigonometric functions of $\mbf{k}$ for $s_i \neq s_j$. In the case $s_1=s_2$, since $\sum_\mbf{k} e^{i\mbf{k}\cdot\mbf{\Delta q}} = 0$, the expectation values both vanish for all $\mbf{\Delta q}$, and their difference is 0. 
In the other case $s_1 \neq s_2$ the integrand $\hat{h}_\mbf{k}(s_i, s_j)$ is $2\pi$-periodic, and it is analytic in the gapped phase. 
For a $2\pi$-periodic function $f(x)$, its Fourier transformation $\int_0^{2\pi}f(x+\delta)e^{i(x+\delta)q}\mathrm{d}x = \int_0^{2\pi}f(x)e^{ixq}\mathrm{d}x$ for arbitrary real constant $\delta$, so we can apply Epstein's theorem to bound the error independent of $\ell$, and we find
	\begin{align}
		\abs{\expect{K}_{\phi_\ell} - \expect{K}_{\phi_{\ell'}}} &\leq 2\left( \frac{2\tilde{Q} \mathrm{e}^{-a\left(\frac{N-1}{2}\right)}}{\mathrm{e}^{a} - 1} + \frac{2\tilde{R} \mathrm{e}^{-b\left(\frac{N-1}{2}\right)}}{\mathrm{e}^{b} - 1} \right) \\
		&\leq 2\left( \frac{2(\tilde{Q}+\tilde{R})}{\mathrm{e}^{\min\{a, b\}} - 1} \right) \mathrm{e}^{-\min\{a, b\}\left(\frac{N-1}{2}\right)}\,,\label{E:stringboundnoconstants}
	\end{align}
where, similarly to the proof of \Thref{Th:qsdegeneracy}, we define
	\begin{align}
		\tilde{Q} = \max_{y} \tilde{Q}(y)\,, \quad a = \min_{y} a(y)\,, \quad \tilde{R} = \max_{x} \tilde{R}(x) \,, \quad b = \min_{x} b(x) \,,
	\end{align}
and where the implicit function dependence is on $\hat{h}_\mbf{k}(s_i, s_j)$. Note that we have chosen to maximize $\tilde{Q}(y)$ and $\tilde{R}(y)$ now because this specific expression does not obviously enjoy the same monotonicity properties that allowed us to use a minimum in the proof of \Thref{Th:qsdegeneracy}. The constants $a$ and $b$ turn out to be identical to those in the approximate degeneracy bound, since the point of nonanalyticity is still determined by the energy. A similar calculation as in the proof of \Lref{L:expConstants} then shows that $\tilde{R}, \tilde{Q} \le \sqrt{2}$. We omit the details of these calculations, and the lemma then follows straightforwardly from \Eref{E:stringboundnoconstants} with these values for the constants.
\end{proof}

\begin{lemma}[General link operator indistinguishability]
\label{L:many_string_case}
Let $K \in \langle \mathcal{K}(A) \rangle$ be a link operator with endpoints $\mbf{Q}_K = \left\{(\mbf{q}_m,s_m)\right\}_{m=1}^{2M}$, $\mbf{\Delta q}_{mn} = \mbf{q}_m - \mbf{q}_n$, and $\phi_\ell$, $\phi_{\ell'}$ be any pair of common eigenstates of $L_x$ and $L_y$ in $\mathcal{C}_{\mathrm{khm}}$ with coupling constants as in \Eref{eq:couplingconstants}.
Then for $\max_{m,n}\|\mbf{\Delta q}_{mn}\|_\infty < \tfrac{N-1}{2}$
\begin{align}
	\abs{\expect{K}_{\phi_\ell} - \expect{K}_{\phi_{\ell'}}} \leq c\, \mathrm{e}^{-aN/2 + M\ln M}\,,
\end{align}
where $c = \tfrac{4 \sqrt{2}}{\sinh(a/2)}$ and $a = \ln\bigl(\tfrac{J_z-J_y}{J_x}\bigr)$
\end{lemma}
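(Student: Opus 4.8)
The plan is to bootstrap \Lref{L:single_string_case} to the multi-endpoint case via Wick's theorem and a telescoping (hybrid) argument. Recall from \Eref{E:wickthm} that, up to an overall $\ell$-independent phase, $\expect{K}_{\phi_\ell}$ is a signed sum over involutive derangements of the $2M$ endpoints, and that by \Eref{E:FourierComp_hk} a term vanishes unless its pairing matches every $\bullet$-type endpoint with a $\circ$-type endpoint. Hence there are exactly $M$ endpoints of each type, at most $M!$ surviving pairings $\tilde{\mu}$, and this set of surviving pairings is the same for $\phi_\ell$ and $\phi_{\ell'}$. Each surviving term is a product of $M$ two-point functions $\expect{\nu_i\nu_j}_{\phi_\ell}$ with $s_i\neq s_j$; by \Eref{E:FourierTransf_hk} together with $\abs{\hat{h}_{\mbf{k}}(s_i,s_j)}=1$, each such factor has modulus at most $1$ in either codestate.

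Next I would record the two per-factor estimates that drive the argument. As established in the proof of \Lref{L:single_string_case} (where, for a string-like operator, $\expect{K}_{\phi_\ell}$ is exactly one such two-point function up to an $\ell$-independent phase), for every $i,j$ with $s_i\neq s_j$ and $\|\mbf{\Delta q}_{ij}\|_\infty\leq\tfrac{N-1}{2}$ we have $\abs{\expect{\nu_i\nu_j}_{\phi_\ell}-\expect{\nu_i\nu_j}_{\phi_{\ell'}}}\leq c\,\e^{-aN/2}$ with $c=\tfrac{4\sqrt{2}}{\sinh(a/2)}$; and, as just noted, $\abs{\expect{\nu_i\nu_j}_{\phi_\ell}}\leq 1$. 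The hypothesis $\max_{m,n}\|\mbf{\Delta q}_{mn}\|_\infty<\tfrac{N-1}{2}$ guarantees the first estimate applies to every pair appearing in every surviving term.

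Then, for each fixed surviving pairing, I would apply the telescoping identity $\prod_{m}a_m-\prod_{m}b_m=\sum_{k}a_1\cdots a_{k-1}(a_k-b_k)b_{k+1}\cdots b_M$, with $a_m$ the $M$ two-point functions evaluated in $\phi_\ell$ and $b_m$ those in $\phi_{\ell'}$. Since every factor has modulus at most $1$ and every factor-difference is at most $c\,\e^{-aN/2}$, a single pairing contributes at most $M\,c\,\e^{-aN/2}$ to $\abs{\expect{K}_{\phi_\ell}-\expect{K}_{\phi_{\ell'}}}$. Because the signs $\mathrm{sgn}(\pi_{\tilde{\mu}})$ and the overall phase are common to both codestates, summing over the at most $M!$ surviving pairings gives $\abs{\expect{K}_{\phi_\ell}-\expect{K}_{\phi_{\ell'}}}\leq M\cdot M!\cdot c\,\e^{-aN/2}$. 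Finally $M!=\prod_{j=2}^{M}j\leq M^{M-1}$, so $M\cdot M!\leq M^{M}=\e^{M\ln M}$, which is exactly the claimed bound.

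I do not expect a substantive obstacle: the entire content is running the hybrid argument without losing the exponential, which works precisely because the per-factor modulus bound, the per-factor difference bound, and the set of surviving pairings are all uniform in $\ell$, and because the combinatorial overhead $M\cdot M!$ is no worse than $\e^{M\ln M}$. The only point demanding a little care is confirming $\abs{\hat{h}_{\mbf{k}}(s_i,s_j)}=1$ for $s_i\neq s_j$ (so that the surviving two-point functions are genuinely bounded by $1$), which follows from $\xi_{\mbf{k}}$ being real and $\Delta_{\mbf{k}}$ imaginary together with $E_{\mbf{k}}^2=\xi_{\mbf{k}}^2+\abs{\Delta_{\mbf{k}}}^2$, so that $\abs{\pm\xi_{\mbf{k}}-\Delta_{\mbf{k}}}^2=E_{\mbf{k}}^2$ in \Eref{E:FourierComp_hk}.
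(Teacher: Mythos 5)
Your proposal is correct and follows essentially the same route as the paper's proof: Wick expansion restricted to the at most $M!$ surviving pairings, the telescoping identity with the uniform modulus bound $\abs{\expect{\nu_i\nu_j}}\leq 1$ and the per-pair difference bound from \Lref{L:single_string_case}, and finally $M\cdot M!\leq \e^{M\ln M}$. Your explicit verification that $\abs{\hat{h}_{\mbf{k}}(s_i,s_j)}=1$ for $s_i\neq s_j$ is a detail the paper asserts without proof, but otherwise the arguments coincide.
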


\begin{proof}
Given the general form of $K$ in \Eref{E:wickthm}, we first use the triangle inequality to obtain
\begin{align}
	  \abs{\expect{K}_{\phi_\ell}- \expect{K}_{\phi_{\ell'}}} 
	  \le \sum_{\tilde{\mu}} \abs{\prod_{m<\tilde{\mu}(m)} \expect{\nu_{m}\nu_{\tilde{\mu}(m)}}_{\phi_\ell} - \prod_{m'<\tilde{\mu}(m')} \expect{\nu_{m'}\nu_{\tilde{\mu}(m')}}_{\phi_{\ell'}} }\,.
\end{align}
Then we use the following identity of a telescoping sum and again apply the triangle inequality, 
\begin{align}
	\prod_{m=1}^M A_m - \prod_{m=1}^M B_m &= \sum_{m=1}^M\left(\prod_{k=1}^{m-1}A_k\right)(A_m - B_m)\left(\prod_{l=m+1}^{M}B_l\right) \notag \\
	&\le \sum_{m=1}^M \prod_{k=1}^{m-1}\abs{A_k} \cdot \abs{A_m - B_m} \cdot \prod_{l=m+1}^{M}\abs{B_l}\,,
\end{align}
where by convention $\prod_{k=1}^{0} = \prod_{l=M+1}^{M} \equiv 1$. 
Assuming that $\abs{A_k}, \abs{B_k} \le 1$, we can write an even simpler inequality,
\begin{align}
	\prod_{m=1}^M A_m - \prod_{m=1}^M B_m \le \sum_{m=1}^M \abs{A_m - B_m} \le M \max_m \abs{A_m - B_m} \,.
\end{align}

Now recall that $\expect{\nu_i\nu_j}_{\phi_\ell}$ is the DFT of $\hat{h}_\mbf{k}$, and $\bigl|\hat{h}_\mbf{k}\bigr| = 1$ in \Eref{E:FourierComp_hk}, so we have $\bigl|\expect{\nu_i\nu_j}_{\phi_\ell}\bigr| \leq 1$.  Moreover, from \Lref{L:single_string_case} we have a uniform bound for the difference of a single pair from the string-like case (given the size restriction $\|\mbf{\Delta q}\|_\infty < \tfrac{N-1}{2}$), namely 
\begin{align}
	\Bigl| \expect{\nu_{m}\nu_{\mu(m)}}_{\phi_\ell} - \expect{\nu_{m}\nu_{\mu(m)}}_{\phi_{\ell'}} \Bigr| \le c\, \mathrm{e}^{-a N/2} \,.
\end{align}
Putting these ingredients together, we find that
\begin{align}
	  \abs{\expect{K}_{\phi_\ell}- \expect{K}_{\phi_{\ell'}}} \le & \sum_{\tilde{\mu}} M c\, \mathrm{e}^{-a N/2} = M! \cdot M c\, \mathrm{e}^{-a N/2} \le c\, \mathrm{e}^{-a N/2 + M\ln M}\,,
\end{align}
where we have used the fact that $M\cdot M! \leq M^M = \mathrm{e}^{M\ln M}$.
\end{proof}

\begin{theorem}[Arbitrary local operator indistinguishability]
\label{T:arbitrary_operator}
Let $A$ be a simply connected region with volume $\abs{A} \le \frac{a N/4}{\ln(aN/4+1)}$. Then with coupling constants as in \Eref{eq:couplingconstants}, the code space $\mathcal{C}_{\mathrm{khm}}$ is $\Delta$-approximately indistinguishable on all such regions $A$, i.e.\ 
\begin{align}
	\sup_{\psi_1,\psi_2\in \mathcal{C}_{\mathrm{khm}}}\abs{\expect{O_\sub{A}}_{\psi_1}-\expect{O_\sub{A}}_{\psi_2}}\leq \Delta(N)\cdot\|O_\sub{A}\|  \,,
\end{align}
for all operators $O_\sub{A}$ whose support is contained in $A$, and with decay function given by
\begin{align}
\label{eq:deltascaling}
	\Delta(N) = c\, \mathrm{e}^{-a N/4}\,.
\end{align} 
Here the constants can be chosen as $c = \tfrac{4 \sqrt{2}}{\sinh(a/2)}$ and $a = \ln\bigl(\tfrac{J_z-J_y}{J_x}\bigr)$.
\end{theorem}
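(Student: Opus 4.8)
The plan is to expand an arbitrary $O_\sub{A}$ in the Pauli basis, discard everything that cannot distinguish code states, apply \Lref{L:many_string_case} to what remains, and check that the stated volume cutoff is exactly sharp enough to absorb the combinatorial overhead. First I would write $O_\sub{A}=\sum_P \alpha_P\,P$ with $P$ ranging over Paulis supported in $A$; since $A$ carries $\abs{A}$ qubits, $\alpha_P = 2^{-\abs{A}}\Tr(P\,O_\sub{A})$ so $\abs{\alpha_P}\le\|O_\sub{A}\|$. By \Lref{L:LocalIndAntiCommWq} every $P\notin\mathcal{Z}(\mathcal{W}(\Lambda))$ has $\expect{P}_\psi=0$ in every code state and drops out of $\expect{O_\sub{A}}_{\psi_1}-\expect{O_\sub{A}}_{\psi_2}$, and by \Lref{L:linkGenSet} the surviving Paulis are precisely the members of the group $\langle\mathcal{K}(A)\rangle$, of which there are at most $2^{\frac{3}{2}\abs{A}}$. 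The triangle inequality then gives
\begin{align}
	\abs{\expect{O_\sub{A}}_{\psi_1}-\expect{O_\sub{A}}_{\psi_2}} \le \|O_\sub{A}\|\sum_{K\in\langle\mathcal{K}(A)\rangle}\abs{\expect{K}_{\psi_1}-\expect{K}_{\psi_2}}\,.
\end{align}

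Next I would estimate each term with \Lref{L:many_string_case}. A link operator $K$ supported in $A$ has $2M$ endpoints, all of them distinct qubits of $A$, so $M\le\abs{A}/2$; also the admissible regions have linear extent $O(\abs{A})$, far below $\tfrac{N-1}{2}$, so the distance hypothesis $\max_{m,n}\|\mbf{\Delta q}_{mn}\|_\infty<\tfrac{N-1}{2}$ is met. Endpoint-free link operators ($M=0$) fermionize to a scalar via \Eref{E:fermionizedString}, hence have an $\ell$-independent expectation and contribute nothing; for $M\ge 1$ the lemma yields $\abs{\expect{K}_{\psi_1}-\expect{K}_{\psi_2}}\le c\,\e^{-aN/2+M\ln M}\le c\,\e^{-aN/2}\e^{\frac{\abs{A}}{2}\ln(\abs{A}/2)}$ with $c=\tfrac{4\sqrt2}{\sinh(a/2)}$. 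Bounding the number of terms by $2^{\frac{3}{2}\abs{A}}$,
\begin{align}
	\abs{\expect{O_\sub{A}}_{\psi_1}-\expect{O_\sub{A}}_{\psi_2}} \le c\,\|O_\sub{A}\|\,\e^{-aN/2}\exp\!\Bigl(\tfrac{3}{2}\abs{A}\ln 2+\tfrac{\abs{A}}{2}\ln\tfrac{\abs{A}}{2}\Bigr) = c\,\|O_\sub{A}\|\,\e^{-aN/2}\,\e^{\frac{\abs{A}}{2}\ln(4\abs{A})}\,.
\end{align}

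It then remains to show that the overhead exponent is at most $aN/4$, i.e.\ $\abs{A}\ln(4\abs{A})\le aN/2$ under the hypothesis $\abs{A}\le\tfrac{aN/4}{\ln(aN/4+1)}$. Writing $T=aN/4$ and using that $v\mapsto v\ln(4v)$ is increasing for $v\ge 1$, the hypothesis reduces this to $\tfrac{\ln(4T)}{\ln(T+1)}\le 2$, i.e.\ $4T\le(T+1)^2$, i.e.\ $(T-1)^2\ge 0$, which always holds (for $N$ so small that this last step needs care, only single-qubit regions are admissible and the bound is trivial). Therefore the overhead cancels exactly half of the decay rate and $\abs{\expect{O_\sub{A}}_{\psi_1}-\expect{O_\sub{A}}_{\psi_2}}\le c\,\e^{-aN/4}\|O_\sub{A}\|$, which is \Eref{eq:deltascaling}. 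I expect this last step to be the only genuine obstacle: \Lref{L:many_string_case} decays only like $\e^{-aN/2}$, while one pays an $\e^{\Theta(\abs{A}\log\abs{A})}$ penalty both from the number of link operators in $A$ and from the $M!$ Wick pairings, so the cutoff $\abs{A}\lesssim N/\log N$ has to be tuned precisely to leave a clean exponential in $N$ (at half the rate of \Lref{L:many_string_case}); the rest is just assembling \Lref{L:LocalIndAntiCommWq}, \Lref{L:linkGenSet}, and \Lref{L:many_string_case}.
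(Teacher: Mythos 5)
Your proposal follows essentially the same route as the paper: kill the Paulis outside $\mathcal{Z}(\mathcal{W}(\Lambda))$ with \Lref{L:LocalIndAntiCommWq}, identify the survivors with $\langle\mathcal{K}(A)\rangle$ via \Lref{L:linkGenSet}, apply \Lref{L:many_string_case} term by term, and let the volume hypothesis absorb the $\e^{\Theta(\abs{A}\log\abs{A})}$ overhead. Your bookkeeping differs slightly --- you bound each coefficient by $\|O_\sub{A}\|$ and multiply by the count $2^{\frac{3}{2}\abs{A}}$, whereas the paper uses H\"older plus the Frobenius-norm relation to get $\|r\|_1\le 2^{3\abs{A}/4}\|O_\sub{A}\|$ --- but your lossier count still lands inside the stated cutoff, and your verification that $\abs{A}\ln(4\abs{A})\le aN/2$ follows from $\abs{A}\le \tfrac{aN/4}{\ln(aN/4+1)}$ is sound (the small-$N$ caveat is moot since admissible regions built from whole plaquettes have $\abs{A}\ge 6$).

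There is one genuine gap: you apply \Lref{L:many_string_case} directly to arbitrary $\psi_1,\psi_2\in\mathcal{C}_{\mathrm{khm}}$, but that lemma is stated only for the code \emph{basis} states $\phi_\ell$, i.e.\ common eigenstates of $L_x$ and $L_y$. For a general superposition $\psi=\sum_\ell c_\ell\phi_\ell$ the quantity $\expect{K}_{\psi}$ a priori contains off-diagonal matrix elements $\langle\phi_\ell|K|\phi_{\ell'}\rangle$ that the lemma says nothing about. The missing observation (which the paper supplies) is that every $K\in\langle\mathcal{K}(A)\rangle$ commutes with $L_x$ and $L_y$, so these off-diagonal elements vanish, $\expect{K}_\psi$ is a convex combination of the $\expect{K}_{\phi_\ell}$, and the supremum over the code space is attained at a pair of basis states. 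This is a one-line fix, but without it the final inequality is not licensed for the full supremum claimed in the theorem.
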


For any fixed values of the coupling constants $J_{\alpha}$, this theorem shows that the honeycomb codespace satisfies $\Delta$-approximate local indistinguishability at any length scale $N^* \le \mathcal{O}\bigl(\sqrt{N/\ln N}\bigr)$, corresponding to any $\beta < 1/2$ in \Dref{D:Apprxlocalindist}, and so is approximately locally indistinguishable. We remark that the proof below does not depend on the geometry of the regions considered, only on their volume and topology (simply connected).

\begin{proof}
A general operator supported on $A$ will consist of linear combinations of Pauli operators that either commute or anticommute with the plaquette operators. 
Operators that anticommute are always trivial in expectation by \Lref{L:LocalIndAntiCommWq}, so we can expand any nontrivial part of a general operator in terms of $K_i\in \langle \mathcal{K}(A) \rangle$ only by \Lref{L:linkGenSet}. 
Write a general operator supported on the $K_i$ as $O_\sub{A} = \sum_i r_i K_i$. 

Next note that the operators $L_x$ and $L_y$ commute with the entire link group $\mathcal{K}(A)$. 
Therefore any observable supported only on elements of the link group breaks into a direct sum labeled by the eigenvalues of the logical operators. 
In particular, any off-diagonal element of the form $\langle \phi_{\ell}| O_\sub{A} |\phi_{\ell'}\rangle$ must vanish for code basis states $\phi_{\ell} \not = \phi_{\ell'}$ (since they are simultaneous eigenstates of $L_x$ and $L_y$). 

Now consider an arbitrary (potentially mixed) state $\rho$ supported only on $\mathcal{C}_\mathrm{khm}$. 
The expectation value of an operator $O_\sub{A}$ supported on $\mathcal{K}(A)$ is given by $\expect{O_\sub{A}}_{\rho} = \Tr(O_\sub{A}\rho)$. 
From the previous paragraph, we know that $O_\sub{A}$ can be chosen to be diagonal in the basis of states $\{\phi_\ell\}$. 
Therefore by convexity and the variational characterization of eigenvalues, we have
\begin{align}
	\min_{\phi_\ell} \expect{O_\sub{A}}_{\phi_\ell} \leq \expect{O_\sub{A}}_{\rho\in \mathcal{C}_\mathrm{khm}} \leq \max_{\phi_\ell} \expect{O_\sub{A}}_{\phi_\ell}\,.
\end{align}
This implies that the difference between expectation values of any $\rho_1$ and $\rho_2$ in $\mathcal{C}_\mathrm{khm}$ is bounded by
\begin{align}
	\abs{\expect{O_\sub{A}}_{\rho_1}- \expect{O_\sub{A}}_{\rho_2}} \leq \max_{\phi_\ell,\phi_{\ell'}}\abs{\expect{O_\sub{A}}_{\phi_\ell}- \expect{O_\sub{A}}_{\phi_{\ell'}}}	\,.
\end{align}
It then suffices to prove the bound only for these eigenstates.

Using the expansion into Pauli operators inside $A$, we apply H\"{o}lder's inequality to obtain
\begin{align}
	\abs{\expect{O_\sub{A}}_{\phi_\ell} - \expect{O_\sub{A}}_{\phi_{\ell'}}} 
	= \abs{\sum_i r_i \bigl( \expect{K_i}_{\phi_{\ell}} - \expect{K_i}_{\phi_{\ell'}}\bigr)}
	\leq  \|r\|_1 \max_i \abs{\expect{K_i}_{\phi_{\ell}} - \expect{K_i}_{\phi_{\ell'}}} \,.
\end{align}
Since the $K_i$ are all unitary trace-orthogonal Paulis acting on a $2^{2N^2}$-dimensional space, we can bound the 2-norm of the coefficient vector $r$ as follows, 
\begin{align}
	\|O_\sub{A}\|_{\mathrm{F}} =\Tr\bigl(O_\sub{A}^\dagger O_\sub{A}\bigr)^{1/2} = 2^{N^2} \|r\|_2 \le 2^{N^2} \|O_\sub{A}\| \,,
\end{align}
where $\|O_\sub{A}\|_{\mathrm{F}}$ is Frobenius norm. 
Because there are at most $2^{3\abs{A}/2}$ linearly independent $K_i$ by \Lref{L:linkGenSet}, we therefore have the bound on the 1-norm
\begin{align}
	\|r\|_1 \le 2^{3\abs{A}/4} \|r\|_2 \le 2^{3\abs{A}/4}  \|O_\sub{A}\| \,.
\end{align}

From the bound in \Lref{L:many_string_case}, and the fact that a region with $\abs{A}$ spins has at most $2M \le \abs{A}$ string endpoints, we obtain
\begin{align}
	\abs{\expect{O_\sub{A}}_{\phi_\ell} - \expect{O_\sub{A}}_{\phi_{\ell'}}} \leq  2^{3 \abs{A}/4}  \|O_\sub{A}\| \frac{4 \sqrt{2}}{\sinh(a/2)} \mathrm{e}^{-a N/2 + \tfrac{\abs{A}}{2} \ln \tfrac{\abs{A}}{2}} \,,
\end{align}
We have $\abs{A} \ge 6$ since we only consider simply connected plaquettes and the smallest size region must contain at least one plaquette. Then we have the inequality
$2^{3 \abs{A}/4}  \le \mathrm{e}^{ \tfrac{\abs{A}}{2} \ln \tfrac{\abs{A}}{2}}$ and the bound simplifies,
\begin{align}
	\abs{\expect{O_\sub{A}}_{\phi_\ell} - \expect{O_\sub{A}}_{\phi_{\ell'}}} \leq  \|O_\sub{A}\| \frac{4 \sqrt{2}}{\sinh(a/2)}\mathrm{e}^{-a N/2 + \abs{A} \ln \abs{A}} \,.
\end{align}
To achieve our desired decay function with decay rate $a/4$, we require that the region size $\abs{A}$ satisfies $\abs{A} \ln \abs{A} \le \frac{a N}{4}$. 
It is straightforward to show that choosing
\begin{align}
	\abs{A} \le \frac{a N/4}{\ln(aN/4+1)} \,
\end{align}
gives the desired result for all $a > 0$.
\end{proof}

For some applications, a slightly different form of this result is desirable that emphasizes the projection into the code space rather than differences of expectation values. 
We state here an immediate corollary of \Thref{T:arbitrary_operator} that is more convenient in these cases. 
\begin{corollary}\label{c:topoorder}
Under the same conditions as \Thref{T:arbitrary_operator}, and with $\Pi$ the projector onto the code space $\mathcal{C}_{\mathrm{khm}}$, we have
\begin{align}\label{e:cortopoorder}
	\Bigl\| \Pi O_\sub{A} \Pi - \tfrac{\Tr(\Pi O_\sub{A})}{\Tr \Pi} \Pi \Bigr\| \le \tfrac{3}{4} \Delta(N) \|O_\sub{A}\| \,.
\end{align}
\end{corollary}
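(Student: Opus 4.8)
The plan is to translate the expectation-value formulation of \Thref{T:arbitrary_operator} into the operator-norm statement by working entirely inside the four-dimensional code space $\mathcal{C}_{\mathrm{khm}}$. Write $M = \Pi O_\sub{A}\Pi$ and $\bar m = \tfrac{\Tr(\Pi O_\sub{A})}{\Tr \Pi}$. Since $\Pi^2 = \Pi$ we have $\Tr(\Pi O_\sub{A}) = \Tr(\Pi O_\sub{A}\Pi) = \Tr M$, and since $\Tr \Pi$ equals the code dimension $4$ we get $\bar m = \tfrac14 \Tr M$. Thus the operator $\tilde M \equiv M - \bar m \Pi$ is supported on $\mathcal{C}_{\mathrm{khm}}$ and traceless there, and the goal is to show $\|\tilde M\| \le \tfrac34 \Delta(N)\|O_\sub{A}\|$; the factor $\tfrac34$ will emerge as $\tfrac{d-1}{d}$ with $d = \dim \mathcal{C}_{\mathrm{khm}} = 4$.

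I would carry this out for Hermitian $O_\sub{A}$, which is the case relevant to the applications of the corollary and is the setting of \Dref{D:Apprxlocalindist}. Then $\bar m$ is real, $\tilde M$ is Hermitian, and $\|\tilde M\| = \max_{\ket{\phi}} \abs{\expect{\tilde M}_{\phi}}$ over unit vectors $\ket{\phi} \in \mathcal{C}_{\mathrm{khm}}$, attained at some eigenvector $\ket{\phi_1}$. First I would complete $\ket{\phi_1}$ to an orthonormal basis $\{\ket{\phi_1},\dots,\ket{\phi_4}\}$ of $\mathcal{C}_{\mathrm{khm}}$; since each $\ket{\phi_j}$ lies in the code space, $\Pi\ket{\phi_j} = \ket{\phi_j}$ and $\bra{\phi_j} M \ket{\phi_j} = \expect{O_\sub{A}}_{\phi_j}$, so $\Tr M = \sum_{j=1}^4 \expect{O_\sub{A}}_{\phi_j}$. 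Hence
\begin{align}
	\expect{\tilde M}_{\phi_1} = \expect{O_\sub{A}}_{\phi_1} - \tfrac14 \sum_{j=1}^4 \expect{O_\sub{A}}_{\phi_j} = \tfrac14 \sum_{j=2}^4 \bigl( \expect{O_\sub{A}}_{\phi_1} - \expect{O_\sub{A}}_{\phi_j} \bigr) \,. \notag
\end{align}
Each of the three summands has absolute value at most $\Delta(N)\|O_\sub{A}\|$ by \Thref{T:arbitrary_operator} applied directly to the code-space pure states $\phi_1$ and $\phi_j$ (the hypothesis $\abs{A} \le \tfrac{aN/4}{\ln(aN/4+1)}$ is inherited, so the theorem applies), and the triangle inequality then gives $\abs{\expect{\tilde M}_{\phi_1}} \le \tfrac34 \Delta(N)\|O_\sub{A}\|$, which is \Eref{e:cortopoorder}. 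For a general (non-Hermitian) $O_\sub{A}$ one can split it into Hermitian and anti-Hermitian parts at the cost of a constant factor, or additionally bound off-diagonal matrix elements $\bra{\phi_i} O_\sub{A} \ket{\phi_j}$ via the polarization identity using \Thref{T:arbitrary_operator} on the states $\tfrac{1}{\sqrt2}(\ket{\phi_i} \pm \ket{\phi_j})$ and $\tfrac{1}{\sqrt2}(\ket{\phi_i} \pm i\ket{\phi_j})$.

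There is no substantive obstacle here; the corollary is a short repackaging of \Thref{T:arbitrary_operator}. The only points that need care are (i) getting the exact constant — using that $\Tr \Pi$ is the code dimension $4$ is precisely what turns the naive bound $\Delta(N)\|O_\sub{A}\|$ into $\tfrac{d-1}{d}\Delta(N)\|O_\sub{A}\| = \tfrac34 \Delta(N)\|O_\sub{A}\|$ — and (ii) the passage to the eigenvector of $\tilde M$, which is why one wants $\tilde M$ Hermitian, i.e.\ why the natural setting is that of Hermitian observables $O_\sub{A}$.
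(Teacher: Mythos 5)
Your argument for Hermitian $O_\sub{A}$ is correct and recovers the constant by the same $\tfrac{d-1}{d}$ mechanism as the paper, but it misses the structural fact that the paper's one-line proof actually rests on: as established in the proof of \Thref{T:arbitrary_operator}, $\Pi O_\sub{A}\Pi$ is automatically diagonal in the code basis $\{\phi_\ell\}$ for \emph{any} operator supported in $A$ (the Paulis anticommuting with some $W_{\mbf{q}}$ map $\mathcal{C}_{\mathrm{khm}}$ into its orthogonal complement, so $\Pi P\Pi=0$ for those, and the link-operator part commutes with $L_x,L_y$, so its off-diagonal matrix elements $\langle\phi_\ell|O_\sub{A}|\phi_{\ell'}\rangle$ vanish). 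Granting this codiagonality, the norm of $\Pi O_\sub{A}\Pi-\tfrac{\Tr(\Pi O_\sub{A})}{4}\Pi$ is just $\max_\ell\bigl|\tfrac14\sum_{\ell'\neq\ell}(\expect{O_\sub{A}}_{\phi_\ell}-\expect{O_\sub{A}}_{\phi_{\ell'}})\bigr|\le\tfrac34\Delta(N)\|O_\sub{A}\|$, Hermitian or not; your eigenbasis construction is an equivalent route in the Hermitian case but is not needed.

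The gap is in your treatment of general $O_\sub{A}$, which matters because the corollary is invoked in \Thref{T:localcorrect} on the non-Hermitian operators $O_{ij}=(F^A_i)^{\dagger}F^A_j$ with the constant $\tfrac34$ carried into the final bound. Splitting into Hermitian and anti-Hermitian parts degrades the bound to $\tfrac32\Delta(N)\|O_\sub{A}\|$, and the polarization route bounds each off-diagonal matrix element only up to further constant losses and then still requires a norm-from-entries estimate on a $4\times4$ block, which cannot recover $\tfrac34$. The fix is simply to note that those off-diagonal elements are identically zero by the codiagonality above, after which your diagonal estimate applies verbatim to the (possibly complex) diagonal entries $\expect{O_\sub{A}}_{\phi_\ell}$, since \Thref{T:arbitrary_operator} is stated for all operators supported in $A$.
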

\begin{proof}
Since $\Pi O_\sub{A} \Pi$ is codiagonal with $\Pi$, the proof follows immediately from \Thref{T:arbitrary_operator} and the triangle inequality, noting only that $\Tr \Pi = 4$.
\end{proof}
		
	\subsection{Consequences for error correction}\label{s:distance}

In \Sref{s:asymptotics}, local indistinguishability was derived as a consequence of the code properties of the honeycomb model. These two notions are closely connected, and here we will illustrate this by using the indistinguishability results obtained in the previous section to derive consequences for the error correction properties of the honeycomb code. 

However, here it will be convenient to use a slightly different formulation of approximate codes to that used previously. In \Sref{s:asymptotics}, we used a notion of approximate codes due to Flammia et.~al.~\cite{Flammia2016} suited to the topological arguments of that section. This formalism restricts to codes with some notion of geometric locality, and treats erasure errors preferably. In contrast, we will now consider a slightly more abstract formulation of approximate codes due to Beny \& Oreshkov~\cite{Beny2010}. In this formalism, approximate correctability is defined as follows:

\begin{definition}\label{d:epscorrect}
A code defined by an encoding map $\mathcal{E}$ is called \emph{$\epsilon$-correctable under a noise channel $\mathcal{N}$} if there exists a recovery map $\mathcal{R}$ such that $d(\mathcal{RNE},I)\leq \epsilon$, where
\[
d(\mathcal{M},\mathcal{N}) = \max_{\rho} \sqrt{1-f((\mathcal{N}\otimes I)(\psi),(\mathcal{M}\otimes I)(\psi))}
\]
is the Bures distance based on the entanglement fidelity, with $\psi$ a purification of $\rho$ and $f(\rho,\sigma) = \Tr\sqrt{\sqrt{\sigma}\rho\sqrt{\sigma}}$ the fidelity.
\end{definition}

Ref.~\cite{Beny2010} also provides a useful characterization of codes that are approximately correctable: 
\begin{theorem}[Approximate code correctability~{\cite[Cor.~2]{Beny2010}}]
A code space $\mathcal{C}$ is $\epsilon$-correctable under a noise channel with elements $\{F_i\}$ iff
\begin{align}
\Pi_{\mathcal{C}}F_i^{\dagger}F_j\Pi_{\mathcal{C}} = g_{ij}\Pi_{\mathcal{C}}+\Pi_{\mathcal{C}}T_{ij}\Pi_{\mathcal{C}}
\end{align}
for $\Pi_{\mathcal{C}}$ the projector to $\mathcal{C}$, $T_{ij}$ arbitrary matrices, $g_{ij}$ the components of a density operator, and with $d(\Gamma+\mathcal{T},\Gamma)\leq \epsilon$, where $\Gamma(\rho)=\Tr(\rho)\sum_{ij}g_{ij}\ketbra{i}{j}$ and $\mathcal{T}(\rho) = \sum_{ij}\Tr(\rho T_{ij})\ketbra{i}{j}$.
\end{theorem}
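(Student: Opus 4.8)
The plan is to derive this as a corollary of a more fundamental duality: correctability of the encoded noise channel is equivalent to the near-\emph{forgetfulness} of its complementary (environment) channel. Let $V$ denote the encoding isometry with image $\mathcal{C}$, so that $\Pi_{\mathcal{C}}=VV^{\dagger}$ and the encoded noise channel $\mathcal{N}\circ\mathcal{E}$ has Kraus operators $\{F_i V\}$. Its complementary channel---the channel to the purifying environment $E$ of a Stinespring dilation $W:\mathcal{H}_L\to\mathcal{H}_S\otimes\mathcal{H}_E$---acts on logical states as $\hat{\mathcal{N}}_{\mathcal{E}}(\rho)=\Tr_S\bigl(W\rho W^{\dagger}\bigr)$, and its matrix elements are fixed by the compressed operators $\Pi_{\mathcal{C}}F_i^{\dagger}F_j\Pi_{\mathcal{C}}$. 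Writing each of these as its component along $\Pi_{\mathcal{C}}$ plus a remainder, $\Pi_{\mathcal{C}}F_i^{\dagger}F_j\Pi_{\mathcal{C}}=g_{ij}\Pi_{\mathcal{C}}+\Pi_{\mathcal{C}}T_{ij}\Pi_{\mathcal{C}}$ with $g_{ij}=\tfrac{1}{\dim\mathcal{C}}\Tr(\Pi_{\mathcal{C}}F_i^{\dagger}F_j\Pi_{\mathcal{C}})$ and $\Pi_{\mathcal{C}}T_{ij}\Pi_{\mathcal{C}}$ the traceless remainder, is then precisely the decomposition $\hat{\mathcal{N}}_{\mathcal{E}}=\Gamma+\mathcal{T}$ into the forgetful channel $\Gamma(\rho)=\Tr(\rho)\sum_{ij}g_{ij}\ketbra{i}{j}$ and a defect $\mathcal{T}(\rho)=\sum_{ij}\Tr(\rho\,T_{ij})\ketbra{i}{j}$. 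Here $g=\sum_{ij}g_{ij}\ketbra{i}{j}$ is manifestly positive semidefinite (a normalized Gram matrix of the operators $F_i\Pi_{\mathcal{C}}$) and has unit trace because $\sum_i F_i^{\dagger}F_i=I$, so it is a density operator as required. Thus the \emph{form} of the statement is automatic, and the only content is the claim that $\epsilon$-correctability of $\mathcal{C}$ is equivalent to $d(\hat{\mathcal{N}}_{\mathcal{E}},\Gamma)=d(\Gamma+\mathcal{T},\Gamma)\le\epsilon$.

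To establish that equivalence I would argue the two directions separately. For ``only if'', suppose a recovery $\mathcal{R}$ achieves $d(\mathcal{R}\mathcal{N}\mathcal{E},I)\le\epsilon$. Feeding in a logical state maximally entangled with a reference $R$ and dilating, the fact that $\mathcal{R}$ reconstructs the logical information on the system $S$ forces, by the decoupling structure of $W$, the $E$-marginal of $W(\cdot)W^{\dagger}$ to be nearly independent of $R$---i.e.\ $\hat{\mathcal{N}}_{\mathcal{E}}$ to be $\epsilon$-close to \emph{some} forgetful channel---and a short convexity argument then shows one may take that channel to be the canonical $\Gamma$ above (any forgetful channel is fixed by a single output state, and the $g$ obtained by averaging over the maximally mixed logical input is the optimal such state). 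The quantitative version of this step is essentially the data-processing inequality for the Bures distance. For ``if'', suppose $d(\hat{\mathcal{N}}_{\mathcal{E}},\Gamma)\le\epsilon$; by continuity of the Stinespring dilation, $W$ is close to a dilation $W_0$ of a channel whose $E$-output is the fixed state of $\Gamma$, dilations with a fixed $E$-output are perfectly correctable on $S$ by the transpose (Petz) recovery built from that state, and transferring the closeness of $W$ to $W_0$ through this recovery---using monotonicity of fidelity and the triangle inequality for $d$---yields a recovery with error at most $\epsilon$.

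The main obstacle is matching the \emph{constant}: the soft versions of both arguments above leak factors (a square root from generic Stinespring-continuity bounds, or dimension factors from crude decoupling estimates), whereas the corollary as stated asserts the clean ``iff with the same $\epsilon$''. Obtaining this requires the sharp operational identity of Ref.~\cite{Beny2010}, namely that the optimal recovery fidelity $\max_{\mathcal{R}}f(\mathcal{R}\mathcal{N}\mathcal{E},I)$ equals the optimal fidelity $\max_{\Gamma' \text{ forgetful}}f(\hat{\mathcal{N}}_{\mathcal{E}},\Gamma')$ between the complementary channel and a forgetful channel. The way I would prove that identity is to express each fidelity, via Uhlmann's theorem, as an optimization over purifications, observe that the resulting double optimization---over recoveries and over purifying vectors---can have its two maximizations interchanged, and recognize the inner problem as exactly the fidelity of $\hat{\mathcal{N}}_{\mathcal{E}}$ to the set of forgetful channels. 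Once this identity is in place, Corollary~2 follows immediately: $\epsilon$-correctability means $1-f\le\epsilon^2$ on the left, hence on the right, hence $d(\hat{\mathcal{N}}_{\mathcal{E}},\Gamma)\le\epsilon$ for the optimal $\Gamma$, which one finally identifies with the canonical choice of the first paragraph by the same convexity observation, giving the stated matrix decomposition of $\Pi_{\mathcal{C}}F_i^{\dagger}F_j\Pi_{\mathcal{C}}$.
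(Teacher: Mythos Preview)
The paper does not prove this theorem at all: it is quoted verbatim as Corollary~2 of Ref.~\cite{Beny2010} and used as a black box in the proof of \Thref{T:localcorrect}. There is therefore no ``paper's own proof'' to compare against.

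Your sketch is essentially the argument of Ref.~\cite{Beny2010}: recognize $\Gamma+\mathcal{T}$ as the complementary (environment) channel of $\mathcal{N}\circ\mathcal{E}$, and invoke the sharp duality $\max_{\mathcal{R}} f(\mathcal{R}\mathcal{N}\mathcal{E},I)=\max_{\Gamma'\ \text{forgetful}} f(\hat{\mathcal{N}}_{\mathcal{E}},\Gamma')$, from which the iff with the same $\epsilon$ follows immediately. One small point: you should not try to pin down $g_{ij}$ as the ``canonical'' normalized trace $\tfrac{1}{\dim\mathcal{C}}\Tr(\Pi_{\mathcal{C}}F_i^{\dagger}F_j)$. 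The theorem as stated is existential---\emph{some} density operator $g$ and \emph{some} $T_{ij}$---and the optimal forgetful channel need not have output state equal to $\hat{\mathcal{N}}_{\mathcal{E}}$ applied to the maximally mixed input, so your ``short convexity argument'' is not needed and is not obviously true. For the ``only if'' direction you simply take whatever forgetful $\Gamma'$ achieves the maximum, read off its output state as $g$, and set $T_{ij}$ to be the remainder; for the ``if'' direction any decomposition with $d(\Gamma+\mathcal{T},\Gamma)\le\epsilon$ already exhibits a forgetful channel within $\epsilon$, so the duality gives a recovery within $\epsilon$.
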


Using Beny \& Oreshkov's conditions together with the result of \Cref{c:topoorder} yields the following bound on the correctability of the honeycomb code. 
Recall that as in the previous section, we restrict our attention to simply connected regions composed of whole plaquettes. 
This easily extends to arbitrary simply connected regions by expanding the region slightly to complete any partial plaquettes.
\begin{theorem}[Local approximate correctability of the honeycomb code]\label{T:localcorrect}
The honeycomb code space $\mathcal{C}_{\mathrm{khm}}$ is $\epsilon(N)$-correctable under a noise channel with elements supported in any simply connected region $A$ composed of complete plaquettes with volume $|A|\leq \min\{\frac{aN}{8\ln 4},\frac{aN/4}{\ln(aN/4+1)}\}$, where 
\begin{align}
	\epsilon(N) \le  \sqrt{\tfrac{3c}{8}}\e^{-aN/16} \,,
\end{align} 
and where $c = \tfrac{4 \sqrt{2}}{\sinh(a/2)}$ and $a = \ln\bigl(\tfrac{J_z-J_y}{J_x}\bigr)$.
\end{theorem}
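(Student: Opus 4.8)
The plan is to combine Beny \& Oreshkov's algebraic characterization of approximate correctability with the operator-indistinguishability bound of Corollary~\ref{c:topoorder}. First I would take a noise channel $\mathcal{N}$ with Kraus elements $\{F_i\}$ all supported in the region $A$. Each product $F_i^\dagger F_j$ is then also supported in $A$, and since $A$ has volume at most $\frac{aN/4}{\ln(aN/4+1)}$ we are in the regime where Corollary~\ref{c:topoorder} applies to the operator $O_\sub{A} = F_i^\dagger F_j$. The corollary tells us that $\Pi F_i^\dagger F_j \Pi$ is close in operator norm to a multiple of $\Pi$, namely $\Pi F_i^\dagger F_j \Pi = g_{ij}\Pi + \Pi T_{ij}\Pi$ with $g_{ij} = \frac{\Tr(\Pi F_i^\dagger F_j)}{4}$ and $\|\Pi T_{ij}\Pi\| \le \tfrac34 \Delta(N)\|F_i^\dagger F_j\| \le \tfrac34 \Delta(N)$, using $\|F_i^\dagger F_j\|\le 1$ since the $F_i$ are Kraus operators of a channel. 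The matrix $(g_{ij})$ is a density operator: it is positive semidefinite because $(g_{ij}) = \tfrac14 M^\dagger M$ where $M$ has columns $\Pi F_i$ viewed appropriately (equivalently, $g_{ij} = \tfrac14 \Tr(\Pi F_i^\dagger F_j \Pi)$ is a Gram matrix), and it has unit trace because $\sum_i F_i^\dagger F_i = I$ gives $\sum_i g_{ii} = \tfrac14\Tr\Pi = 1$. Thus the Beny--Oreshkov condition is met and it remains only to bound $\epsilon = d(\Gamma+\mathcal{T},\Gamma)$.

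Next I would bound the Bures distance $d(\Gamma+\mathcal{T},\Gamma)$ in terms of $\|\mathcal{T}\|_\diamond$ or a cruder norm of $\mathcal{T}$. The map $\mathcal{T}(\rho) = \sum_{ij}\Tr(\rho T_{ij})\ketbra{i}{j}$ has all its ``coefficients'' controlled: $\|\Pi T_{ij}\Pi\|\le \tfrac34\Delta(N)$ for each of the (at most) $d_A^2$ pairs $(i,j)$, where $d_A \le 2^{|A|}$ is the number of Kraus operators one may take (after Kraus-rank reduction, $d_A^2 \le 2^{2|A|}$ suffices, or one can bound the number of linearly independent $F_i^\dagger F_j$ restricted to the code by $2^{3|A|/2}$ via Lemma~\ref{L:linkGenSet} as in the proof of Theorem~\ref{T:arbitrary_operator}). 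Summing the contributions and using $1 - f \le \tfrac12\|\cdot\|_1$-type bounds on the Bures distance, one gets $\epsilon^2 \lesssim (\text{number of terms}) \cdot \tfrac34\Delta(N)$, so that $\epsilon \le \sqrt{\tfrac{3}{8}\, 2^{O(|A|)}\,\Delta(N)}$; with $\Delta(N) = c\,\e^{-aN/4}$ this is $\sqrt{\tfrac{3c}{8}}\,\e^{-aN/8 + O(|A|)}$.

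Finally I would absorb the $2^{O(|A|)}$ prefactor into the exponential by restricting $|A|$. The extra constraint $|A|\le \frac{aN}{8\ln 4}$ is exactly what ensures $2^{2|A|} = \e^{2|A|\ln 2} \le \e^{aN/4}$ (or whatever power of $2$ the counting produces is dominated by $\e^{aN/8}$), so that the combined exponent becomes $-aN/8 + aN/16 = -aN/16$ after taking the square root, yielding $\epsilon(N) \le \sqrt{\tfrac{3c}{8}}\,\e^{-aN/16}$ as claimed. I expect the main obstacle to be bookkeeping the exact combinatorial factor — how many pairs $(i,j)$ genuinely contribute, and the precise constant relating the Bures distance $d(\Gamma+\mathcal{T},\Gamma)$ to the entrywise operator-norm bounds on $T_{ij}$ — since getting the clean exponent $-aN/16$ requires the factor of $2$ losses to line up with the two constraints on $|A|$ and the final square root; the structural part (verifying the Beny--Oreshkov hypotheses via Corollary~\ref{c:topoorder}) is routine.
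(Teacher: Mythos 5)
Your proposal follows essentially the same route as the paper's proof: apply Corollary~\ref{c:topoorder} to the products $F_i^\dagger F_j$, verify the Beny--Oreshkov conditions, bound the Bures distance via a Fuchs--van de Graaf/trace-norm estimate that picks up a $4^{|A|}$-type dimensional factor, and absorb that factor using the constraint $|A|\le \frac{aN}{8\ln 4}$ (which gives $4^{|A|}\le \e^{aN/8}$, not $\e^{aN/4}$ as written, though your final exponent arithmetic is correct). The only cosmetic differences are that the paper runs the argument over an orthonormal operator basis of size $4^{|A|}$ rather than the channel's Kraus elements, and merely asserts that $\Gamma$ is trace-one and Hermitian where you verify positivity of $(g_{ij})$ explicitly.
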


\begin{proof}
Consider an orthonormal basis for operators supported in $A$ (of size $4^{|A|}$), and take any two basis elements $F^A_i$ and $F^A_j$. Then note that $O_{ij} = \left(F^A_i\right)^{\dagger}F^A_j$ is also supported in $A$, and so satisfies \Eref{e:cortopoorder}. This gives that
\begin{align}
	\Pi \left(F^A_i\right)^{\dagger}F^A_j \Pi = \Pi O_{ij}\Pi &= \tfrac{\Tr(\Pi O_{ij})}{\Tr \Pi} \Pi + \Pi T_{ij} \Pi
\end{align}
for some $T_{ij} = \Pi T_{ij} \Pi$ with $\|T_{ij}\|\leq \tfrac{3}{4} \Delta(N) \|O_{ij}\|\leq \tfrac{3}{4} \Delta(N)$. Setting $g_{ij}=\tfrac{\Tr(\Pi O_{ij})}{\Tr \Pi}$, it can easily be verified that $\Gamma$ is trace-one and Hermitian as required.

Next we relate the fidelity to the trace distance using the Fuchs-van de Graaf inequality~\cite{Fuchs1999}, which says that for any pair of states $\rho$ and $\sigma$, we have
\begin{align}
	1-f(\rho,\sigma) \le \frac{1}{2} \|\rho -\sigma \|_1 \,.
\end{align}
It follows that it suffices to bound
\begin{align}
	d(\Gamma + \mathcal{T},\Gamma)^2 \le \max_\rho \frac{1}{2} \left\|\left((\Gamma+\mathcal{T})\otimes I\right)(\psi) -\left(\Gamma\otimes I\right)(\psi)\right\|_1 &= \max_\rho \frac{1}{2}\left\|\left(\mathcal{T}\otimes I\right)(\psi)\right\|_1 \,,
\end{align}
where $\psi$ is a purification of $\rho$. For any matrix $M$, we have $\|M\|_1 \le \dim (M) \max_{i,j} |M_{i,j}|$, and then using the matrix H\"older inequality it follows that for all $\psi$ 
\begin{align}
	\left\|\left(\mathcal{T}\otimes I\right)(\psi)\right\|_1 &\le \dim(\psi) \max_{i,j,x,y} \Tr(T_{ij}\otimes |y\rangle\!\langle x|\, \psi)\\
	&\le \dim(\rho)^2 \, \|\psi\|_1 \, \|T_{ij}\otimes |y\rangle\!\langle x|\| \\
	&\le 4^{|A|}\, \|T_{ij}\|\,  \| |y\rangle\!\langle x|\| \\
	&\le 4^{|A|} \tfrac{3}{4}\Delta(N)  \,.
\end{align}
Here the second inequality also uses the upper bound on the purification dimension, and subsequent lines use submultiplicativity and \Eref{e:cortopoorder}. 

Putting this together with \Eref{eq:deltascaling}, we find that 
\begin{align}
	d(\Gamma+\mathcal{T},\Gamma) &\leq \sqrt{\tfrac{3c}{8} 4^{|A|} \e^{-aN/4}}\\
	&= \sqrt{\tfrac{3c}{8}\e^{-aN/4+|A|\ln 4}}
\end{align}
Since $|A|\leq \frac{aN}{8\ln(4)}$ by assumption, the result follows directly. \end{proof}

This demonstrates that, as with the approximate degeneracy and approximate local indistinguishability of the honeycomb model, its approximate code properties can also be demonstrated more precisely than was shown using general topological arguments of \Sref{s:asymptotics}. However, the present result is slightly weaker in the sense that it only holds for simply-connected regions $A$. We anticipate that this requirement could be relaxed by modifying the arguments in \Sref{s:approxlocalindistinguish} if desired, but that this would be relatively involved.

\section{Incoherent noise}\label{s:incoherent}

\subsection{Simulation of thermalization processes}\label{s:incoherentmodel}
	
		A general framework for simulating many-body quantum systems interacting with a thermalizing environment is to model the dynamics by a continuous-time Markov process, where we assume that each spin couples to an independent environment (for a general discussion on thermalization in these kinds of systems, see e.g.~Ref.~\cite{Brown2016}). A common form for such dynamics is given by the spin-boson model, which considers errors $F$ that decrease the energy of the system by $\Delta_F$. It assigns rates to such processes as
		\begin{align}	\label{E:gammarate}
			\gamma_F&= \frac{\Delta_F}{1-\e^{-\beta\Delta_F}}
		\end{align}
		for inverse temperature $\beta$. These dynamics have desirable properties such as being motivated by a microscopic model, having the Gibbs state as a fixed state, and satisfying the detailed balance condition. To simulate thermalization using this model, a set of (typically ergodic) allowed errors is chosen, and processes are chosen at random according to their relative rates for a Poisson distributed number of time steps around the desired simulation time. For more details about these standard simulation techniques, see~\cite{Chesi2010a, Brown2016}.
		
		In order for this approach to be well motivated, it is important that an error $F$ will map energy eigenstates to energy eigenstates, so that $\Delta_F$ is well-defined. This is the case for Pauli errors $F$ and stabilizer Hamiltonians, but in general, Hamiltonians that admit local errors $F$ with this property are very special. In particular, this assumption would not be satisfied if we tried to simulate single-site Pauli errors and take the energies as given by the Hamiltonian (\ref{E:KHM}). However, we will work in the perturbative limit, where the effective Hamiltonian (\ref{e:pcutH}) is a good approximation to the true dynamics, up to a close-to-trivial unitary transformation. When restricted to the lowest energy eigenspace of the unperturbed Hamiltonian, this effective Hamiltonian reduces to (\ref{e:loweffectiveham}), a stabilizer Hamiltonian, where the spin-boson simulation methods can easily be applied for local Pauli errors. As such, we will choose errors $F$ to be arbitrary single-qubit Pauli operators. The main subtlety is that these ``single-site'' errors we apply in the effective model actually correspond to quasi-local operators in the physical model, but this does not seem physically unreasonable, and in any case strict locality is restored in the perturbative limit.
		
		By restricting to the lowest energy eigenspace of the unperturbed Hamiltonian, we would be significantly restricting the possible errors that may occur. In some of the regimes we consider, the system will (almost) never leave this eigenspace over the timescales of interest. However, it is also desirable to provide some account of the dynamics of our system outside these special regimes. In order to do this, we note that in a system where the energy spectrum is suitably close to that of a stabilizer Hamiltonian, we might nevertheless expect to approximate the thermalization process on small lattices over relatively short timescales and low energy scales by similar Monte-Carlo methods. 
		
		In particular, if an error $F$ takes Hamiltonian eigenstates to states whose spectral support is mostly contained within a small energy band, we can use the energy expectation value as an approximation to calculate $\Delta_F$. On general grounds, we expect that in the perturbative limit and at low energies, local Pauli operators $F$ will jump between narrow spectral bands in our system. A simulation where $\Delta_F$ is approximated by the difference of energy expectation values will then be reasonable so long as the total number of jumps is also small (i.e.~small timescales and lattice sizes), though we leave a quantitative justification of the accuracy of such a simulation for future work.
		
	\subsection{Error-correction protocols}
	
		In standard topological error correction protocols, the system is imagined to be initialized in a ground state of the Hamiltonian, before undergoing some noise channel, often a model for thermal noise. Local measurements are made on the system to determine an error syndrome, which is then fed into a decoding algorithm that will determine an appropriate recovery operation to return the system to the code space. Our protocol is identical to this setup, except for two small subtleties. The first is that the pseudo-ground space of our model is used as a code space in lieu of a truly degenerate ground space. The second is that we imagine measuring local operators in the effective model of \Eref{e:pcutH}. These operators correspond to quasi-local operators in the original Hamiltonian (\ref{E:KHM}), as opposed to strictly local measurements.
		
		Our decoding algorithm is based on the standard Perfect Matching Algorithm (PMA) decoders for the toric code~\cite{Edmonds65paths,Kolmogorov2009}. These can be applied with only minor modification in our setting. Recall that in the limit $J_x,J_y\ll J_z$, the KHM approaches the toric code Hamiltonian, where each toric code qubit corresponds to the 2-dimensional $K^z_\mbf{q}=+1$ eigenspace at a given dimer of the honeycomb model. Before being able to apply a standard toric code decoding algorithm, our decoder must return the system to the common $K^z_\mbf{q}=+1$ eigenspace at each edge.
			
		Explicitly, the error correction protocol first takes measurements of all $K^z_{\mbf{q}}$ and $W_{\mbf{q}}$ operators. Following these measurements, any $K^z_{\mbf{q}}$ operator that gave a $-1$ measurement outcome is returned to its $+1$ eigenspace by the application of an $X$ operator on either of the qubits of the corresponding edge (for concreteness, we choose the upper qubit). This will of course modify the measured $W$ syndrome so that it is no longer an accurate representation of the state, though the updated $W$ syndrome can easily be calculated by the Gottesman-Knill method~\cite{gottesman1998heisenberg} without needing to remeasure the system. Within the $+1$ eigenspace of all $K^z_\mbf{q}$ operators, the $W_\mbf{q}$ operators correspond to the stabilizers of the toric code, and so the updated $W_\mbf{q}$ syndrome is passed to the PMA toric code decoder which returns an appropriate final recovery operation.  Since the details of a PMA decoder for the toric code are well studied, we will not describe this algorithm, and refer the interested reader to e.g.~Ref.~\cite{Dennis2002} and references therein. Following this procedure, we are guaranteed to return to a state in the common $+1$ eigenspace of all $K^z_\mbf{q}$ and $W_\mbf{q}$ operators. Notice that this is precisely the ground space of the effective Hamiltonian (\ref{e:pcutH}). We regard the error correction as having been successful if the combined noise and recovery operations commute with the logical operators of the code. While this is far from an optimal decoding strategy, its effectiveness is sufficient for this exploratory study, and its close relation to the analogous toric code protocol makes it suitable for direct comparison of results.
		
		One final remark is that in these simulations, we neglect the coherent sources of error studied in \Sref{s:coherent}, such as dephasing due to inexact degeneracy of the ground space. Since we have proved that these effects are exponentially suppressed in the system size, neglecting these additional errors is justified for our desired precision.

	\subsection{Numerical results}

		We conduct simulations over a broad range of temperatures, focusing on several distinct regimes. The first regime we test is the ultra-high-temperature limit, in which the rates $\gamma_F\approx\frac{1}{\beta}$ are independent of the error $F$ and so we may effectively disregard the Hamiltonian. This serves as a benchmark of our model, corresponding to the well studied case of i.i.d.~depolarizing noise on each spin.
		
		Following this benchmark, we consider three temperature regimes: low, intermediate and high. The low temperature regime for our system corresponds roughly to $\beta\gtrsim 10^5$, where we find interesting finite-size effects beginning to become significant. The intermediate temperature regime corresponds roughly to $\beta \sim 10^4$, and is chosen such that $\beta\Delta_F\sim 1$ for errors $F$ that create two vortices. In the low- and intermediate-temperature regimes, for the lattice sizes studied and over timescales comparable to the memory lifetime of the system, the probability of encountering a broken dimer is negligible. The high temperature regime has $\beta\sim 10^1$, which is chosen such that $\beta\Delta_F\sim 1$ for errors $F$ that create a broken dimer. At these temperatures, the density of $-1$ $K^z_\mbf{q}$ eigenvalues becomes measurable during our simulations, and so the mean-energy approximation is required.
		
		All numerical results presented in this work have parameters $J_z=\frac{1}{2}$ and $J_x=J_y=0.1$, and energies are taken by the effective Hamiltonian (\ref{e:pcutH}) calculated to $6\th$ order. For convenience, we have also taken different boundary conditions to those used in the previous sections. Instead of the lattice of \Fref{f:honeycomb}, we will now use the rotated boundary conditions shown in \Fref{f:honeycombrotate}, and denote the corresponding linear lattice size by $\tilde{N}$.
		
	\begin{figure}
		\centering
		\includegraphics{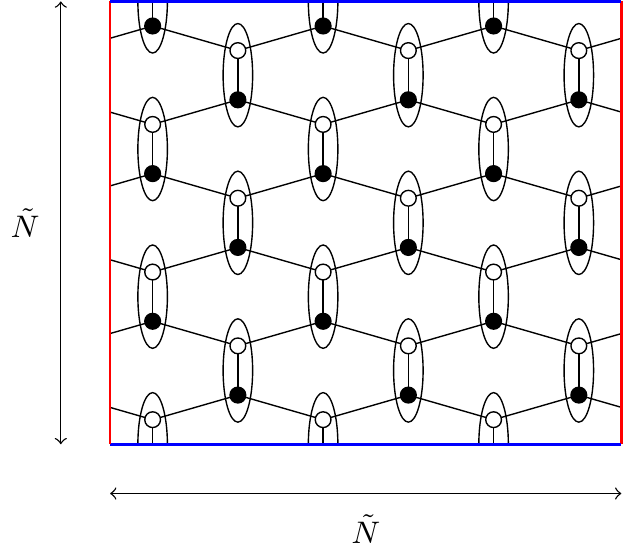}
		\caption{In our numerical simulations, we use boundary conditions as shown. These are rotated by $45^\circ$ compared to those introduced in \Fref{f:honeycomb}. We denote the linear length (measured as the number of dimers) along the new boundaries as $\tilde{N}$ to distinguish it from the $N$ used previously.}
		\label{f:honeycombrotate}
	\end{figure}

	\subsubsection{The ultra-high temperature benchmark}
		
		In the infinite temperature limit, the Hamiltonian plays no role in the thermalization dynamics, and so we can approximate these dynamics by an i.i.d.~depolarizing channel for each qubit. We can also neglect coherent evolution in this regime, since it is assumed to act on a much longer timescale than the memory lifetime, which will be extremely short. As such, this can be treated as a benchmark for the honeycomb model and readily compared to error correction of the standard toric code under i.i.d.~depolarizing noise. 
		
	\begin{figure}[H]
		\centering
			\includegraphics[width=0.5\textwidth]{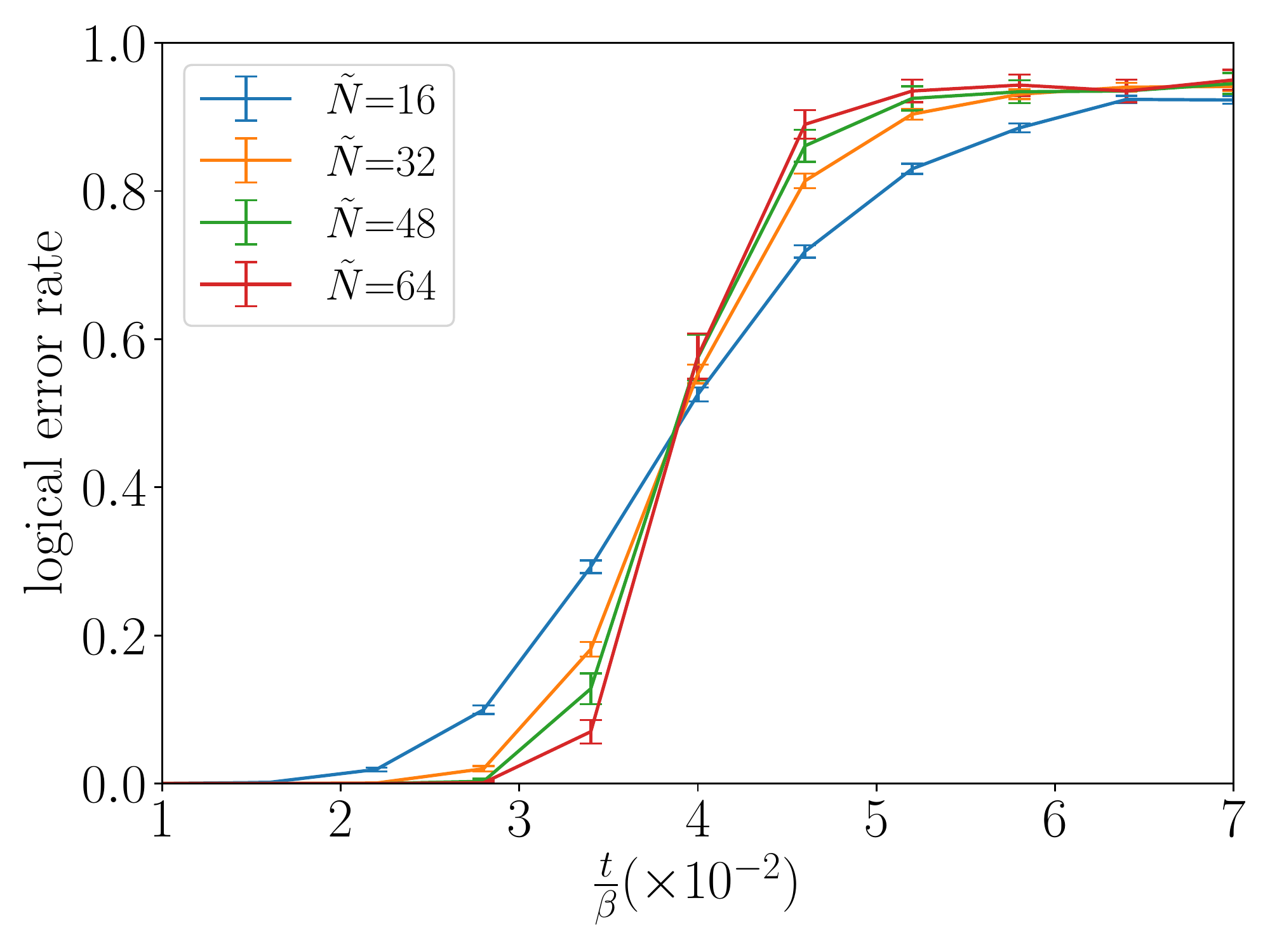}
		\caption{Logical error rate against time for various system sizes under a depolarizing channel. Error bars are 95\% confidence intervals. We find clear evidence of threshold behavior at $t_c\approx 0.04\beta$. This linear scaling of the lifetime is as expected for low $\beta$.}
		\label{f:depolresult}
	\end{figure}

		The results of this simulation are shown in \Fref{f:depolresult}, where the threshold value is given as a memory lifetime instead of the typical depolarizing channel strength, for more accurate comparison with later results. A critical memory lifetime (defined as the time below which increasing the lattice size reduces the error rate) can be found at around $\frac{t_c}{\beta}\approx 0.04$. Noise and error correction for the honeycomb model in the ultra-high temperature regime are extremely similar to the standard toric code, since the existence of the honeycomb Hamiltonian plays no role except to define the initial encoded states. The only subtlety is that the depolarizing channel applied to the honeycomb model does not correspond to the depolarizing channel applied to the toric code, as toric code qubit operators are encoded in two qubits of the honeycomb model. As such, since we sample single-site Pauli operators independently, we effectively consider a noise source for the toric code that biases towards phase-flip errors over bit-flip errors.
		
		On small timescales (equivalently small error rates), the simulation time is approximately related to the depolarizing channel strength as $p_{\mathrm{dep}}\approx \frac{3}{\beta}t$ (since there are three possible Pauli error events per qubit, each with rate $\frac{1}{\beta}$). This allows us to compare our memory lifetime to the standard toric code under depolarizing noise using a PMA decoder, which has a threshold of around $p_{\mathrm{dep}}^*\approx 0.158$~\cite{Dennis2002}, corresponding to a memory time of $\frac{t_c}{\beta}\approx 0.05$. Note also in this comparison that since each of the physical qubits of the toric code corresponds to two physical qubits in the honeycomb model, the effective error rate per encoded toric code qubit is approximately double that of the honeycomb qubits.

	\subsubsection{Low temperatures}\label{s:lowtempresult}

		For low temperatures, $\beta\gtrsim 10^5$, we find that the appearance of broken dimers over the memory lifetime of the system (for the lattice sizes we consider) is so infrequent that they were never observed. The error rate as a function of time is shown for various lattice sizes in \Fref{f:lowresult}.
	
	\begin{figure}
		\centering
		\subfloat[$\beta=60000$]{\includegraphics[width=0.4\textwidth]{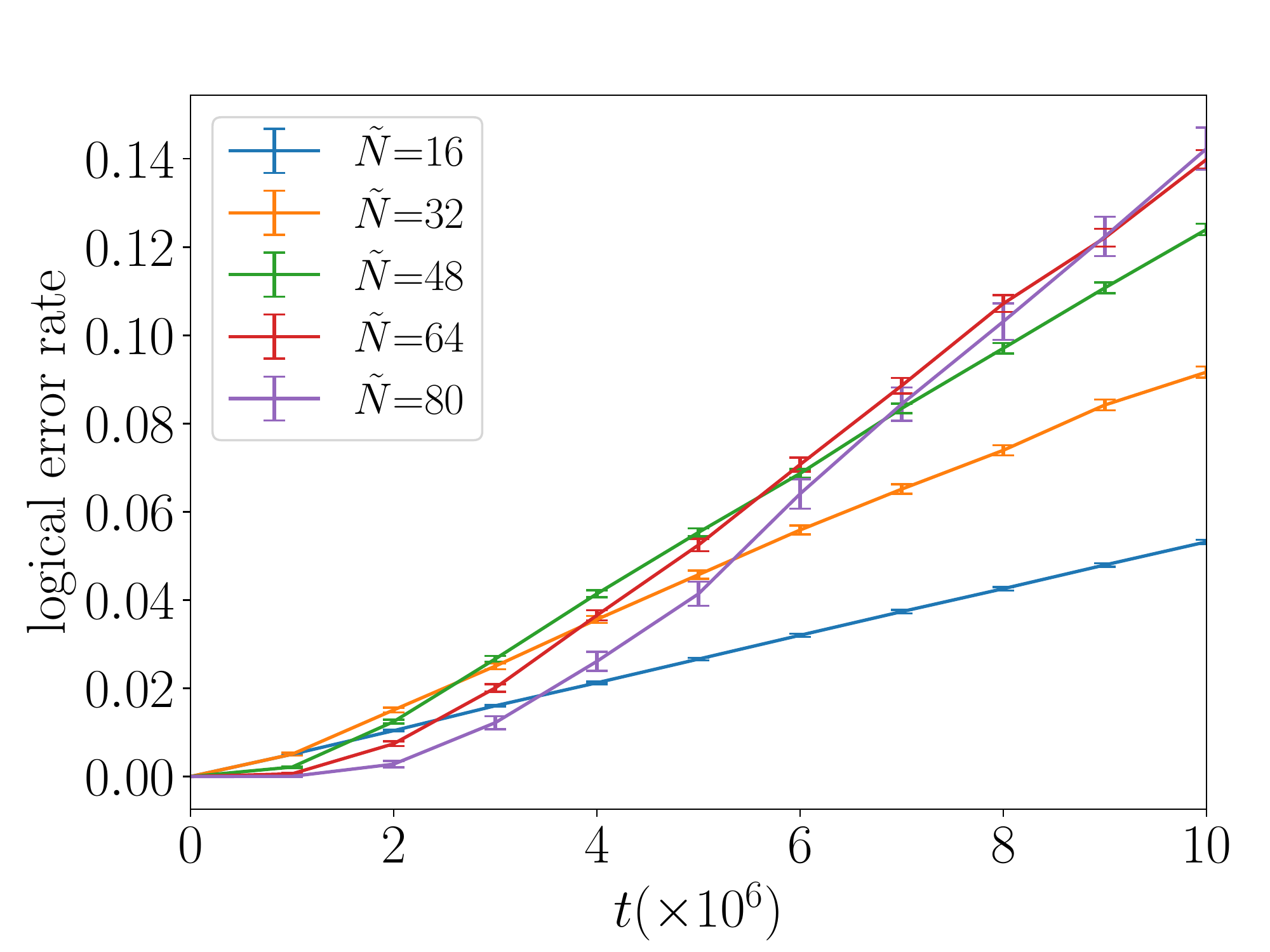}}
		\hspace{0.25cm}	
		\subfloat[Pseudo-threshold for $\beta=60000$]{\includegraphics[width=0.4\textwidth]{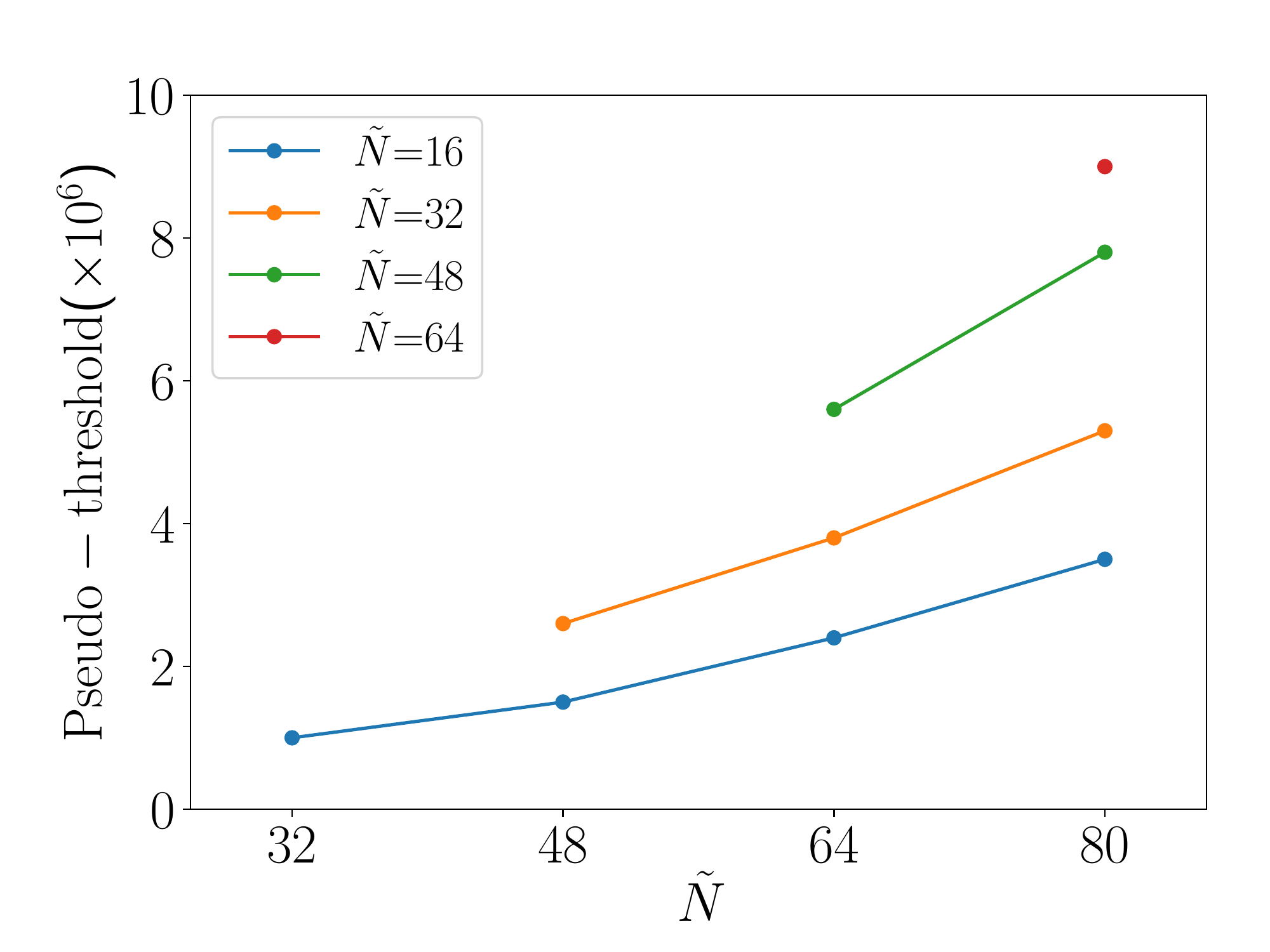}}
		\caption{(a) Logical error rate against time and (b) pseudo-threshold for various system sizes in low temperature region. The pseudo-thresholds are given by the crossing points of the two corresponding curves on the logical error rate plot. Each point belongs to a series that determines one lattice size, and the value on the $x$-axis specifies a second lattice size. The value of the pseudo-threshold for this point is then the time at the intersection of the logical error rate curves for these two lattice sizes. These data do not display threshold behavior, and in fact show a clear trend of pseudo-threshold increasing with system size. This is a finite size effect, where the diffusion of a single pair of vortices is the dominant logical failure mode.}
		\label{f:lowresult}
	\end{figure}

		We do not find evidence of threshold behavior in this regime. In fact, it appears that the pseudo-threshold, representing the time below which increasing the size of the system reduces the logical error rate, is itself increasing with system size. This can be understood as a finite-size effect in this low temperature regime. It is known that at low temperatures and on finite size lattices, there is a regime where the random walks of a single pair of toric code excitations (vortices in the honeycomb model) are the dominant source of error~\cite{Freeman2014, Brown2016}. This effect leads to increasing suppression of the logical error rate as the lattice size decreases, as seen in \Fref{f:lowresult}.
		
		At such low temperatures, since it is unlikely that any error would lead to a broken dimer, and since we assume a local coupling of each spin to the environment, this effectively restricts the errors considered in the honeycomb model to Pauli $Z$ errors (all others occur with vanishing probability). It is easy to see that these errors can only move vortices horizontally along 1-dimensional rows, as opposed to the typical situation for the toric code where the excitations may be moved in any direction. As such, we can intuit the dynamics of excitations by considering them to be undergoing a 1-dimensional random walk, instead of the 2-dimensional walks seen in Refs.~\cite{Freeman2014, Brown2016}.
		
		As in the conventional toric code, this low temperature regime is expected to persist while the number of anyons is low, i.e.~$N^2\e^{-\beta\delta E}\lesssim 1$ for $\delta E$ the gap, giving a critical $\beta^*\sim \frac{\ln N}{\delta E}$. Within this regime, the memory lifetime has been argued phenomenologically to be inversely proportional to the probability of a topologically non-trivial walk, $\Pi_{\mathrm{nt}}$~\cite{Freeman2014}. In the 2-dimensional case, numerical evidence~\cite{Brown2016} supports this scaling with $\Pi_{\mathrm{nt}}^{2D}\sim \frac{1}{\ln N}$, while for our 1-dimensional walk behavior, we expect $\Pi_{\mathrm{nt}}^{1D}\sim \frac{1}{N}$~\cite{Freeman2014}. This would increase the memory lifetime by a factor of $\frac{N}{\ln N}$ in this regime compared to the behavior of the standard toric code, strengthening the effect of this error-suppression mechanism on larger lattices. This interesting finite-size effect in the perturbative regime of the honeycomb model may warrant further study in order to determine under what circumstances it could be exploited to produce improved memory lifetimes or logical error rates. If this were desirable, the decoding algorithm could presumably also be optimized to take this extra structure into account.

	This effective 1-dimensional walk behavior is an artefact of our having restricted to independent errors on each qubit. Typically this is justified by the expectation that physical errors will act quasi-locally, and that there will be little qualitative distinction between strictly local and quasi-local error models. However, it is clear that in the present case, if we were to allow suitable 2-local errors, we could engineer behavior equivalent to the standard toric code under depolarizing noise. Thus a more careful consideration of the physically relevant noise model for a given implementation must be made when considering error correction in the low-temperature regime of the honeycomb model.
	
	\subsubsection{Intermediate temperatures}\label{s:intermediatetempresult}

	\begin{figure}
		\centering
		\subfloat[$\beta=15000$]{\includegraphics[width=0.4\textwidth]{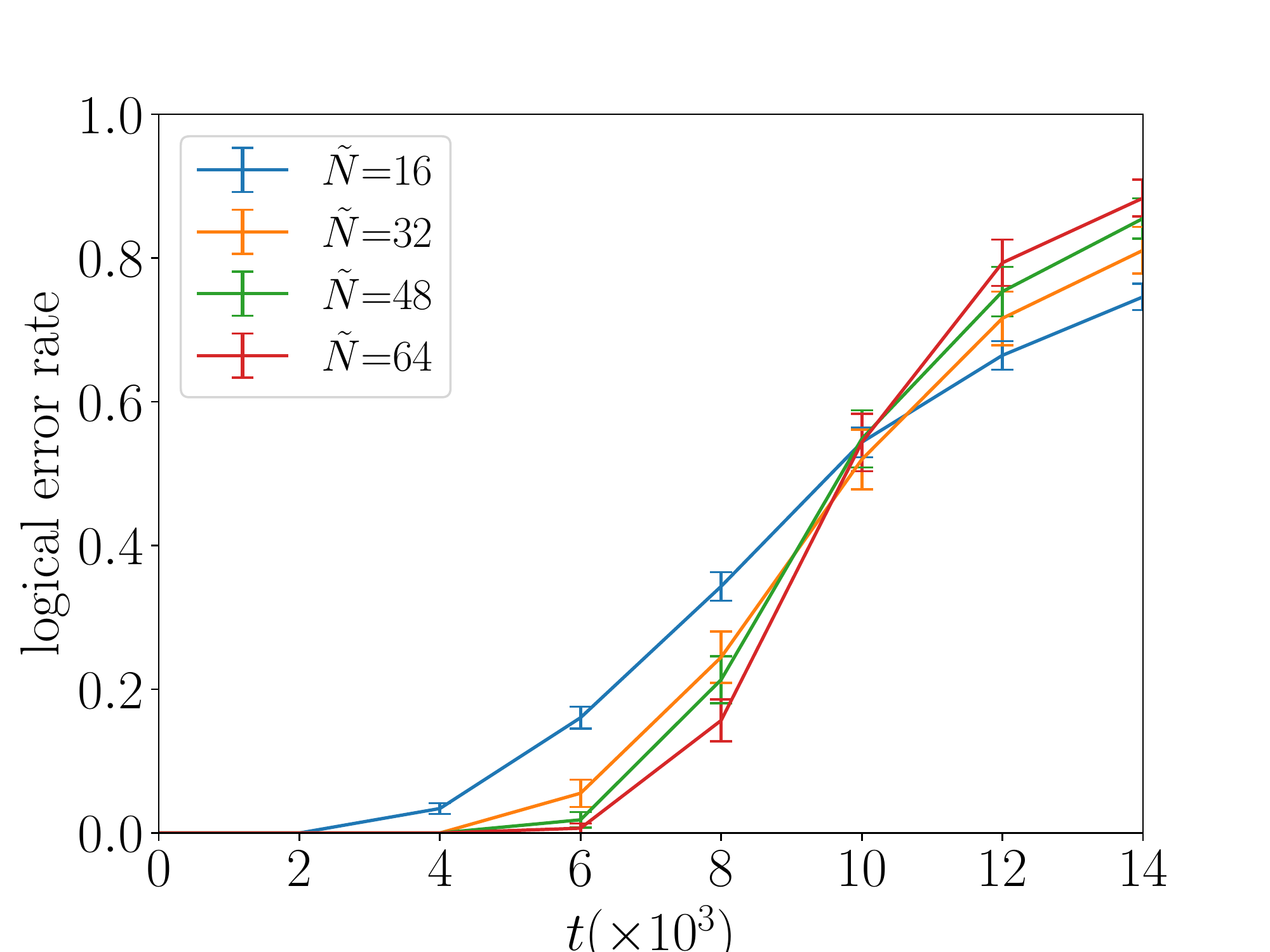}}\hspace{0.25cm}	
		\subfloat[$\beta=25000$]{\includegraphics[width=0.4\textwidth]{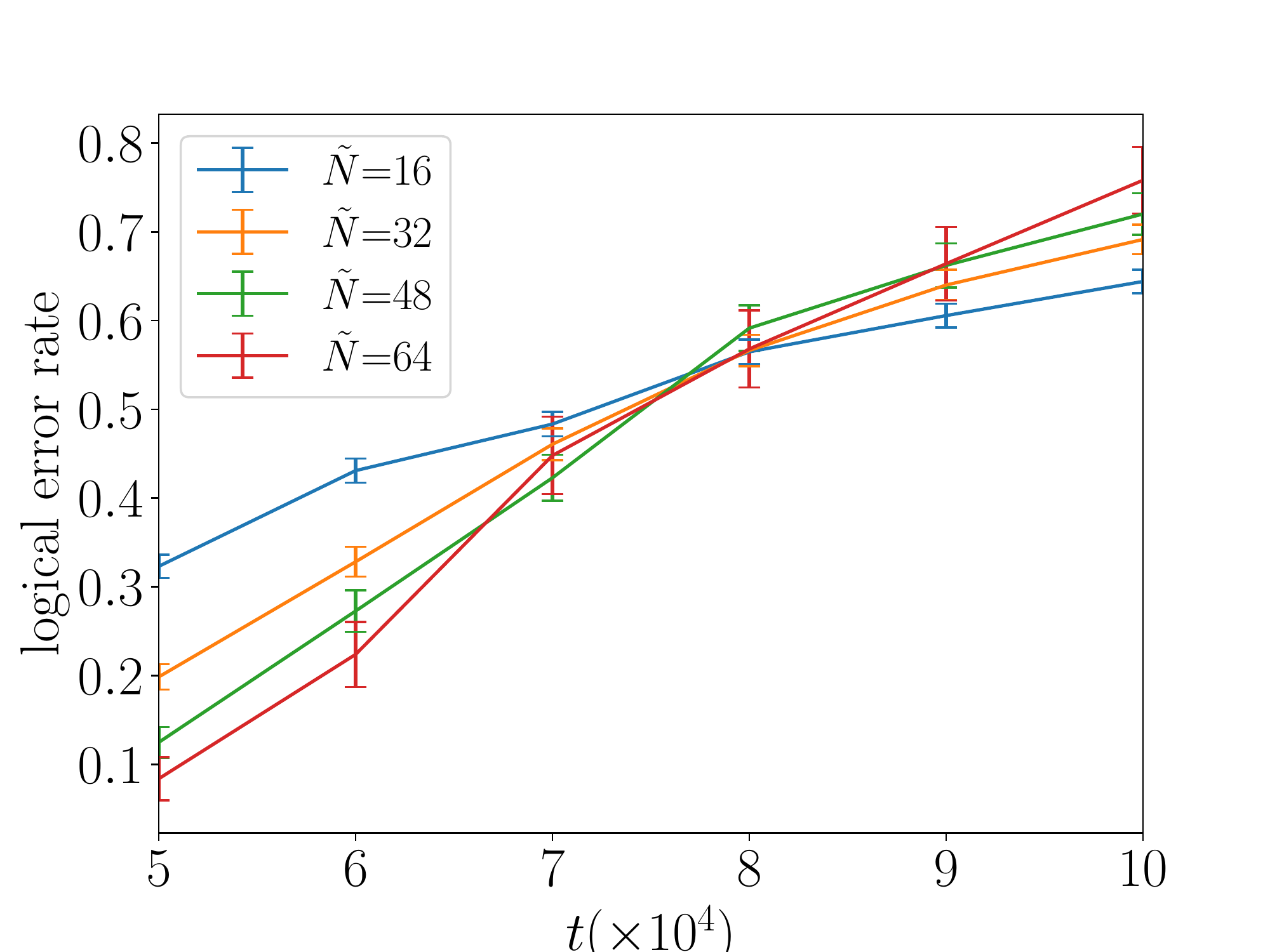}}\\
		\subfloat[$\beta=35000$]{\includegraphics[width=0.4\textwidth]{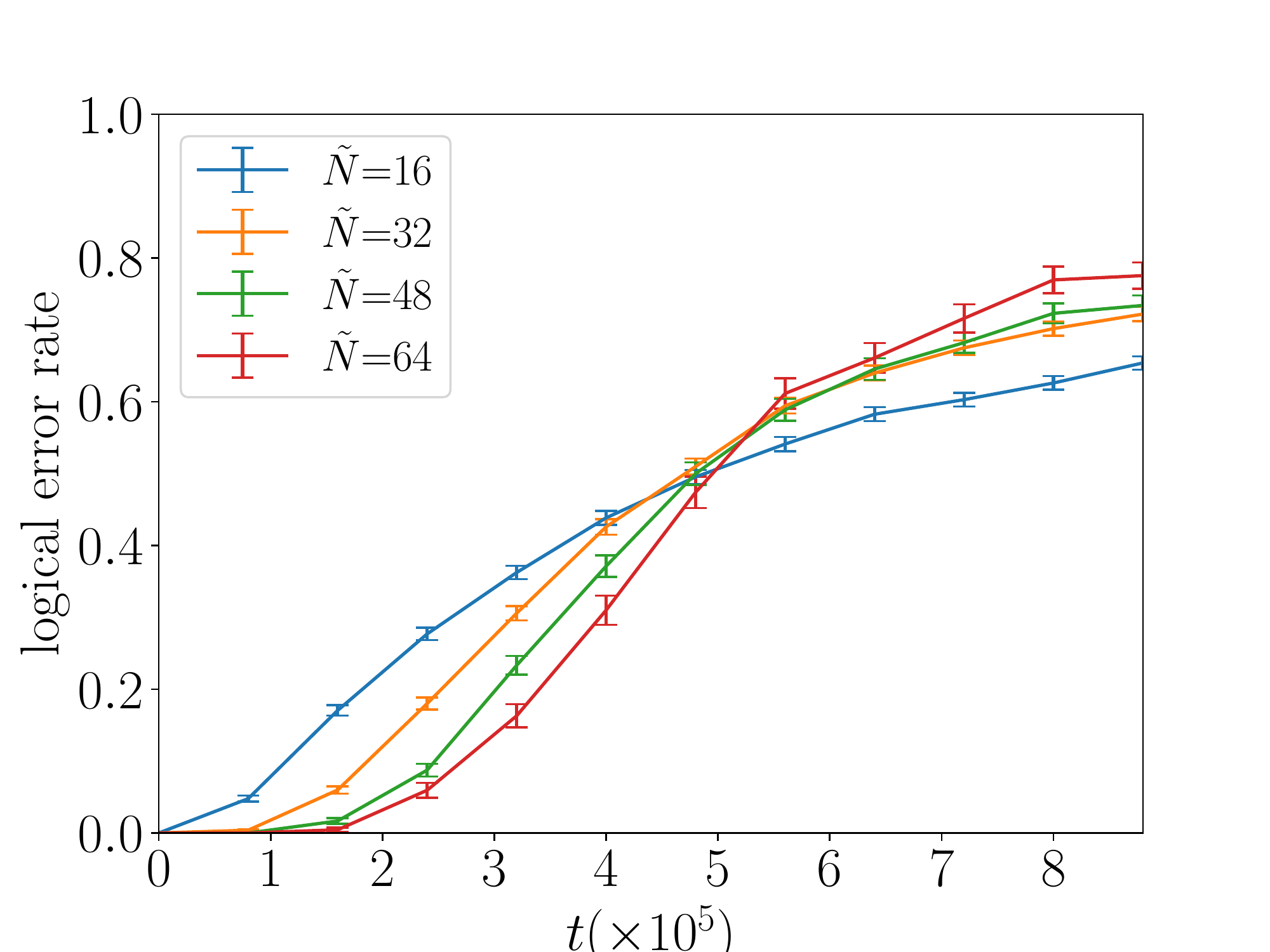}}\hspace{0.25cm}	
		\subfloat[$\beta=40000$]{\includegraphics[width=0.4\textwidth]{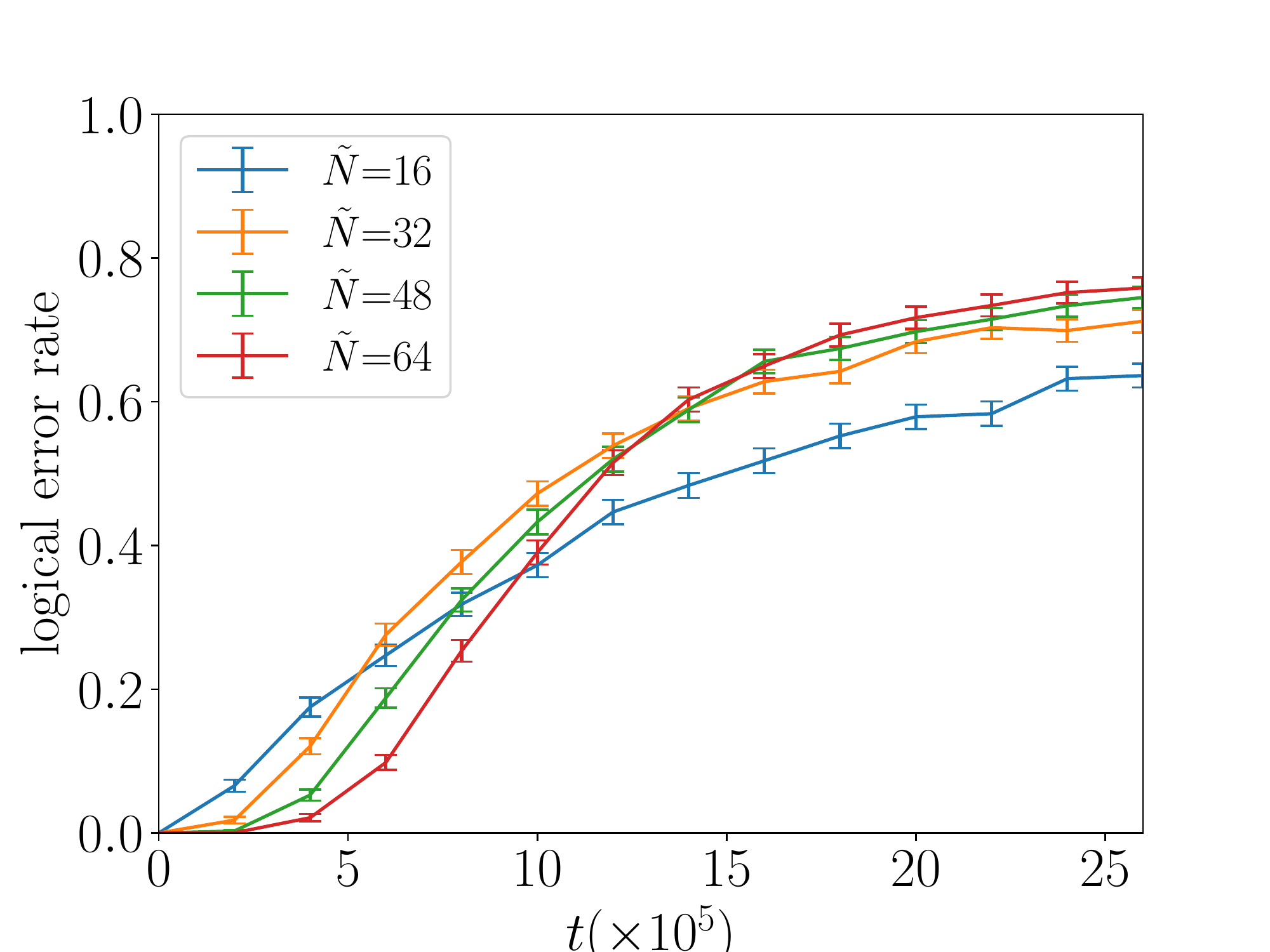}\label{f:intermediateresult_6thorder_d}}
		\caption{Logical error rate against memory lifetime for various system sizes and intermediate temperatures. We see clear evidence of threshold behavior in each plot, though at $\beta=40000$, the onset of the low-temperature finite-size regime can be seen for the smallest lattice size, $N=16$.}
		\label{f:intermediateresult_6thorder}
	\end{figure}

	At intermediate temperatures ($\beta\sim 10^4$), we are still in a regime where broken dimers are sufficiently unlikely that they are unobservable for the system sizes and timescales we simulate. However, it differs from the low-temperature regime in that the failure modes are no longer dominated by randomly walking anyons. Rather, they are dominated by the more typical situation where clusters of errors percolate beyond the point at which they can be reliably distinguished.

	The results of our simulations are shown in \Fref{f:intermediateresult_6thorder}. The memory lifetimes (or equivalently error thresholds) are estimated as the common crossing point of the curves for different lattice sizes, and plotted in \Fref{f:intermediateresult_6thorder_2}. At lower temperature and small lattice sizes (most notably \Fref{f:intermediateresult_6thorder_d} at $N=16$), we see deviation from this common crossing point, as the finite-size effects seen in the low-temperature regime (\Fref{f:lowresult}) begin to become significant. Nonetheless, at larger lattice sizes the expected threshold behavior is observed. The memory lifetimes are found to be exponential in inverse temperature, as is the case for the standard toric code~\cite{Brown2016}.
	
	\begin{figure}
		\centering
			\includegraphics[width=0.55\textwidth]{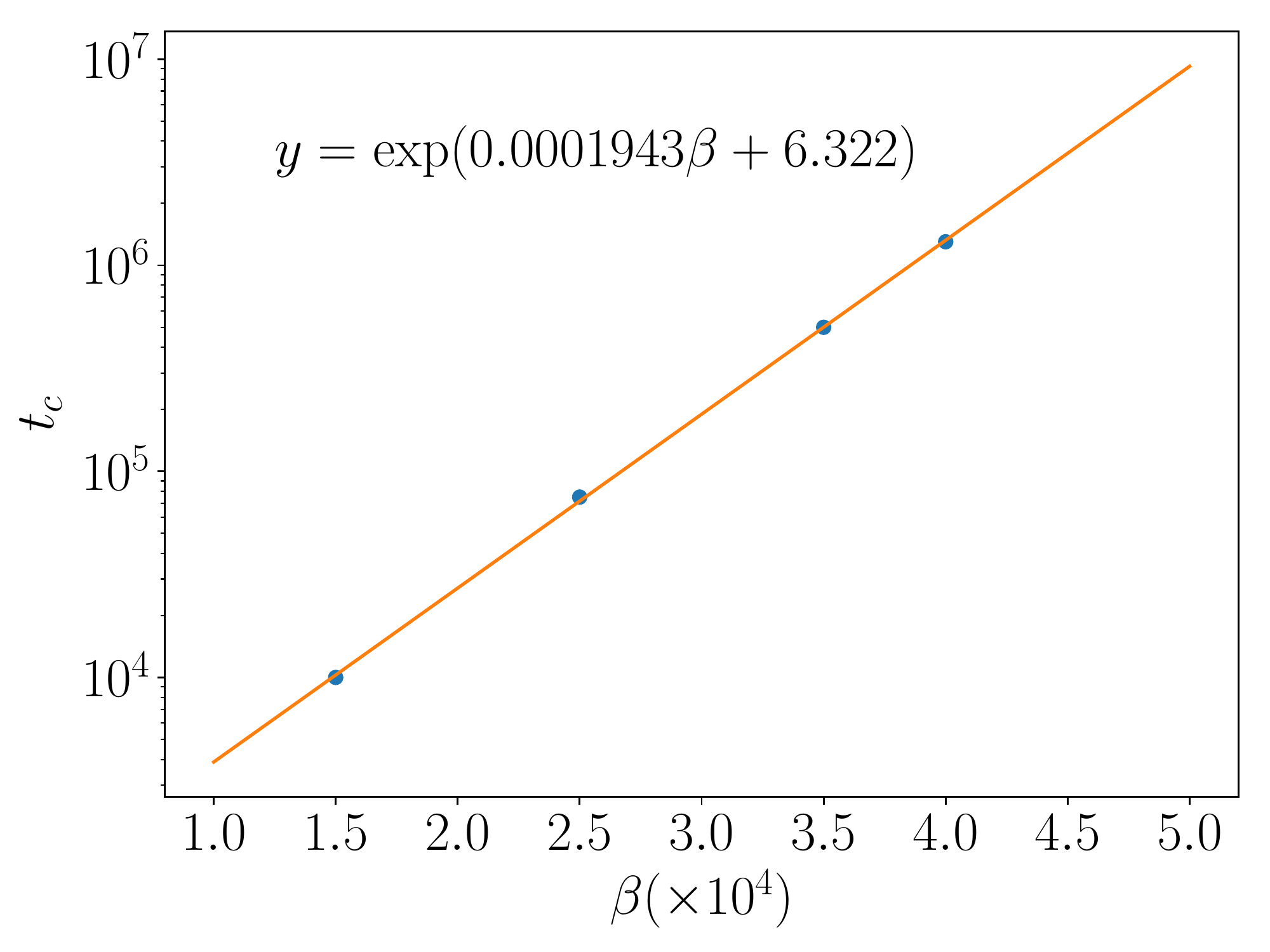}
		\caption{Memory lifetime against temperature in the intermediate temperature region. The lifetimes are observed to increase exponentially in inverse temperature, as is expected for large $\beta$. This is consistent with the analogous behavior in the toric code.}
		\label{f:intermediateresult_6thorder_2}
	\end{figure}

	\subsubsection{High temperatures}

	\begin{figure}
		\centering
			\subfloat[]{\raisebox{1.1ex}{\includegraphics[width=0.33\textwidth]{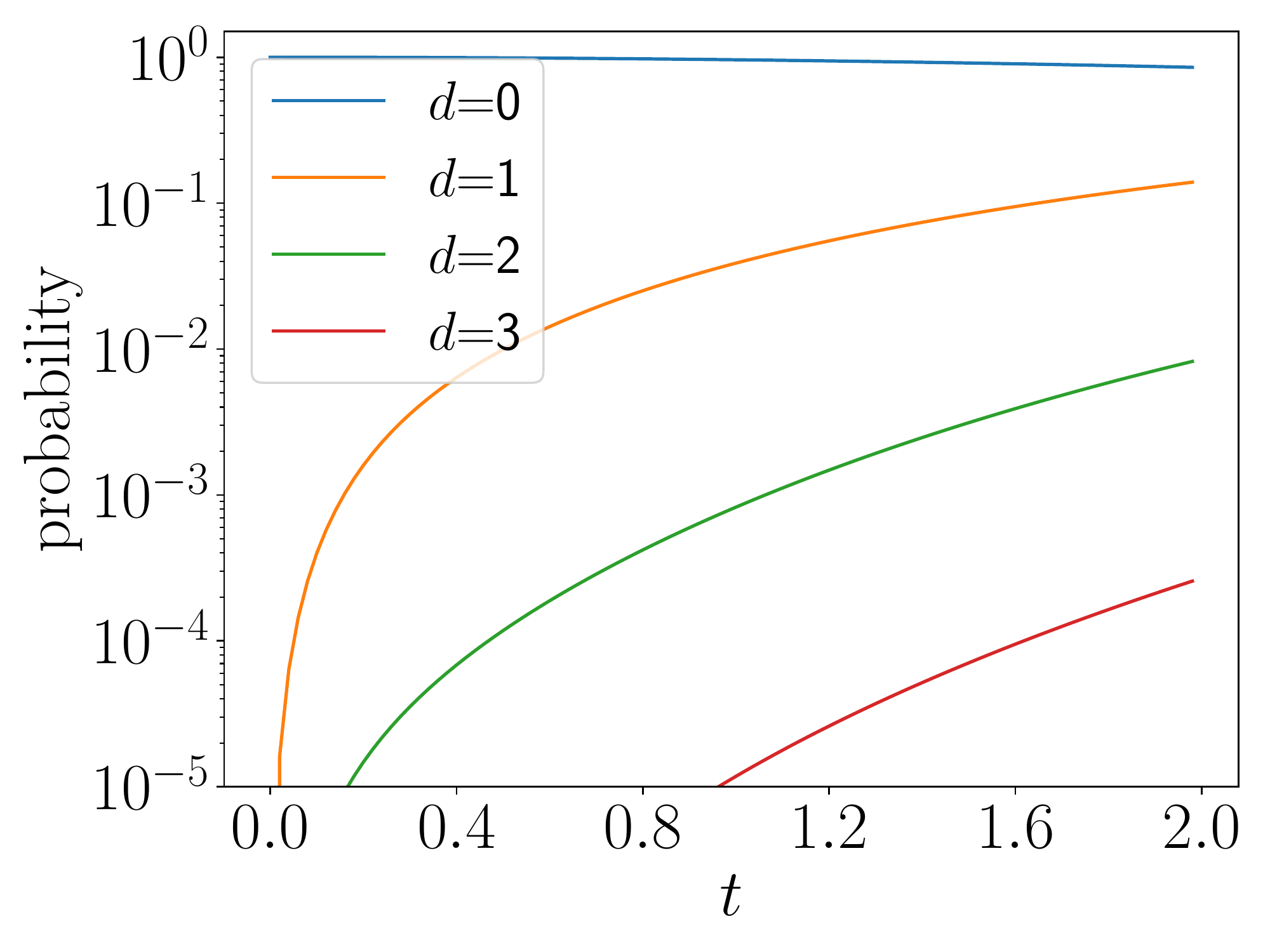}}}
			\subfloat[]{\raisebox{1.1ex}{\includegraphics[width=0.33\textwidth]{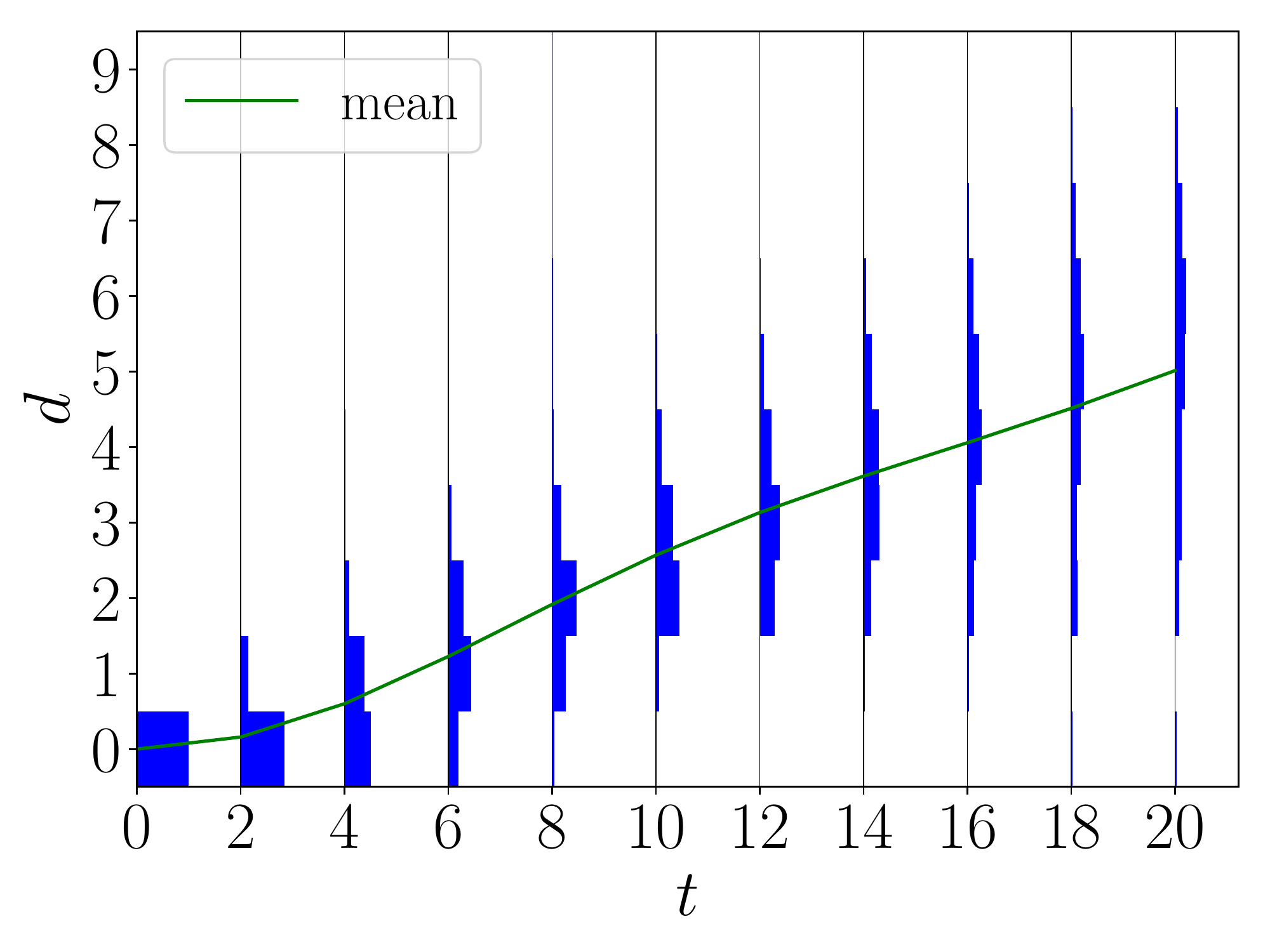}}}
			\subfloat[]{\includegraphics[width=0.32\textwidth]{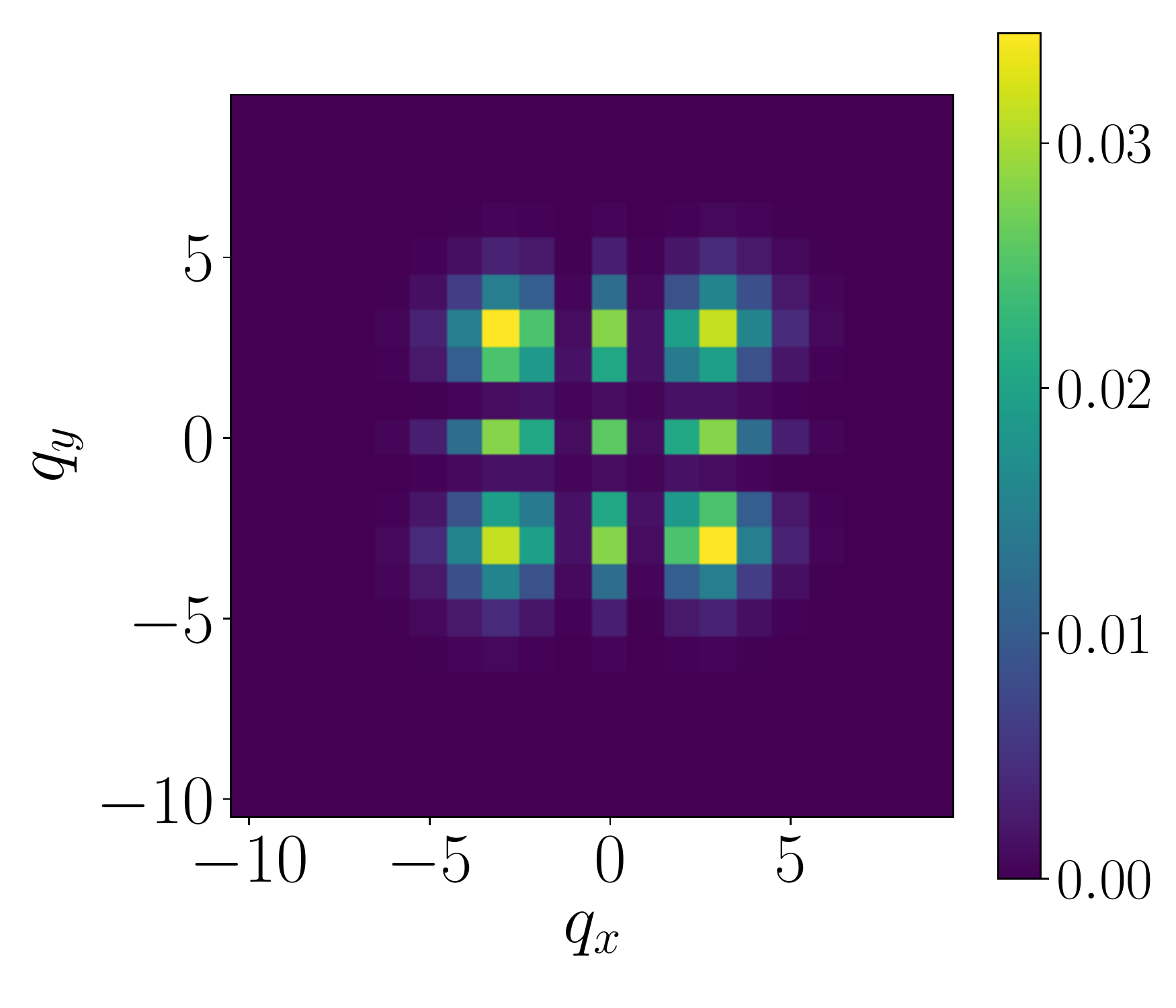}}
		\caption{Diffusion of a single broken dimer under Hamiltonian evolution. Fig.~(a) shows the probabilities of the dimer having moved different distances $d$ (in Manhattan distance) over short timescales. We see that on these timescales, the likelihood of travelling further than one or two sites is negligible. Fig.~(b) considers longer timescales, and plots a histogram of the dimer positions at each timestep, as well as the mean distance travelled. Fig.~c) shows the probability distribution of dimer positions after time $t=20$, corresponding to the final timestep of (b).}
		\label{f:broken_dimer_diffusion}
		\end{figure}
		
	\begin{figure}
		\centering
		\subfloat[$\beta=1$]{\includegraphics[width=0.4\textwidth]{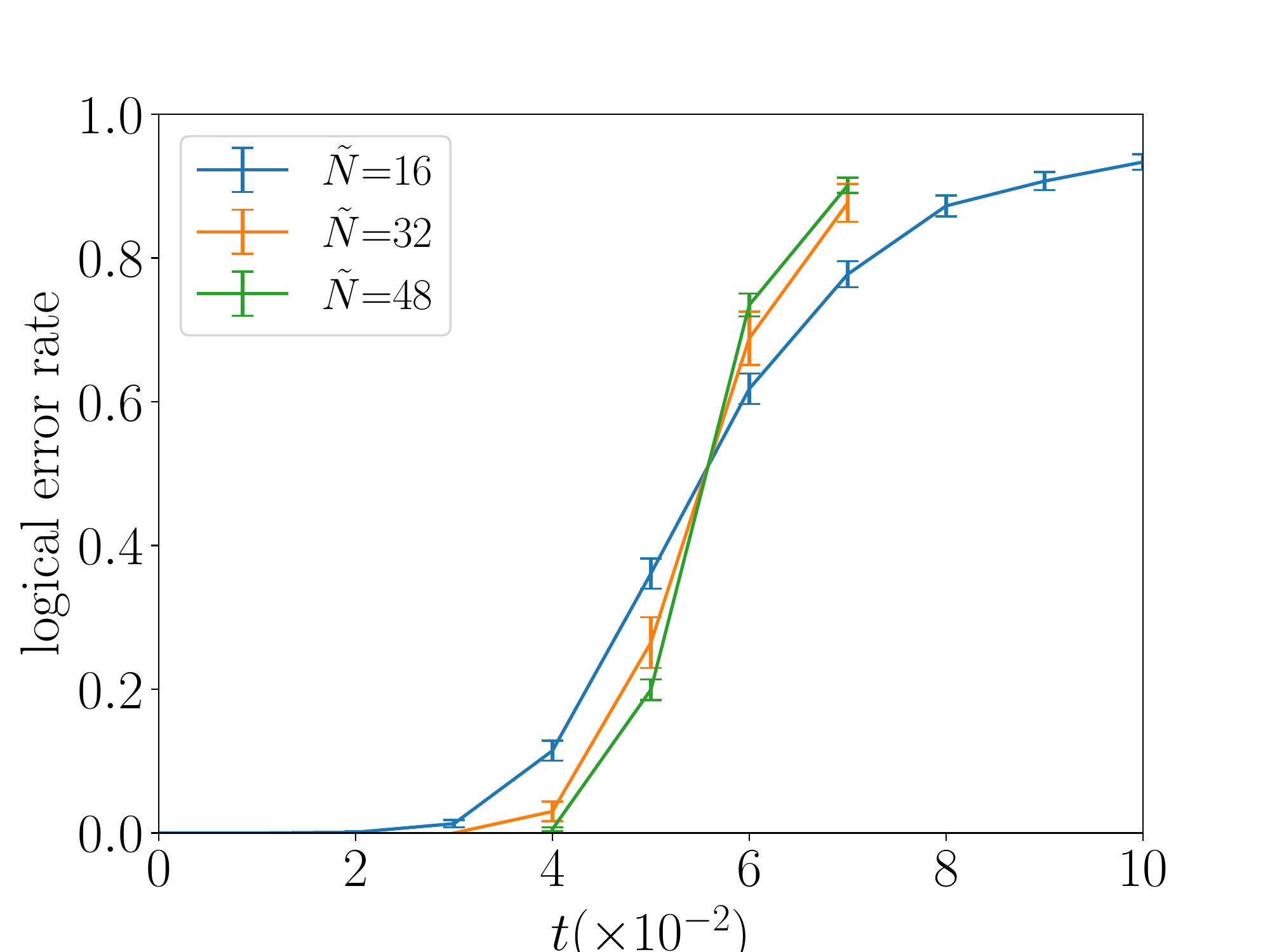}}\hspace{0.25cm}	
		\subfloat[$\beta=7$]{\includegraphics[width=0.4\textwidth]{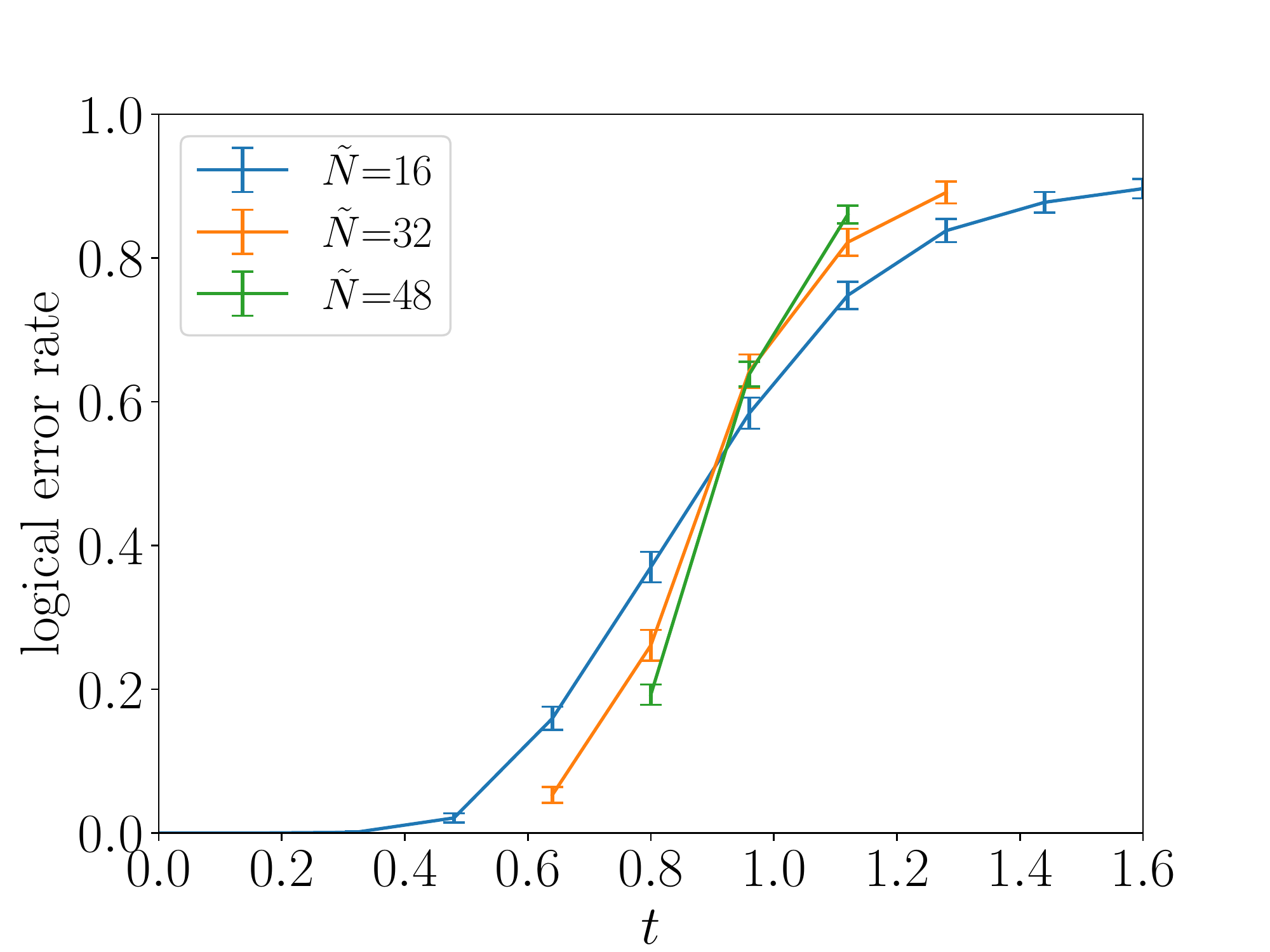}}\\
		\subfloat[$\beta=20$]{\includegraphics[width=0.4\textwidth]{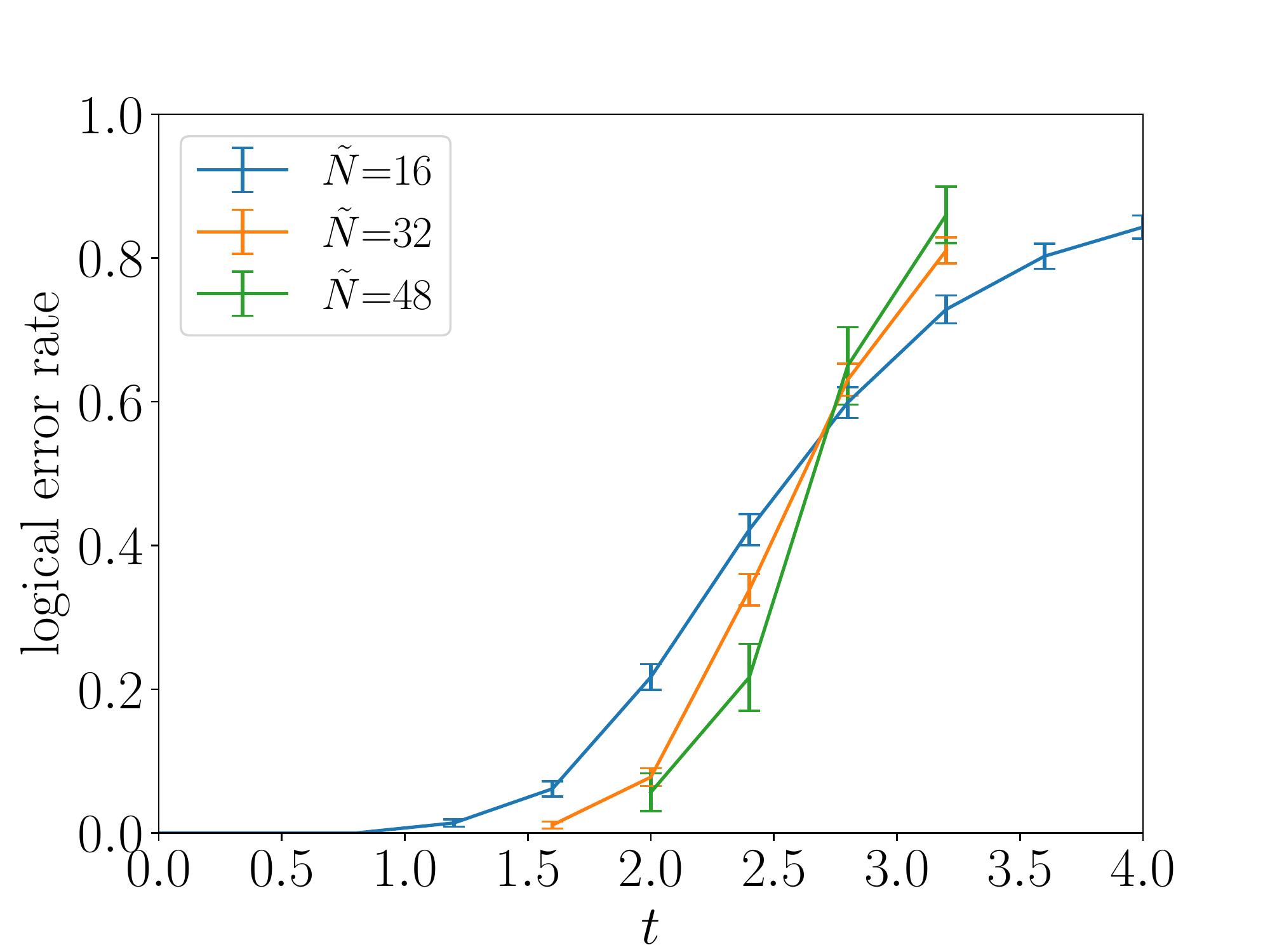}}\hspace{0.25cm}	
		\subfloat[$\beta=40$]{\includegraphics[width=0.4\textwidth]{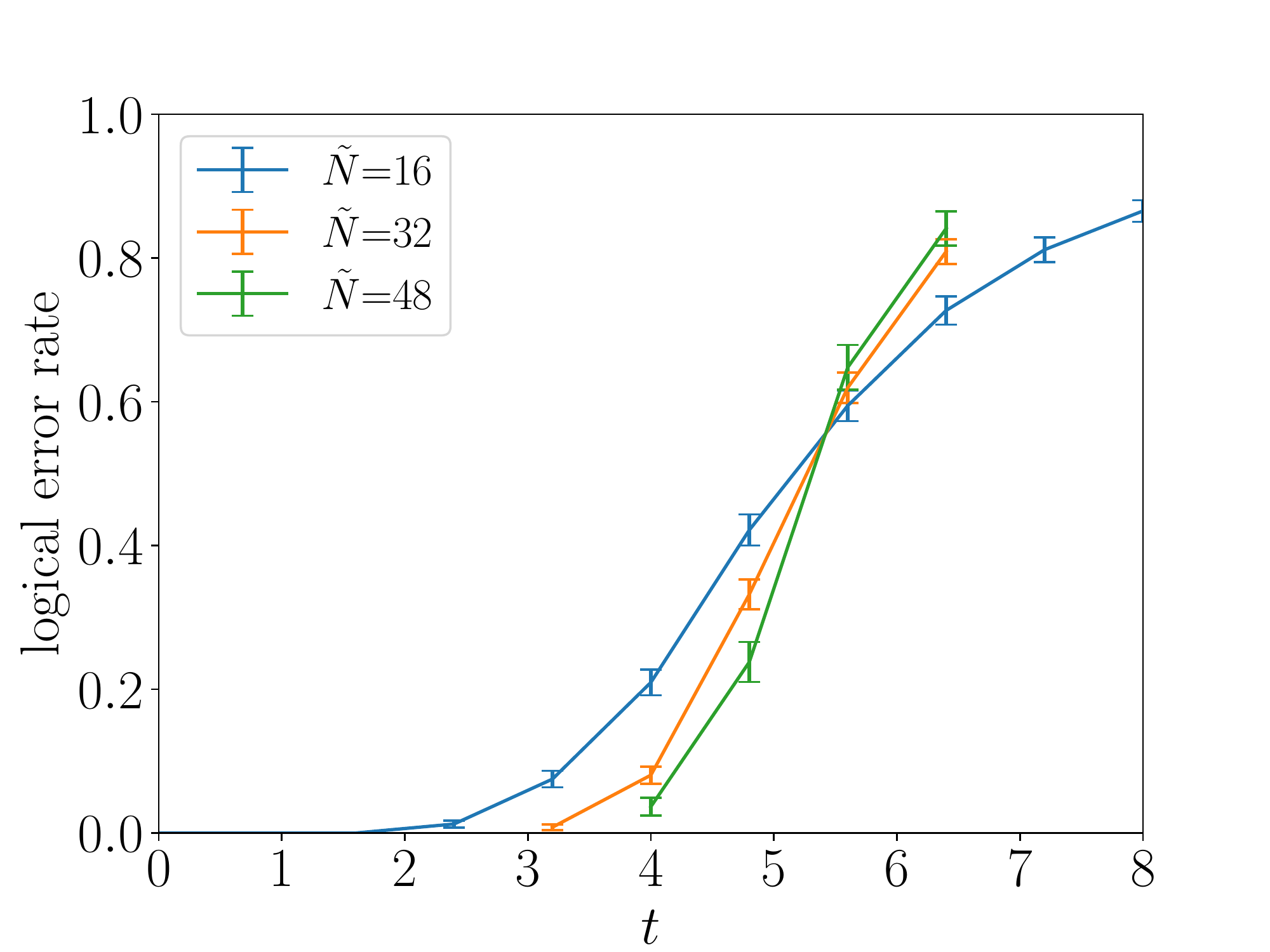}}
		\caption{Logical error rate against memory lifetime for various system sizes and high temperatures. Again, threshold behavior is observed at each of these temperatures, though the computational expense of these simulations precluded the larger lattice sizes considered at lower temperature.}
		\label{f:highresult}
	\end{figure}	

	The low- and intermediate-temperature regimes considered in the previous sections were characterized by the fact that broken dimers were unlikely to appear on timescales shorter than the memory lifetime. This is clearly unrealistic as we approach the thermodynamic limit, so in order to characterize the impact of leaving this regime, we consider two sets of effects---coherent diffusion of broken dimers, and a heuristic simulation of thermalization processes.

		\paragraph{Dimer diffusion}
	We consider a state with a single broken dimer ($-1$ eigenstate of $K^z_{\mbf{q}}$) and no vortices. We then simulate the time-evolution of this state under the effective Hamiltonian on dimer degrees of freedom (taken to $2\nd$ order), before calculating the distribution of broken dimer positions, and hence the average distance travelled. This will allow us to determine whether the diffusion of broken dimers is significant on timescales relevant for error-correction. The results are shown in \Fref{f:broken_dimer_diffusion}.
		
		\paragraph{Mean-energy analysis}
			Finally, we perform a heuristic simulation of thermalizing dynamics at high temperatures, in which we approximate the quantum jumps of the monte-carlo method described in \Sref{s:incoherentmodel} by single-qubit Pauli errors with transition probabilities determined by mean energy differences in the effective Hamiltonian. This approach is expected to be accurate in the perturbative regime for small lattices at short timescales. The results are shown in \Fref{f:highresult}.
	
		As in the intermediate temperature regime, we see clear evidence of threshold behavior in the memory lifetime, though the computational cost of these simulations is higher and so we are restricted to probing slightly smaller lattices. The memory lifetimes are plotted in \Fref{f:highresult_2}, where we see behavior consistent with linear lifetime in inverse temperature, in contrast to the intermediate temperature regime. This is a result of the small $\beta$ limit of \Eref{E:gammarate} giving the error rate for vortex-creation errors as $\gamma\sim\frac{1}{\beta}$. Note however that the dimer-breaking error rates are not yet in this high-temperature linear-scaling regime. This leads to a contrast between the lifetimes in the ultra-high temperature regime (\Fref{f:depolresult}) where all errors are in the linear-scaling regime, and $\frac{t_c}{\beta}\approx 0.04$, and the temperatures considered in \Fref{f:highresult}, where only Pauli $Z$ errors are in the linear-scaling regime, and correspondingly the lifetimes are increased to $\frac{t_c}{\beta}\approx 0.14$.
	
	\begin{figure}
		\centering
			\includegraphics[width=0.55\textwidth]{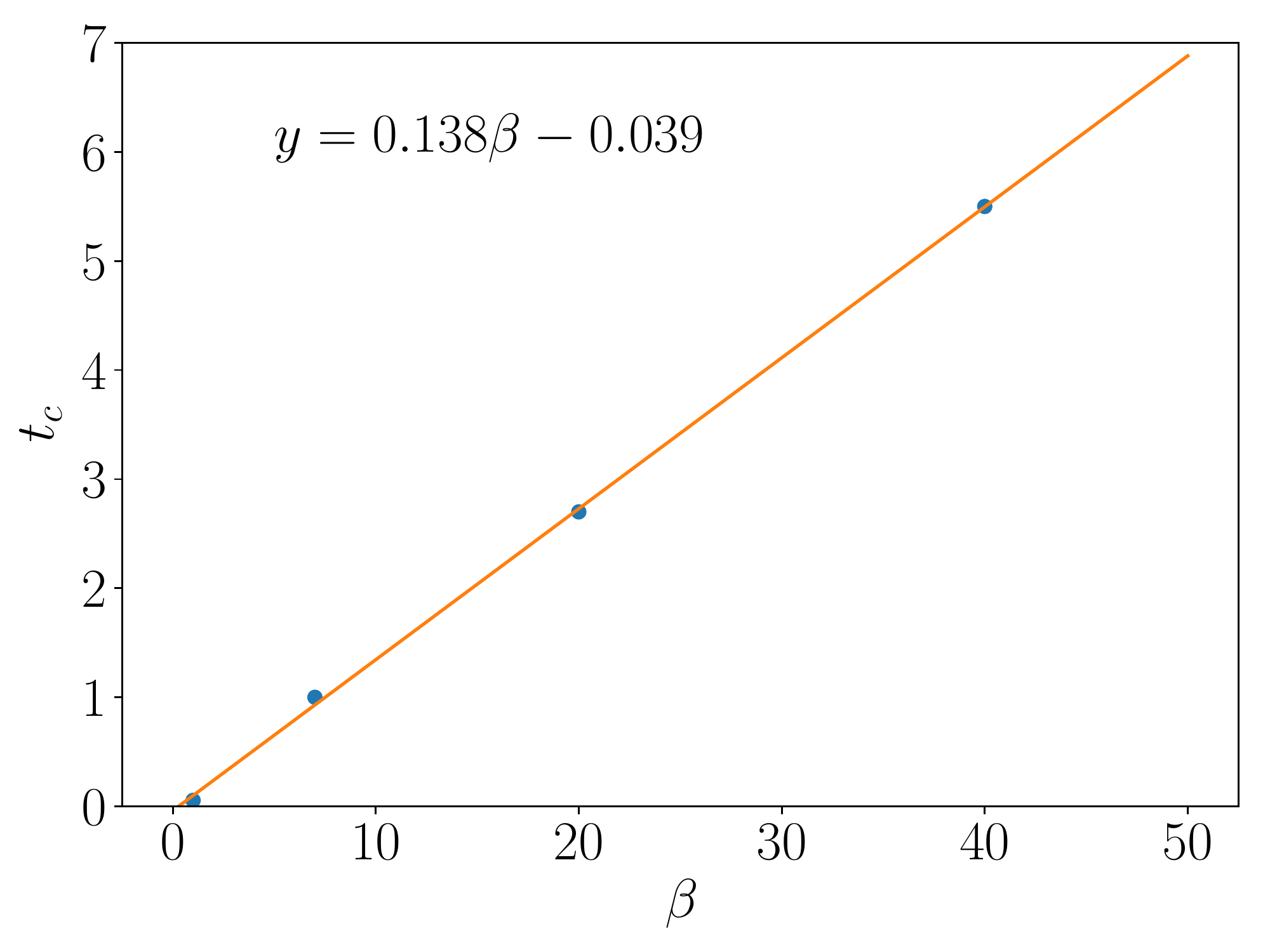}
		\caption{Memory lifetime against temperature at high temperatures. We observe a linear scaling, as expected for low $\beta$ and as found in the ultra-high temperature simulations of \Fref{f:depolresult}, but in contrast to the high $\beta$ exponential scaling of the intermediate temperature regime seen in \Fref{f:intermediateresult_6thorder}.}
		\label{f:highresult_2}
	\end{figure}
	
	We also plot the observed density of broken dimers over relevant timescales in \Fref{f:brokendimerdensity}. We find dimer density equilibrating exponentially fast at high temperatures, as expected. The density of dimers is always below $10\%$ even in our highest temperature simulations, and for $\beta\geq 20$, we observe no broken dimers for lattice sizes and timescales studied.
	
	It should also be noted that the diffusion behavior studied in \Fref{f:broken_dimer_diffusion} is unlikely to significantly affect the error correction properties of this model, since at the higher temperatures where the dimer density is non-negligible (e.g.~\Fref{f:densitybeta1}), the memory lifetime is so short that the diffusion distance is negligible, and at lower temperatures where the memory lifetime is long enough to give average diffusion distance on the order of $0.1$ lattice spacings (e.g.~\Fref{f:densitybeta2}), the density of broken dimers is so low that their diffusion is unlikely to cause them to interact or play a major role in logical failures.
	
	\begin{figure}
		\centering
		\subfloat[$\beta=1$]{\includegraphics[width=0.4\textwidth]{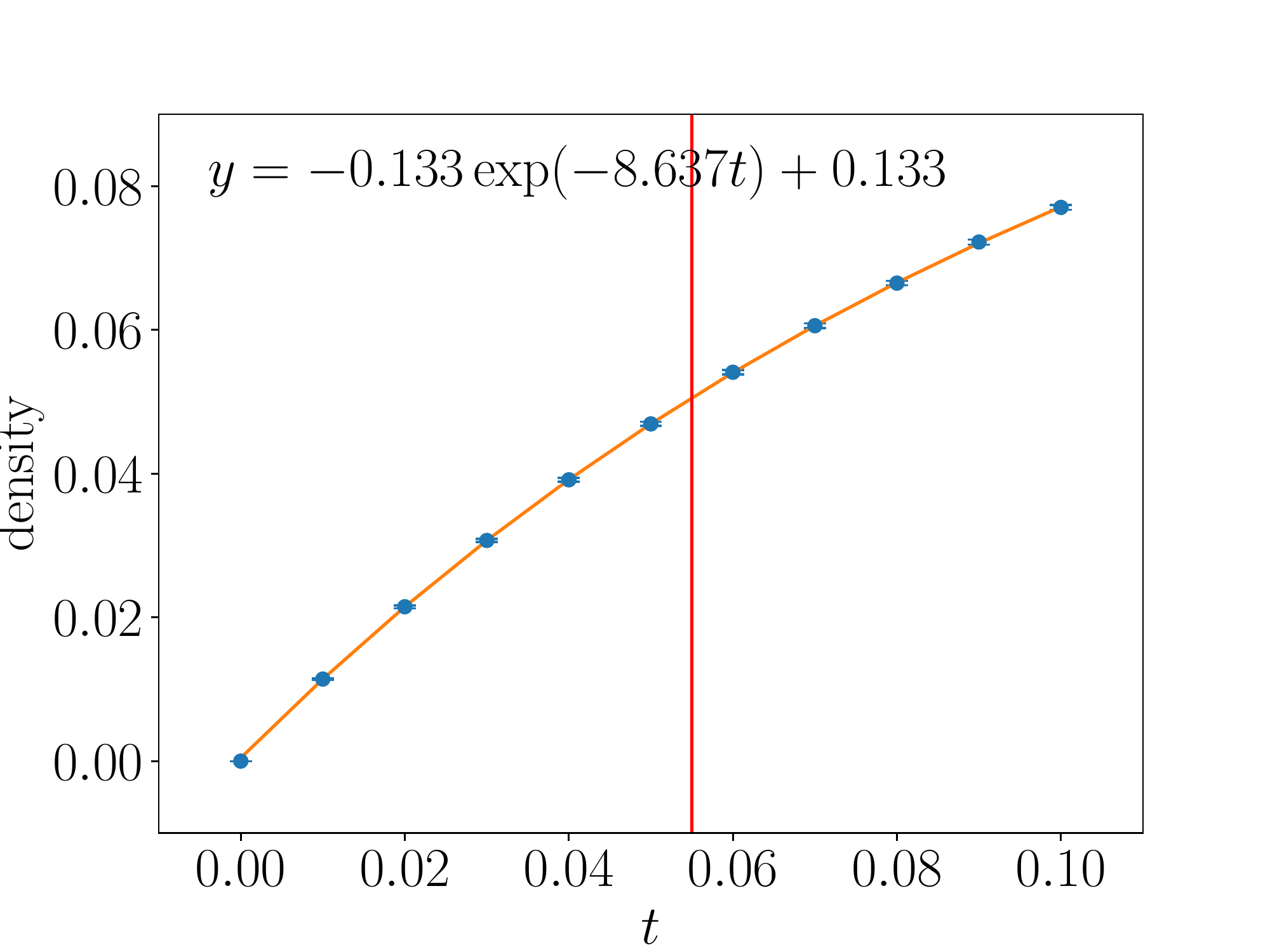}\label{f:densitybeta1}}\hspace{0.25cm}	
		\subfloat[$\beta=7$]{\includegraphics[width=0.4\textwidth]{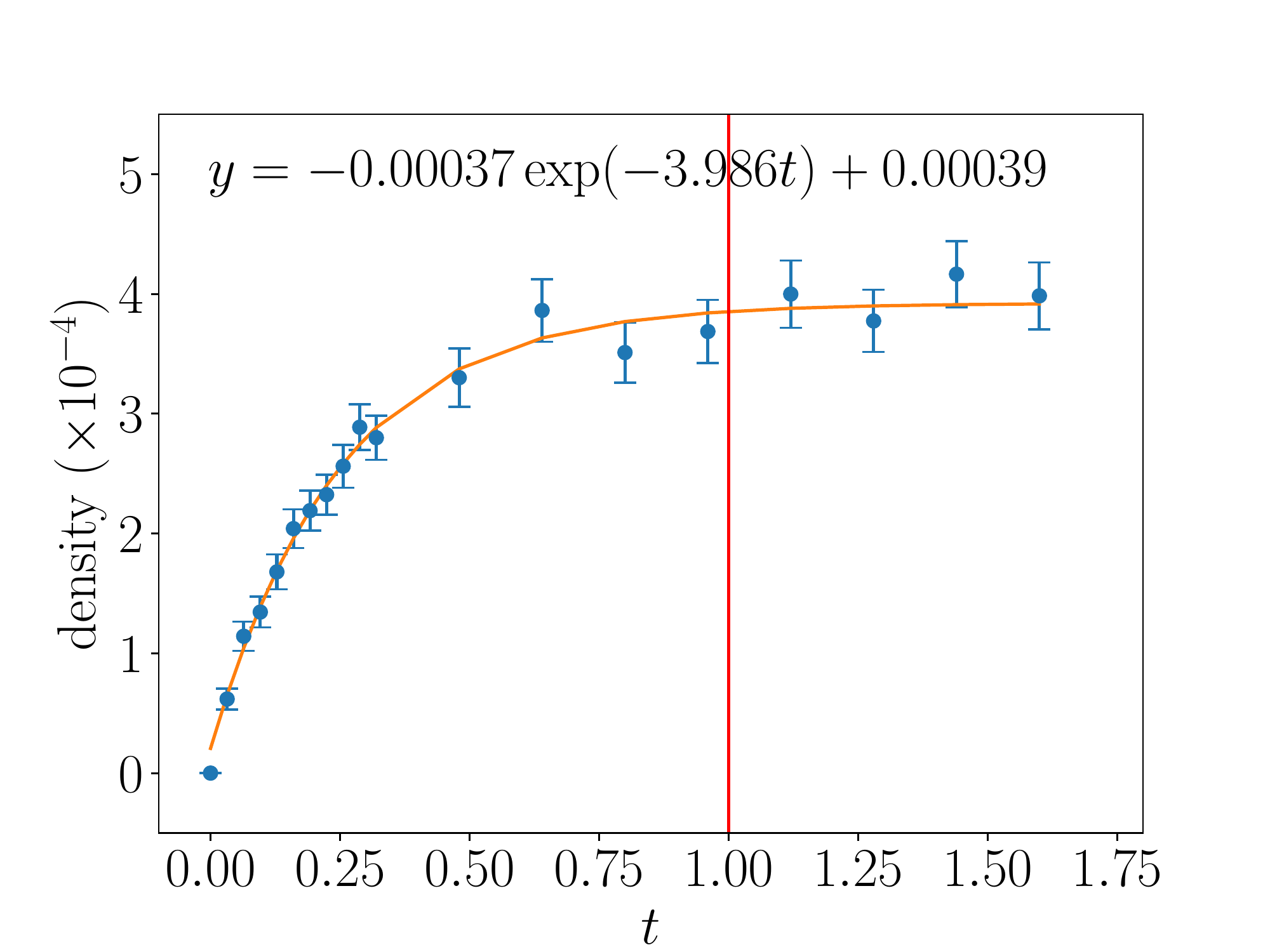}\label{f:densitybeta2}}
		\caption{Density of broken dimers over time. The memory lifetimes observed in \Fref{f:highresult} are marked in red for comparison. For $\beta\geq 20$ we observe no broken dimers in our simulations. At the higher temperatures seen in (a), the memory lifetime is so short that the dimer density does not equilibrate over the simulation time. In contrast, at the lower temperature seen in (b), the memory lifetime is long enough for the dimer density to equilibrate, but at these temperatures the equilibrium broken density is extremely low. In both cases, the density on timescales relevant for error correction is small enough that we can safely neglect the coherent diffusion considered in \Fref{f:broken_dimer_diffusion}.}
		\label{f:brokendimerdensity}
	\end{figure}
	
	To summarize, we do not find evidence that the appearance of broken dimers at high temperature has any qualitative effect on the honeycomb code on relevant timescales, and we observe behavior similar to that of the standard toric code.

\section{Discussion}\label{s:discussion}
	
Our analysis and numerical simulations highlight behaviors in the honeycomb model that differ in several ways from error correction in the standard toric code. Despite this, our results suggest that these disparities will not qualitatively affect the properties of the honeycomb model as a quantum memory.

Of course, as shown broadly in \Sref{s:asymptotics}, the key error correction properties of degeneracy and local indistinguishability hold only approximately for the honeycomb model. Analytically, we have provided an explicit exponential bound on the energy splitting of the codespace, which is important for bounding the dephasing time of the memory, as well as on its local indistinguishability and approximate correctability, which are crucial for determining the parameters of the code. By taking advantage of the special structure of the honeycomb model, these bounds are sharper that what appears to be possible using the known toolkit for general topological phases, and have the added bonus of explicit constants. 

Numerically, we have found an exponential lifetime in inverse temperature, as holds for the toric code. 
We also found evidence that the appearance of non-topological broken dimer excitations has little qualitative effect on the error correction process. 
In regimes where the memory lifetime would be long enough for dimer diffusion to be significant, the temperature is so low that dimers are exceedingly unlikely to arise. While this is quite fortuitous, it is unclear to what extent this particular phenomenon would extend out of the perturbative regime, where our simulation methods are unable to probe.
In any case, we expect that error correction in general will be possible everywhere in the gapped phase, though of course the memory lifetime is expected to decrease as the gap decreases. How to reasonably implement the detection and manipulation of anyons has been considered in various experimental contexts, most thoroughly for optical lattices~\cite{Zhang2007, Pachos2007, Jiang2008, Aguado2008creation, Dusuel2008}.

As in the standard toric code, we identified several distinct regimes of behavior, the most interesting of which is the low-temperature regime, where finite size effects dominate the error modes. 
While this kind of regime exists in the standard toric code, we argued that the structure of the honeycomb model leads to an improved lifetime scaling by a polynomial factor under local thermal noise. 
Again, while fortuitous, this effect relies on the perturbative structure of the honeycomb mode, and so it is unclear how far outside the perturbative regime it can be expected to persist. This may warrant further study to determine whether this effect can be exploited in near-term experimental implementations.

While these departures from the behavior of the toric code are important to understand when designing good error correction protocols, none of these novel behaviors are particularly shocking, nor would they seem to be particular obstacles to successful error correction in the honeycomb model.

Studies of approximate topological quantum error correcting codes and their behavior are in their infancy, as are designing tools to probe these systems, and developing protocols to ideally compensate for the deviations from exact codes. 
There are several issues that must be considered when analyzing such systems, one of which is the motivating experimental setup.
		
	In typical studies of topological error correction in idealized models, the error correction protocols are designed to return precisely to the ground space of the Hamiltonian, as we have considered above. A clear motivation for such a protocol is that the system could be initialized by cooling to a state exactly within the ground space, and the error correction procedure is fine-tuned to return to a state within this space. In a more realistic system, any imprecision in the implementation or knowledge of the Hamiltonian, or any restriction on the ability to perform non-local unitary operations perfectly will prevent a return to the ground space (i.e.~the overlap between the recovery space and the Hamiltonian ground space will generically decay exponentially quickly in the system size). One might imagine that after performing a recovery operation that returns to a state ``close'' to the ground space, the experimentalist can again cool to an exact ground space state, but for 2D topologically ordered systems it is not clear that this can be done without performing a logical error.
	
	As such, when considering topological error correction in realistic systems, there are several kinds of error correction protocols that may be of interest. The first is as described above, where the system is initialized into the ground space, and the error correction algorithm returns to this space. The second involves again initializing the memory into the ground space, whereas the error correction protocol attempts to return not to this initial space, but to another code space. A reasonable candidate for this space in some circumstances is the ground space of an RG fixed point of the same phase. The third class of error correction protocols is more suited to multi-round error correction, and does not initialize in the ground space of the Hamiltonian. Instead it initializes in the same space in which it intends to return (such as the RG fixed-point ground space).
	
	It may be of interest to consider the properties of protocols falling within these other error correction strategies, depending on the physical application in mind. Furthermore, if the intended setup has a code space other than the ground space, then it is not only of interest to understand the degeneracy splitting and local indistinguishability of the ground space, but also the dephasing rates and code properties of this other space. A fundamental question to consider is then the role of the Hamiltonian in these kinds of topological quantum information storage protocols. Although the Hamiltonian would normally provide some level of passive error suppression through the presence of a mass gap, this is no longer obviously the case if we do not always encode information in the ground space of the model. Nonetheless, this is presumably the practically relevant scenario for an experimentalist attempting to use a real topologically ordered system for information storage in the laboratory.
			
An obvious extension of this work is to consider the honeycomb model as a quantum memory in the full fault-tolerant setting, where faulty measurements are allowed. It would also be very interesting to simulate the model outside the perturbative regime and/or in the non-abelian phase of the model. Though it is not obvious how to perform such simulations efficiently, recent progress in tensor network descriptions of the honeycomb model~\cite{Schmoll2017} and tensor network simulations of noise and error correction~\cite{darmawan2016tensor} may be a fruitful avenue for such research.

We believe that many of the results in this paper may give leading clues to behavior in more general approximate topological codes. The structure of the honeycomb model has provided us a starting point to investigate more realistic quantum memories than the exact toric code, and developing methods to analyze more general systems is an important step in understanding topological quantum error correction in realistic devices.


\begin{acknowledgments}
We thank Jens Eisert, Min-Hsiu Hsieh, Tobias Osborne, David Poulin, and Ted Yoder for valuable discussions.
This research was supported in part by Perimeter Institute for Theoretical Physics. Research at Perimeter Institute is supported by the Government of Canada through the Department of Innovation, Science and Economic Development Canada and by the Province of Ontario through the Ministry of Research, Innovation and Science.
CGB was supported by the ERC grant QFTCMPS and by the cluster of excellence EXC 201 Quantum Engineering and Space-Time Research. 
STF was supported by the Australian Research Council via EQuS project number CE11001013 and by an Australian Research Council Future Fellowship FT130101744.

This is an author-created, un-copyedited version of an article published in the Journal of Statistical Mechanics: Theory and Experiment. IOP Publishing Ltd is not responsible for any errors or omissions in this version of the manuscript or any version derived from it. The Version of Record is available online at doi.org/10.1088/1742-5468/aa7ee2.
\end{acknowledgments}

\bibliography{honeycomb}

\end{document}